\newtheorem{theorem}{Theorem}
\newtheorem{lemma}[theorem]{Lemma}
\newtheorem{corollary}[theorem]{Corollary}
\newcommand{\numberthis}{\addtocounter{equation}{1}\tag{\theequation}}
\renewcommand{\d}{\text{\rm\,d}}
\def\R{\mathbb{R}}
\def\N{\mathbb{N}}
\def\E{\mathbb{E}}
\def\P{\mathbb{P}}
\def\eps{\varepsilon}
\def\del{\delta}
\def\L{\Lambda}
\def\la{\lambda}
\def\t{\theta}
\def\d{\mathrm{d}}
\def\a{\alpha}
\def\b{\beta}
\def\S{\Sigma}
\def\calL{\mathcal{L}}
\def\s{\sigma}
\def\tr{\mathrm{Tr}}
\def\ind{{\mathds{1}}}
\def\cC{\mathcal{C}}
\def\sgn{\mathrm{sgn}}
\def\cG{\mathcal{G}}
\def\X{\mathbb{X}}
\def\calC{\mathcal{C}}
\def\cmax{\mathcal{C}_{\max}}
\def\ind{\mathbbm{1}}
\def\low{\mathrm{low}}
\def\slow{\mathrm{Spec}_{\mathrm{low}}}
\def\spec{\mathrm{Spec}}
\def\tflow{t_{\mathrm{flow}}}
\def\Ns{N_{\mathrm{step}}}
\def\deg{\mathrm{deg}}
\def\md{\mathfrak{m}}
\def\snd{\mathfrak{d}}
\def\cP{\mathcal{P}}
\newcommand{\bzero}{\mathbf{0}}
\newcommand{\bone}{\mathbf{1}}
\renewcommand{\l}[0]{\left }
\renewcommand{\r}[0]{\right}
\newcommand{\op}{\mathrm{op}}
\newcommand{\RE}{\mathrm{RE}}
\title{\bf Learning with latent group sparsity via heat flow dynamics on networks}
\author{Subhroshekhar Ghosh \\
    Department of Mathematics \\
    National University of Singapore \\
    10 Lower Kent Ridge Road \\
    Singapore 119076 \\
    E-mail: \texttt{subhrowork@gmail.com} \\
    \url{https://subhro-ghosh.github.io}
    \and
    Soumendu Sundar Mukherjee \\
    Interdisciplinary Statistical Research Unit \\
    Indian Statistical Institute \\
    203 Barrackpore Trunk Road \\
    Kolkata, India 700108 \\
    E-mail: \texttt{soumendu041@gmail.com} \\
    \url{https://soumendu041.gitlab.io}
}
\date{}
\begin{document}
\maketitle

\begin{abstract}
    Group or cluster structure on explanatory variables in machine learning problems is a very general phenomenon, which has attracted broad interest from practitioners and theoreticians alike. In this work we contribute an approach to learning under such group structure, that does not require prior information on the group identities. Our paradigm is motivated by the Laplacian geometry of an underlying network with a related community structure, and proceeds by directly incorporating this into a penalty that is effectively computed via a heat flow-based local network dynamics. In fact, we demonstrate a procedure to construct such a network based on the available data. Notably, we dispense with computationally intensive pre-processing involving clustering of variables, spectral or otherwise. Our technique is underpinned by rigorous theorems that guarantee its effective performance and provide bounds on its sample complexity. In particular, in a wide range of settings, it provably suffices to run the heat flow dynamics for time that is only logarithmic in the problem dimensions. We explore in detail the interfaces of our approach with key statistical physics models in network science, such as the Gaussian Free Field and the Stochastic Block Model. We validate our approach by successful applications to real-world data from a wide array of application domains, including computer science, genetics, climatology and economics. Our work raises the possibility of applying similar diffusion-based techniques to classical learning tasks, exploiting the interplay between geometric, dynamical and stochastic structures underlying the data.
\end{abstract}

\noindent
\textbf{Keywords.} Latent group sparsity $|$ Networks $|$ Laplacian geometry $|$ Heat flow dynamics $|$ Gaussian Free Field $|$ Stochastic Block Model

\section{Introduction}\label{sec:intro}
The understanding and analysis of data with complex structure is a leitmotif of modern science and technology.  The spectacular growth in the capacity and computational means to process gigantic volumes of data has motivated the development of novel analytical paradigms in recent years.  A common theme that characterises many of these approaches is that they seek to incorporate the growing complexity that is inherent in such massive data sets \citep{national2013frontiers,marx2013big}.  

The intrinsic structure in data can manifest itself in various forms.  These range from the almost ubiquitous scenario of sparsity in an appropriate basis,  such as in compressive sensing and low-rank estimation problems \citep{achlioptas2007fast,grasedyck2013literature,chang2000adaptive,foucart2013invitation,berthet2013optimal},  to algebraic constraints imposed by physical considerations,  such as  symmetries under rigid motions that are inherent in problems of cryo-electron microscopy \citep{cheng2015primer,singer2018mathematics,hadani2011representation,perry2019sample,fan2020likelihood,ghosh2021multi}. In yet other instances,  constraints may be stochastic in nature,  pertaining to the statistical dependency structures that  characterise  the model \citep{mezard2009information,ros2019complex,ghosh2020fractal,lauritzen2019maximum,li2016fast,lavancier2015determinantal,ghosh2020gaussian,bardenet2021determinantal}.

A significant structural feature that arises in  many application scenarios is the clustering, or grouping, of some of the explanatory variables into a relatively limited number of categories, with the understanding that the quantities in the same category are strongly dependent on each other. A typical scenario is the existence of deterministic relationships governing the values of the variables in the same group.  Further, it is often the case that only a few of these groups  contribute meaningfully to the experimental observations, with the remaining variables being redundant or uninformative for predictive purposes. An important use-case of such structure is that of high dimensional supervised learning, where the explanatory variables are often clustered via natural constraints. For instance, meteorological measurements in spatially adjacent locations are likely to be highly correlated. Similarly, frequency of occurrence of certain words or phrases in spam emails are likely to be highly correlated.

A  different setting in which a group structure on variables plays a significant role is that of community detection problems, where groups or clusters pertain to connectivity patterns in an underlying network. A typical example is that of a social network, where connectivity corresponds to friendship or acquaintance; yet another is that of collaboration network among scientists \citep{fortunato2010community,fortunato2007resolution,reichardt2006statistical}.  Due to obvious practical ramifications,  this area has witnessed intense research activity in recent years,  a significant achievement of which is the extensive theory of \textit{Stochastic Block Models} (\textit{abbrv.} SBM) \citep{abbe2017community,abbe2015exact,bandeira2016low,goldenberg2010survey,holland1983stochastic,karrer2011stochastic}. In general, incorporating network geometry into standard statistical learning problems has been an area of recent interest \citep{hallac2015network, li2020graph, li2019prediction, li2020high}.

The network structure brings into  focus the geometric perspective on the clustering phenomenon,  that is underpinned by the metric  induced by the weighted graph distance in the network \citep{von2007tutorial}.  Intertwined with such geometry is the canonical dynamics  associated to it --  in a very general Riemannian geometric setting,  the metric structure gives rise to a Laplacian operator, which in turn serves as the generator of the so-called \textit{heat flow} dynamics on the underlying space.  These correspondences are classical in metric geometry and harmonic analysis
\citep{rosenberg1997laplacian,jost2008riemannian}.
 
Associated with the Laplacian geometry and heat flow dynamics is the canonical model in statistical physics referred to as the \textit{Gaussian Free Field} (\textit{abbrv.} GFF) (c.f.  \cite{sheffield2007gaussian,berestycki2015introduction,friedli2017statistical}), which has also emerged to be of independent interest as an important instance of \textit{Gaussian graphical models}  \citep{zhu2003semi,zhu2003combining,ma2013sigma,kelner2019learning,rasmussen2003gaussian}.   GFF-s complete the above picture from a statistical point of view,  by embedding the stochastic dependency structure of Gaussian random variables in the setup of the geometric structure and dynamical properties of a weighted networks.
 
In the present work, we bring together these disparate strands  into synergy -- clustering phenomena on variables or predictors in one direction, those in SBM-type network models in another, and the  geometry of and dynamics on weighted networks in yet a third direction, along with their statistical physical implications. Leveraging their interplay,  we obtain an algorithm to perform effective regression analysis in both high and low dimensional setup for variables with a \textit{latent group structure},  using  \textit{limited and local access} to  underlying network dependencies.  In a more general setup, when an underlying graph may not be explicit in the problem description, we demonstrate a procedure to construct such a network based on the available data.  Substantiated by rigorous mathematical analysis and robust experimental performance,  our approach may be seen to outperform more classical approaches for group structured data.  From an algorithmic perspective,  our methodology alludes to interesting connections the with so-called \textit{diffusion mapping} techniques, which have been effective as dimension reduction tools \citep{coifman2005geometric, coifman2006diffusion}.    

More generally,  our approach opens the avenue to applications of similar diffusion-based techniques to classical statistical and data analytical problems, that are generally static in nature. The inherently local nature of the heat flow and related diffusion dynamics enables us to solve the relevant constrained optimization problems while being oblivious to the global geometry of the graph.  In addition to economies of computational resources,  such locality is of significance with regard to questions of privacy in data analysis, a problem that is gaining increasing salience in today's hyper-networked world. 
 
\section{Lasso and its derivatives}

Variable selection is a classic problem in statistics, which has become all the more important in the present age with the routine availability of large scientific datasets with measurements on tens of thousands of variables. Sparsity has become a key methodological instrument for meaningful inference from such ``high-dimensional'' datasets. Parsimonious models are easier to interpret and more resistant to overfitting. The Least Absolute Shrinkage and Selection Operator (abbrv. \emph{lasso}) \citep{tibshirani1996regression}, which employs a sparsity inducing $\ell_1$-penalty, is perhaps the most prominent method of variable selection.

A major problem with vanilla lasso is that it treats all variables equally. Thus, when there are natural groups in the variables, some variables in a group can get kicked off the model with other members still included. The \emph{group lasso} penalty \citep{yuan2006model} aims to solve this issue.  Consider a supervised learning problem with $p$ predictors and corresponding parameter $\beta \in \R^p$. Denoting the groups by $\cC_1$, \ldots, $\cC_k$, the group lasso penalty uses a weighted $\ell_1$ norm of the groupwise $\ell_2$ norms:
\[
    \mathrm{GL}(\beta) = \sum_{\ell = 1}^k \sqrt{|\cC_\ell|} \,\, \|\beta_{\cC_\ell}\|_2.
\]
As a consequence, variables in a group exit the model together.

The group lasso penalty requires the groups to be known in advance.  However, in many practical scenarios, the group information is a priori unknown to the statistician. There has been some work to address this issue. The \emph{cluster representative lasso} (CRL) and \emph{cluster group lasso} (CGL) algorithms of \cite{buhlmann2013correlated}  perform an initial clustering of variables into groups and then use the estimated clusters for grouped variable selection.

In this paper,  we directly construct a penalty that automatically selects variables in groups without any prior group information. Notably,  we dispense with the elaborate pre-processing step involving clustering of the variables,  spectral or otherwise,  which can be computationally intensive.

\section{Laplacian geometry of graphs and the heat flow penalty}\label{sec:setup}
Suppose that we have explanatory variables $X_1, \ldots, X_p$,  whose group structure is captured by a graph $G=(V,E)$ on $|V|=p$ vertices,  each vertex corresponding to an $X_i$.  In the simplest scenario,  the groups of variables would correspond to the connected components $\{\calC_i\}_{i=1}^k$ of $G$.  More generally,  we consider the nodes of $G$ to be a union of subsets  $\{\calC_i\}_{i=1}^k$ (corresponding to the variable groups),  such that the $\calC_i$-s are relatively densely intra-connected,  but the inter-connections across the $\calC_i$-s are relatively sparse. This is similar in flavour to the problem of multi-way partitioning of graphs,  which has attracted considerable interest over the years \citep{lee2014multiway}.

Let $L$ denote the Laplacian of $G$.  We introduce the penalty
\begin{equation}\label{eq:pen-hflasoo}
    \Lambda_t(\beta) = \langle \Phi(e^{-tL}(\beta \odot \beta)), \mathbf{1} \rangle,
\end{equation}
where $\odot$ denotes Hadamard/elementwise product of vectors, $\Phi(\beta) = (\sqrt{|\beta_1|}, \ldots, \sqrt{|\beta_p|})^\top$,  and $\mathbf{1}$ is the all ones vector in $\R^p$.  
As we show in Lemma 1 and Corollary 2 in the appendix, in the limit of $t \to \infty$, the quantity $\Lambda_t(\beta)$ approaches the classical group lasso penalty $\mathrm{GL}(\beta)$ in the setting where the components $\{\calC_i\}_{i=1}^k$ are fully disconnected from each other in the graph $G$.  See Figure~\ref{fig:pen-balls} for a demonstration of this convergence. 

Note that $\Lambda_t$ directly incorporates the Laplacian in terms of the heat flow operator $e^{-tL}$ on the underlying graph $G$. Unlike the group lasso penalty, $\Lambda_t$ is non-convex which is a potential problem vis-a-vis  optimization. However, using the connection with heat flow, we can calculate $\Lambda_t$ in an efficient manner via random walks on the graph $G$. 

We consider the penalised supervised learning problem 
\begin{equation} \label{eq:pen_loss}
    F_{t, \lambda}(\beta;X,y) = \calL(\beta;X,y)+ \lambda \Lambda_t(\beta),
\end{equation}
where $(X,y)$ denotes the training data,  and $\calL(\beta;X,y)$ is a suitable loss function defined with respect to the problem and computed using the training data at the parameter $\beta$.  We expect that for suitably large $t$, the solution of this optimization problem will be close to that of the classical group lasso problem. 

A fundamental example of this set-up is the penalised 
regression problem $\arg \min_{\beta} F_{t, \lambda}(\beta)$,  where
\[
    F_{t, \lambda}^{\text{reg}}(\beta) = \frac{1}{2n} \|y - X\beta\|_2^2 + \lambda \Lambda_t(\beta). 
\]
More generally,  an important family of examples  is accorded by the problem of penalised likelihood maximization,  where $\calL(\beta;X,y)$ is the negative log-likelihood of the observed data $(X,y)$ at the parameter value $\beta$.  Yet another instance is that of the so-called Huber's loss,  where the quadratic function of the $L_2$ norm in the regression set-up is replaced by a different convex function \citep{huber1992robust}.

\begin{figure}[!ht]
    \centering
    \begin{tabular}{cc}
        \includegraphics[scale = 0.30]{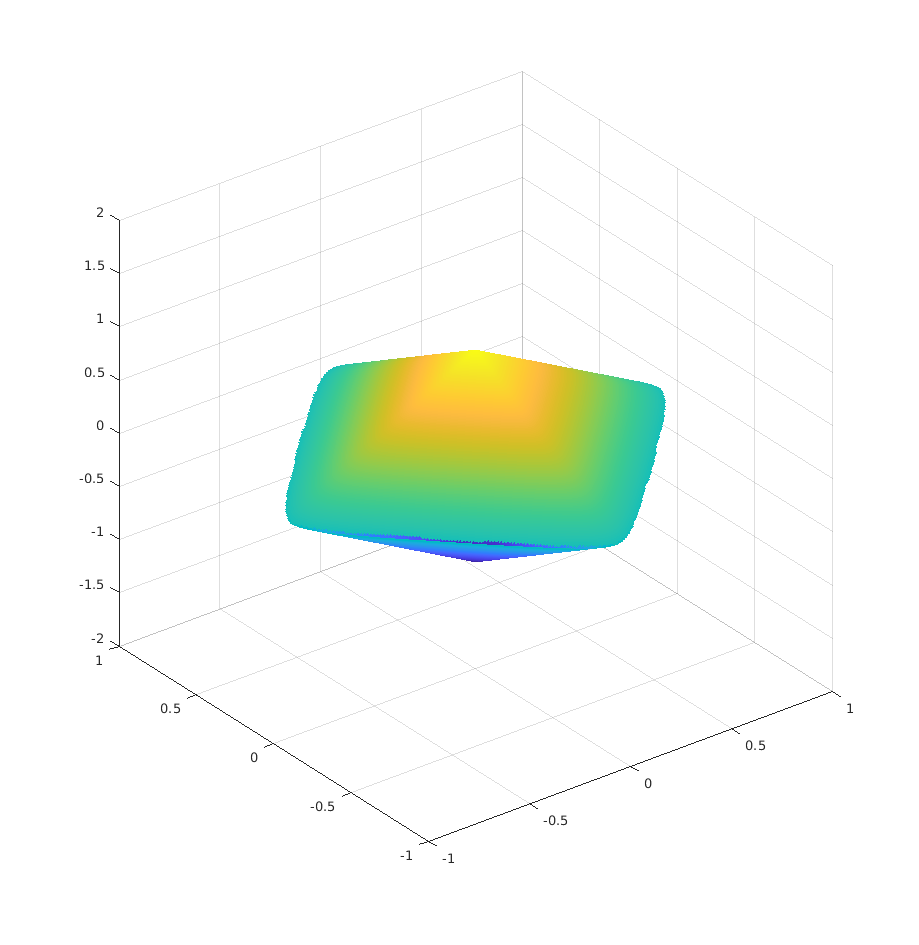} & \includegraphics[scale = 0.30]{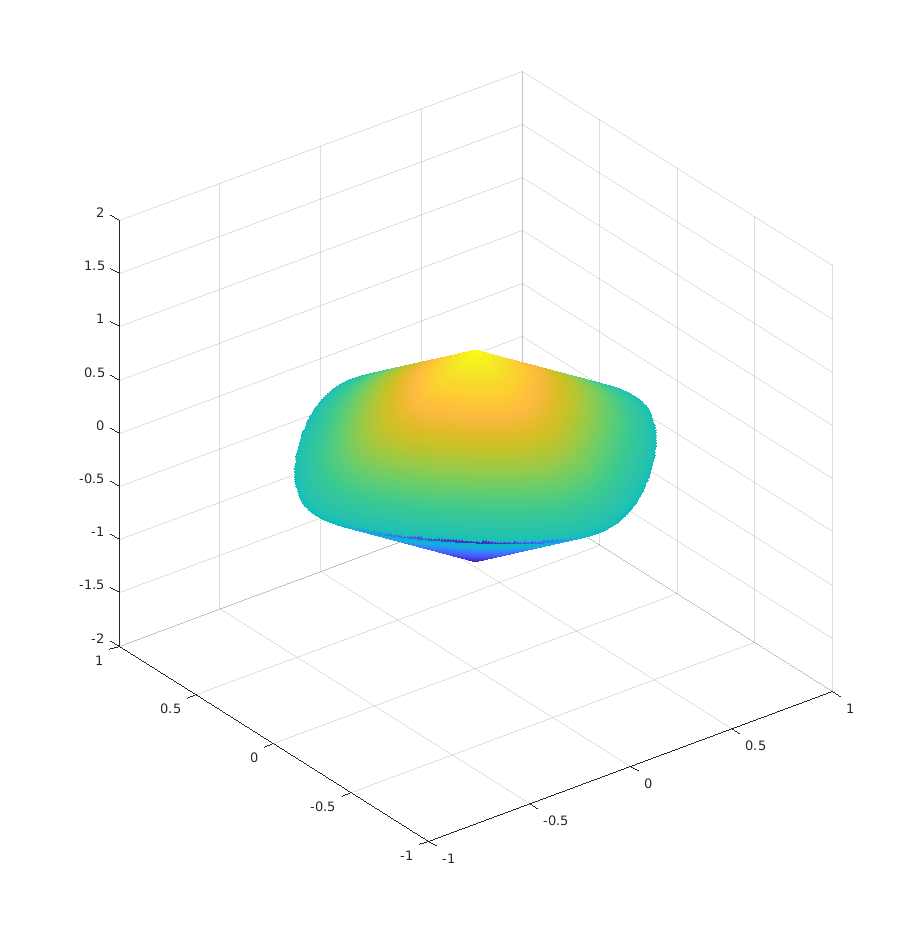} \\
        $t = 0.01$                         & $t = 0.1$ \\
         \includegraphics[scale = 0.30]{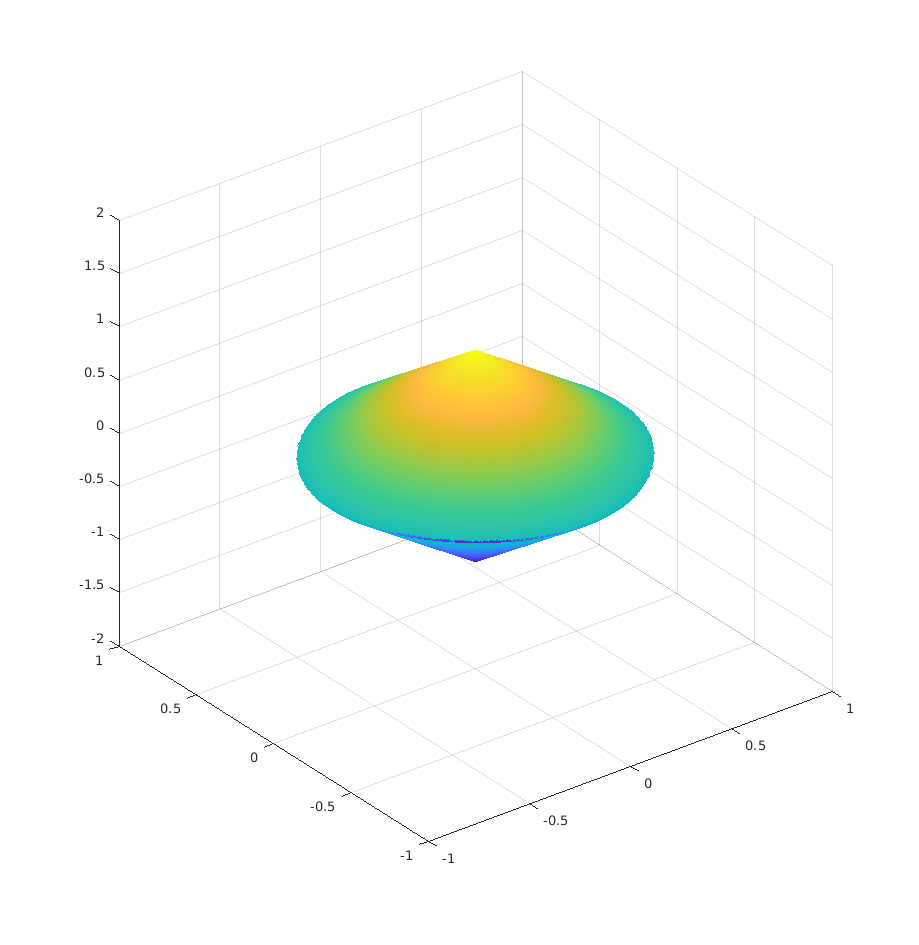} & \includegraphics[scale = 0.30]{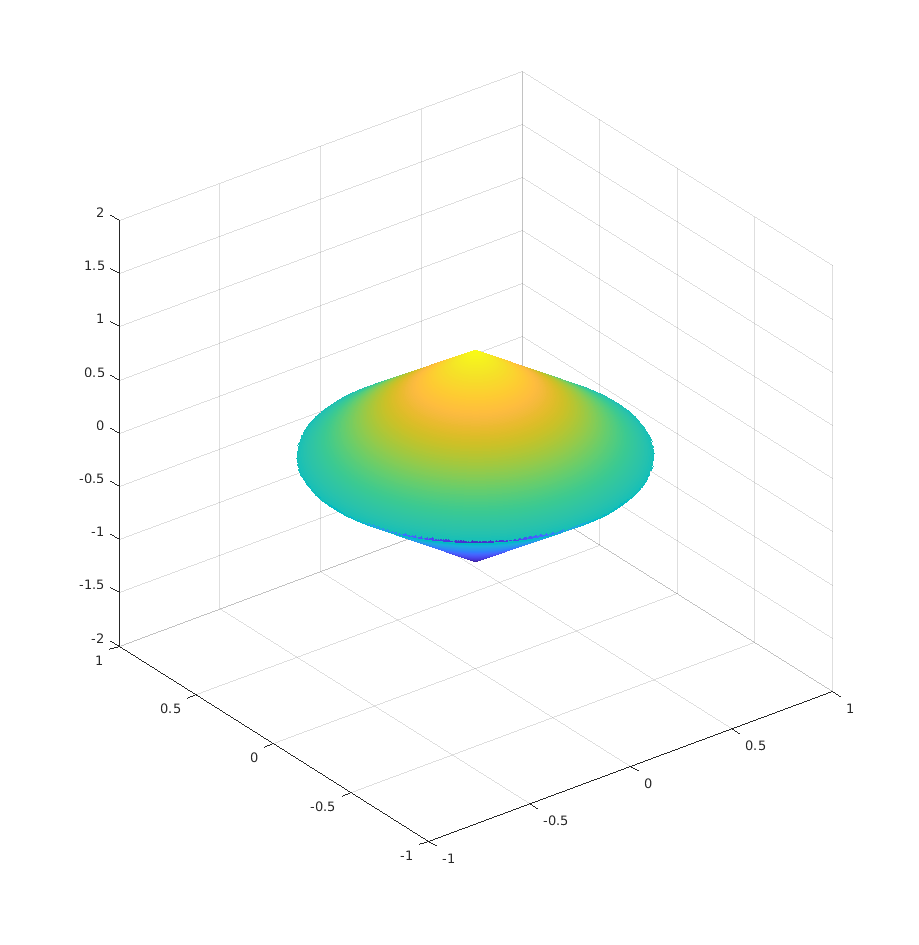} \\
         $t = 0.5$ & $t = 1$
    \end{tabular}
    \caption{Consider three variables with two groups $\{1, 2\}$ and $\{3\}$. In this display, we plot of the level set $\{\beta \mid \Lambda_t(\beta) \le 1\}$ for different values of $t$. The graph $G$ here is the union of an isolated vertex and an edge. The eigengap $\lambda_g = 2$.}
    \label{fig:pen-balls}
\end{figure}

\section{Local heat flow dynamics and an algorithm for implicit group sparsity} \label{sec:algo}

\algrenewcommand\algorithmicrequire{\textbf{Input:}}
\algrenewcommand\algorithmicensure{\textbf{Output:}}
\newcommand{\old}{\mathrm{old}}
\newcommand{\new}{\mathrm{new}}
\newcommand{\hflow}{\textsc{heatflow}}
\newcommand{\subgrad}{\mathrm{subgrad}}
\newcommand{\iter}{i}
\newcommand{\maxiter}{N}
\newcommand{\reldiff}{\mathrm{reldiff}}
\newcommand{\mean}{\mathrm{mean}}

\subsection{A heat-flow-based algorithm}
Our general algorithmic framework is underpinned by a heat-flow-based computation of vectors of the form $e^{-tL} f$ (or a subset of its co-ordinates) for a given $f \in \R^p$, where we view the latter as a function on the nodes of the graph $G$. In this setting, we observe that 
\[
    (e^{-tL} f)_i = \E[f_{X(t)} \mid X(0) = i], 
\]
where $X(t)$ is a continuous-time simple random walk on $G$, which is the canonical analogue of heat flow dynamics on in this setup. This indicates that if we start $B$ random walks $X^{(1)}, \ldots, X^{(B)}$ from the vertex $i$, then an estimate of $(e^{-tL} f)_i$ would be
\[
    \widehat{(e^{-tL} f)}_i = \frac{1}{B}\sum_{j = 1}^B f_{X^{(j)}(t)}.
\]

In Algorithm~\ref{alg:heatflow}, we describe the pseudo-code for simulating $B$ heat flows from each vertex, run till time $t$. Note that the for-loops can be easily paralellised, rendering this algorithm highly efficient. The running time (i.e. computational complexity) for Algorithm~\ref{alg:heatflow}, even without factoring in parallelisation, is only $O(pB\Ns)$, where $\Ns$ is the step count required for a single run of the heat flow. We demonstrate in \eqref{eq:eq:step_well_clustered} that $\Ns$ is typically $O(\max\{\log p, \log n\})$; as such the computational complexity for Algorithm~\ref{alg:heatflow} is only $O(B \cdot p \cdot \max\{\log p, \log n\})$.

The pseudo-code for approximating $e^{-tL} f$ (or a subset of its co-ordinates) using heat flow is presented in Algorithm~\ref{alg:heatflow_on_vector}. Finally,  the complete subgradient-based optimisation procedures are presented in Algorithms~\ref{alg:sd} and \ref{alg:block_cd}.

\begin{algorithm}[!ht]
    \caption{Simulate heat flow}\label{alg:heatflow}
    \begin{algorithmic}[1]
        \Require Graph $G$; time $t$ till which to run the heat flow dynamics; $B$, the number of heat flows generated per vertex
        \Ensure A $p \times B$ matrix $H$, with $H_{ij}$ storing the state of the $j$-th heat flow started at vertex $i$ at time $t$.
        \For{$i = 1, \ldots, p$}
        \For{$j = 1, \ldots, B$}
        \State $s \gets 0$
        \State $H_{ij} \gets i$
        \While{$s < t$}
        \State Generate $E \sim \mathrm{Exponential}(1)$
        \State $s \gets s + E$
        \If{$s < t$ \textbf{and} $\mathrm{degree}(H_{ij}) > 0$}
        \State $H_{ij} \gets $ a neighbour of $H_{ij}$, chosen at random 
        \EndIf

        \EndWhile
        \EndFor
        \EndFor
    \end{algorithmic}
\end{algorithm}

\begin{algorithm}[!ht]
    \caption{Heat flow on a vector $f$}\label{alg:heatflow_on_vector}
    \begin{algorithmic}[1]
        \Require Heat flow matrix $H$; vector $f$, indices $S$ at which value required
        \Ensure $g = \textsc{heatflow}(f, S)$, an estimate of $(e^{-tL}f)_S$
        \For{$i \in S$}
        \State $g_i \gets \frac{1}{B}\sum_{j = 1}^B f_{H_{ij}}$     
        \EndFor
    \end{algorithmic}
\end{algorithm}

\subsection{Subgradient and stochastic block co-ordinate descent}
We use subgradient descent methods to minimise the penalised loss $F_{t,\la}(\beta;X, y)$ in \eqref{eq:pen_loss}, which we denote henceforth as $F_{t,\la}(\beta)$ for reasons of brevity. The intermediary computation of the subgradients and losses is performed via a heat flow based local dynamics on the network. The precise algorithmic implementation of the latter is given in Algorithms \ref{alg:heatflow} and \ref{alg:heatflow_on_vector}, whereas the complete optimisation procedures (using these as subroutines) are encapsulated in Algorithms \ref{alg:sd} and \ref{alg:block_cd}.

We now describe how the heat flow operator appears in the subgradient computations. Let us compute the subgradient of $F_{t, \lambda}(\beta)$. Set $h = e^{-tL} (\beta \odot \beta)$. Then $\Lambda_t(\beta) = \sum_{j = 1}^p \sqrt{|h_j|}$. Thus
\[
    \frac{\partial \Lambda_t(\beta)}{\partial \beta_{\ell}} = \sum_{j = 1}^p \partial s(h_j) \frac{\partial h_j}{\partial \beta_{\ell}} = \sum_{j = 1}^p \partial s(h_j) 2 (e^{-tL})_{j \ell} \beta_{\ell},
\]
where $s(x) = \sqrt{|x|}$ so that $\partial s(x) = \frac{\sgn(x)}{2 s(x)}$. Set $\zeta = \zeta(\beta) = 2 (\partial s(h_j))_{1 \le j \le p}.$ Then, using the fact that $e^{-tL}$ is symmetric, we can write
\[
    \partial \Lambda_t(\beta) = (e^{-tL} \zeta) \odot \beta.
\]
Therefore
\[
    \partial F_{t, \lambda}(\beta) = \frac{1}{n}X^\top X \beta - \frac{1}{n} X^\top y + \lambda (e^{-tL} \zeta) \odot \beta.
\]
Thus the general subgradient step would be
\[
    \beta^{(m + 1)} = \beta^{(m)} - \alpha^{(m)} \partial F_{t, \lambda}(\beta^{(m)}),
\]
for some learning rare $\alpha^{(m)}$.

We may also use stochastic block co-ordinate descent updates. For this we need to compute $(e^{-tL} \zeta)_S$ for some $S \subset [p]$. This may be efficiently done using an on-demand computation of only the required co-ordinates of $h$, as described in Algorihm~\ref{alg:block_cd}.

\begin{algorithm}[!ht]
    \caption{Subgradient descent}\label{alg:sd}
    \begin{algorithmic}[1]
        \Require Heat flow matrix $H$; initial estimate $\beta^{(0)}$; $\nabla \calL$, gradient of loss function; $\lambda$, penalty parameter; $\epsilon$, error tolerance; $\maxiter$, maximum number of subgradient steps; $\alpha: \N \rightarrow (0, \infty)$, learning rate protocol
        \Ensure $\hat{\beta}$, an approximate local minimum of heat flow penalised objective
        \State $\beta^{(\old)} = \beta^{(0)}$
        \State $\reldiff \gets 2\epsilon$
        \State $\iter \gets 0$
        \While{$\reldiff > \epsilon$ \textbf{and} $\iter \le \maxiter$}
        \State $\iter \gets \iter + 1$
        \State $h \gets \hflow(\beta_{\old}, [p]) \odot \beta_{\old}$     
        \State $\zeta \gets (\ell(h_j))_{1 \le j \le p}$ \Comment{$\ell(x) = \sgn(x)/\sqrt{|x|}$}
        \State $\subgrad \gets \nabla \calL(\beta^{(\old)}) + \lambda \,\, \hflow(\zeta, [p]) \odot \beta^{(\old)}$ 
        \State $\beta^{(\new)} \gets \beta^{(\old)} - \alpha(\iter) \,\, \subgrad$
        \State $\reldiff \gets \frac{\|\beta^{(\new)} - \beta^{(\old)}\|}{\|\beta^{(\old)}\|}$
        \EndWhile
    \end{algorithmic}
\end{algorithm}

\begin{algorithm}[!ht]
    \caption{Stochastic block co-ordinate descent}\label{alg:block_cd}
    \begin{algorithmic}[1]
        \Require Heat flow matrix $H$; initial estimate $\beta^{(0)}$; $\nabla \calL$, gradient of loss function; $\lambda$, penalty parameter; $\epsilon$, error tolerance; $\maxiter$, maximum number of subgradient steps; $\alpha: \N \rightarrow (0, \infty)$, learning rate protocol; $q$, block size for co-ordinate updates
        \Ensure $\hat{\beta}$, an approximate local minimum of heat flow penalised objective
        \State $\beta^{(\old)} = \beta^{(0)}$
        \State $\reldiff \gets 2\epsilon$
        \State $\iter \gets 0$
        \While{$\reldiff > \epsilon$ \textbf{and} $\iter \le \maxiter$}
        \State $\iter \gets \iter + 1$
        \State $S \gets \mathrm{sample}([p], q)$
        \For{$j \in S$}
        \State $e \gets H[j, ]$
        \State $h_e \gets \hflow(\beta^{(\old)}, e) \odot \beta^{(\old)}_e$
        \State $\zeta_e \gets \ell(h_e)$ \Comment{$\ell$ is applied co-ordinatewise}
        \State $g_j \gets \mean(h_e) \odot \beta^{(\old)}_j$ 
        \EndFor
        \State $\subgrad \gets \nabla_S\calL(\beta^{(\old)}) + \lambda \,\, g_S$ 
        \State $\beta^{(\new)}_S \gets \beta^{(\old)}_S - \alpha(\iter) \,\, \subgrad$
        \State $\reldiff \gets \frac{\|\beta^{(\new)} - \beta^{(\old)}\|}{\|\beta^{(\old)}_S\|}$
        \EndWhile
    \end{algorithmic}
\end{algorithm}

\subsection{A final hard-thresholding step}
We apply a hard-thresholding step to the output $\hat{\beta}$ of the optimisation to further enhance its support recovery properties. We apply $K$-means clustering with $K = 2$ on $|\hat{\beta}|$. Let $\cC$ be the cluster whose mean is the closest to $0$. We zero out the values of $\hat{\beta}_{\cC}$. The pseudo-code of this procedure appears in Algorithm~\ref{alg:threshold}.

\begin{algorithm}[!ht]
    \caption{Thresholding}\label{alg:threshold}
    \begin{algorithmic}[1]
        \Require $\beta$, output of subgradient descent and/or stochastic block co-ordinate descent 
        \Ensure $\beta_{\mathrm{thres}}$, a thresholded version of $\beta$
        \State $(C_1, m_1), (C_2, m_2) \gets \textsc{kmeans}(|\beta|, 2)$
        \State $j_* \gets \arg \max_{j \in \{1, 2\}} m_i $
        \State $\beta_{\mathrm{thres}} \gets \beta \odot \bone_{C_{j_*}}$
    \end{algorithmic}
\end{algorithm}

\subsection{Learning the graph Laplacian from data}
If the Laplacian $L$ (equivalently, the graph $G$) is not a priori available, we may estimate it using the covariance structure of the explanatory variables as follows (the pseudo-code appears in Algorithm~\ref{alg:graph_estimation}). Let $\hat{\Sigma}$ be some estimate of the true covariance matrix $\Sigma$. Let $\hat{R}$ denote the estimate of the true correlation matrix, obtained by rescaling $\hat{\Sigma}$. We construct an adjacency matrix by thresholding $\hat{R}$:
\[
    A_{ij} = \bone_{\{|\hat{R}_{ij}| \ge \tau(\hat{R})\}}, 
\]
where $\tau(\hat{R})$ is an appropriate quantile of the $|\hat{R}_{ij}|$'s. (E.g., in our simulation experiments, we use the $0.75$-th quantile.) We then take the unnormalised Laplacian corresponding to $A$ as an estimate of $L$.

As for estimating $\Sigma$, we may use the sample covariance matrix in low to moderate dimensions, and appropriately thresholded versions of it (e.g., \cite{bickel2008covariance, bickel2008regularized}) or other shrinkage estimators in high dimensions (e.g., \cite{friedman2008sparse, chen2010shrinkage}).

\begin{algorithm}[!ht]
    \caption{Graph estimation}\label{alg:graph_estimation}
    \begin{algorithmic}[1]
        \Require $\hat{R}$, some estimate of the population correlation matrix of the covariates; $\alpha$, quantile at which to threshold
        \Ensure $A$, adjacency matrix of a graph constructed from $\hat{R}$
        \State $\theta \gets \mathrm{quantile}(|\hat{R}|, \alpha)$
        \State $A \gets \bzero_{p \times p}$
        \For{$i = 1, \ldots, (p - 1)$}
        \For{$j = (i + 1), \ldots, p$}
        \State $A_{ij} \gets \ind_{\{|\hat{R}_{ij}| > \theta\}}$
        \State $A_{ji} \gets A_{ij}$
        \EndFor
        \EndFor
    \end{algorithmic}
\end{algorithm}

\subsection{Optimisation via local network dynamics} 

A cornerstone of our algorithms is that we do not require any direct knowledge of the group structure. Clustering algorithms for estimating the group structure typically require the knowledge of the number of groups. Thus in order to use group Lasso with an estimated group structure, one \emph{needs to know the number of groups}. Our algorithms, on the other hand, do not have such a requirement.

Further, we do not even require access to the full network. Rather, at each step of the optimisation algorithm, we find an approximation to the subgradient of the penalty (at the current value of the parameter $\b$) by using an Monte Carlo based approach. Additional economy in computational resources is accorded by the fact that we are able to use the same heat flow throughout the optimization procedure, the initial and terminal nodes of which generated and stored for successive calls. The effectiveness of this economizing protocol is demonstrated in simulation experiments as well as applications to real world data.

In particular, running a continuous time random walk only entails exploring a local neighbourhood of the current state, which relieves us of the necessity to work with the entire graph at one go. The latter would be necessary, e.g., in an approach where we wanted to do an initial spectral clustering  in order to arrive at a detailed understanding of the group structure  \citep{buhlmann2013correlated}. This may be prohibitively expensive in the setting of real-world massive networks, such as the large-scale social networks or the world-wide web.

In fact, we only  require \textit{oracle access} to a black box that returns the terminal state of the heat flow starting from a prescribed initial node. Such limited and local  access to the network data can have significant implications with regard to considerations of privacy and security, which play an increasingly important role in modern statistical research. 

\subsection{The duration and  step count for the heat flow dynamics}
The duration of the heat flow is determined by the goal to make the difference between our penalty and the classical group lasso penalty small,  a bound on which is accorded by Theorem \ref{thm:prediction-consistency}. It follows therefrom  that,  under reasonable conditions,  it suffices to have $\tflow \gg \frac{1}{\la_g}\max \{ \log n, \log p\}$.  In the most important setting of the clusters being fully disconnected from each other but densely connected within each other,  the ground state $\la_g$ is the minimum of the ground states for the individual clusters. For the components being generic densely connected graphs of size $\Theta(p)$, we typically have $\la_g=\Theta(p)$,  whence $\tflow \gg \frac{1}{p}\max \{ \log n, \log p\}$; we refer to Section 3 in the appendix for details. We note that although the optimal choice of $t_{\mathrm{flow}}$ depends on $\lambda_g$, in practice we can perform \emph{cross-validation} to choose the optimal value of $t_{\mathrm{flow}}$. It is precisely for this reason that our algorithms do not need the knowledge of the number of groups. 

The number of steps of the heat flow dynamics is,  roughly speaking,  the heat flow time $\tflow$ times the average number of steps per unit time.  If the current state of the dynamics is a node $v \in G$ with degree $\mathrm{deg}(v)$,  then the time until the next step in the random walk is an $\mathrm{Exponential}(\mathrm{deg}(v))$) random variable.  Thus,  step time distribution is stochastically dominated by an $\mathrm{Exponential}(d_{\max})$ random variable,  where $d_{\max}$ is the maximum degree of $G$. Thus,  on average,  the total step count for the heat flow dynamics is given by $\Ns=O(d_{\max} \cdot \tflow)=O(d_{\max} \cdot \frac{1}{\la_g} \cdot \max \{ \log n, \log p\})$.  For typical strongly intra-connected components as above, we have $d_{\max}=\Theta(p)$ (for details we refer to Section 3 in the appendix), whence we have $\Ns = O(\max \{ \log n, \log p\})$.  

The logarithmic dependence on the problem dimensions as indicated above entails a light computational load for the heat flow based dynamic algorithm. 

\subsection{A spectral perspective and the role of non-convexity}

We note in passing that the spectral data of a heat flow operator (equivalently,  an appropriate random walk transition matrix) is known to be useful for learning tasks,  especially in the context of diffusion mapping for dimensionality reduction problems \citep{coifman2005geometric, coifman2006diffusion}.  However,  the random walk in diffusion mapping takes place in a different space -- namely,  on the space of the actual data points (therefore,  having $n$ nodes); whereas  in our setting,  the random walk takes place on the co-ordinate indices of the data points (thereby entailing $p$ nodes).  Nonetheless,   it would be of  interest to explore the possible interaction of these two diffusion-based approaches,  in particular regarding the possibility of incorporating a dimension reduction step in our paradigm to achieve further economy of computational resources.

It may be noted that the map $\Psi$ that is embedded in our penalty is mildly non-convex, with an algebraically explicit square-root structure. This may be compared to the setting of classical group lasso when the groups are a priori not known, where  non-convexity enters via its role in estimating the groups. This step involves solving a  clustering problem, which is generally non-convex.  In fact, the non-convex structure of the group recovery problem is known for its notoriety as a challenging optimization issue, and its rather intractable combinatorial nature makes it arguably a more complicated endeavour than the simple, algebraically explicit non-convexity posed by the map $\Psi$ in our approach. Exploiting this simple algebraic structure, perhaps by carrying out the  optimization in a different co-ordinate system, would be an interesting direction for future research.

\section{Theoretical guarantees}\label{sec:main}
We first set some notations in order to lay out our main theoretical results. For definiteness, we focus on the setting of  regression with a group structure  on the parameters in our theoretical analysis. However, we note in passing that similar analysis would apply to a wide range of applications with our method, including logistic regression, generalised linear models and other use cases. 

Consider the situation where the graph $G$ has exactly $k$ connected components. Then it is a well-known fact that the spectrum $0 = \la_0 \le \la_1  \le \cdots  \le \la_p$ of the Laplacian matrix $L$ of $G$ then has exactly $k$ many zero eigenvalues. Let $\lambda_g := \min_{i > k} \lambda_i$ denote the \textit{spectral gap} of $L$. Let
\[
    A(\beta) = \{ i \mid 1 \le i \le k, \|\beta_{\cC_i}\|_2 \ne 0\},
\]
and
\[
    I(\beta) = \cup_{i \in A(\beta)} \cC_i.
\]

Note that $\Lambda_t(\beta)$ is a non-convex penalty. As such, the objective function $F_{t, \lambda}$ can have multiple local optima. However, we will show that in an appropriate neighbourhood of the true parameter value $\beta^*$, the minimum of $F_{t, \lambda}(\beta)$ approximately minimises the group Lasso penalty for sufficiently large $t$.

As we shall see below, the presence of an additional \textit{restricted eigenvalue property} leads to improved guarantees on the accuracy of our procedure, so we state this property below.

\textit{Property RE($s$)}:  We say that the restricted eigenvalue property RE($s$) holds for $X$ with parameter $\kappa = \kappa(s)$ if
\begin{align*}
    \min\bigg\{\frac{\|X\Delta\|}{\sqrt{n\|\Delta_A\|}} \,& :\, |A|    \le s,  \quad \Delta \in \R^p \setminus \{0\}, \\ & \sum_{j \notin A} \sqrt{|\calC_j|} \|\Delta^j\| \le 3 \sum_{j \in A} \sqrt{|\calC_j|} \|\Delta^j\|\bigg\} \ge \kappa.
\end{align*}

We denote $B(\beta^*,\eps)$ to be the $L^2$ Euclidean ball in $\R^p$ with center $\beta^*$ and radius $\eps$, and set 
\[
    \Lambda(X; \eta) = \max_{1 \le j \le K} \sqrt{\|\frac{1}{n}X_j^\top X_j\|_{\op}}\bigg(1 +  \sqrt{\frac{4\log \eta^{-1}}{|\calC_j|}}\bigg).
\]
We may then state:

\begin{theorem}\label{thm:prediction-consistency}
    For $0 < \epsilon \le \epsilon_0$, let $\hat{\beta}_{t, \lambda} = \hat{\beta}_{t, \lambda}^{(\epsilon)}$ be the minimiser of $F_{t, \lambda}(\beta)$ in $B(\beta^*, \epsilon)$. 
    Let $\frac{\sigma}{\sqrt{n}} \Lambda(X; \eta) \le \la \le \l(8 |\cmax| \lambda \|\beta^*\|_{2, 1}\r)^{-1} \eps$.
    Then  with probability at least $1 - 2K\eta$, we have
    \begin{equation} \label{eq:PC_slow}
        \frac{1}{n}\|X(\hat{\beta}_{t, \lambda} - \beta^*)\|_2^2 = O(\|\beta^*\|_{2, 1} \lambda |\cmax| + p^{3/2} e^{-t\lambda_g/2}).
    \end{equation}    
    If we further assume $\RE(s)$ holds for $X$ with parameter $\kappa$, then
    \begin{equation} \label{eq:PC_fast}
        \frac{1}{n}\|X(\hat{\beta}_{t, \lambda} - \beta^*)\|_2^2 = O\bigg(\frac{s\lambda^2 |\cmax|}{\kappa^2} + p^{3/2} e^{-t\lambda_g/2}\bigg).
    \end{equation}
\end{theorem}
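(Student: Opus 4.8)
The plan is to adapt the classical ``basic inequality'' argument for (group) lasso prediction consistency, the one genuinely new ingredient being a quantitative replacement of the non-convex heat-flow penalty $\Lambda_t$ by the group lasso penalty $\mathrm{GL}$, with the spectral gap $\lambda_g$ governing the error. Since $\hat{\beta}_{t,\lambda}$ minimises $F_{t,\lambda}$ over $B(\beta^*,\epsilon)$ and $\beta^*$ is feasible, optimality yields the basic inequality $\frac{1}{2n}\|y - X\hat{\beta}_{t,\lambda}\|_2^2 + \lambda\Lambda_t(\hat{\beta}_{t,\lambda}) \le \frac{1}{2n}\|y - X\beta^*\|_2^2 + \lambda\Lambda_t(\beta^*)$; crucially this needs no convexity of the penalty, only that $\hat{\beta}_{t,\lambda}$ is the constrained minimiser and $\beta^*$ lies in the ball. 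Writing $y = X\beta^* + w$ with $w$ having independent sub-Gaussian entries of scale $\sigma$, and setting $\Delta = \hat{\beta}_{t,\lambda} - \beta^*$, this rearranges to $\frac{1}{2n}\|X\Delta\|_2^2 \le \frac{1}{n}\langle w, X\Delta\rangle + \lambda\big(\Lambda_t(\beta^*) - \Lambda_t(\hat{\beta}_{t,\lambda})\big)$.

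Next I would control the empirical-process term groupwise. Decomposing $\frac{1}{n}\langle w, X\Delta\rangle = \sum_j \langle \frac{1}{n}X_j^\top w, \Delta^j\rangle$ and applying a Gaussian concentration bound to each $\|\frac{1}{n}X_j^\top w\|_2$ --- which, after normalising by $\sqrt{|\calC_j|}$, produces exactly the factor $\sqrt{\|\frac{1}{n}X_j^\top X_j\|_{\op}}\,(1 + \sqrt{4\log\eta^{-1}/|\calC_j|})$ --- together with a union bound over the $K$ groups (giving the stated probability $1 - 2K\eta$), yields $\frac{1}{n}\langle w, X\Delta\rangle \le \frac{\sigma}{\sqrt{n}}\Lambda(X;\eta)\,\mathrm{GL}(\Delta)$. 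The lower bound $\lambda \ge \frac{\sigma}{\sqrt{n}}\Lambda(X;\eta)$ then absorbs this into a multiple of $\lambda\,\mathrm{GL}(\Delta)$.

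The heart of the argument is to pass from $\Lambda_t$ to $\mathrm{GL}$. Invoking Lemma 1 and Corollary 2 of the appendix, one has $|\Lambda_t(\beta) - \mathrm{GL}(\beta)| = O(p^{3/2}e^{-t\lambda_g/2})$ uniformly for $\beta$ in the ball (both $\beta^*$ and $\hat{\beta}_{t,\lambda}$ qualify), so that $\Lambda_t(\beta^*) - \Lambda_t(\hat{\beta}_{t,\lambda}) \le \mathrm{GL}(\beta^*) - \mathrm{GL}(\hat{\beta}_{t,\lambda}) + O(p^{3/2}e^{-t\lambda_g/2})$. Writing $a = \sum_{j\in A}\sqrt{|\calC_j|}\|\Delta^j\|$ and $b = \sum_{j\notin A}\sqrt{|\calC_j|}\|\Delta^j\|$ for $A = A(\beta^*)$, the support-based bound $\mathrm{GL}(\beta^*) - \mathrm{GL}(\hat{\beta}_{t,\lambda}) \le a - b$ reduces everything to standard group lasso form. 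For the slow rate \eqref{eq:PC_slow} I would bound $\mathrm{GL}(\Delta) \le \mathrm{GL}(\hat{\beta}_{t,\lambda}) + \mathrm{GL}(\beta^*)$ crudely, collapsing the inequality to $\frac{1}{n}\|X\Delta\|_2^2 = O\big(\lambda\,\mathrm{GL}(\beta^*) + \lambda p^{3/2}e^{-t\lambda_g/2}\big)$ and using $\mathrm{GL}(\beta^*) \lesssim |\cmax|\,\|\beta^*\|_{2,1}$. For the fast rate \eqref{eq:PC_fast} I would instead keep a sharper noise constant to derive the approximate cone condition $b \le 3a + O(p^{3/2}e^{-t\lambda_g/2})$, apply $\RE(s)$ to get $\frac{1}{n}\|X\Delta\|_2^2 \ge \kappa^2\|\Delta_A\|^2$, bound $a \le \sqrt{s|\cmax|}\,\|\Delta_A\|$ by Cauchy--Schwarz, and solve the resulting quadratic inequality in $\|X\Delta\|_2/\sqrt{n}$.

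The main obstacle is the uniform approximation step: the heat-flow penalty is genuinely non-convex and must be compared to $\mathrm{GL}$ with an explicit, $\beta$-uniform rate controlled by $\lambda_g$. This is precisely where the convergence of the heat kernel $e^{-tL}$ to the projection onto $\ker L$ (at rate $e^{-t\lambda_g}$, halved to $e^{-t\lambda_g/2}$ through the H\"older-$1/2$ continuity of the square root in $\Phi$, and multiplied by a $p^{3/2}$ factor from summing the per-coordinate errors) enters, and it is the feature that distinguishes this analysis from the textbook group lasso argument. A secondary delicate point is the role of the radius $\epsilon$: the upper constraint on $\lambda$ must be used to ensure the resulting error bounds are consistent with $\hat{\beta}_{t,\lambda}$ lying in the interior of $B(\beta^*,\epsilon)$, so that comparison against $\beta^*$ in the basic inequality is legitimate and the constrained minimiser is not pinned to the boundary.
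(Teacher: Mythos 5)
Your proposal is correct and reaches both rates, but it takes a structurally different route from the paper at one key point. The paper does not argue via the basic inequality against $\beta^*$ directly. Instead, it first proves (Lemma~\ref{lem:approx-min} in the appendix) that the constrained minimiser $\hat{\beta}_{t,\lambda}$ is an \emph{approximate minimiser of the group lasso objective} $F_{\infty,\lambda}$, by comparing against the global group lasso solution $\hat{\beta}_{\infty,\lambda}$: the hypothesis $8|\cmax|\lambda\|\beta^*\|_{2,1}\le\epsilon$ is used there, via Eq.~(3.9) of Lounici et al.\ (2011), precisely to guarantee that $\hat{\beta}_{\infty,\lambda}$ lies inside the ball $B[\beta^*;\epsilon]$ so that $F_{t,\lambda}(\hat{\beta}_{t,\lambda})\le F_{t,\lambda}(\hat{\beta}_{\infty,\lambda})$ — note this is a different role for that condition than the one you guessed (interiority of the constrained minimiser). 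The paper then converts this into an approximate sparsity oracle inequality valid for \emph{arbitrary} comparison points $\beta$ (Lemma~\ref{lem:approx-oracle}), and the theorem follows by setting $\beta=\beta^*$ and citing Lounici et al.'s Theorem~3.1 for the fast rate. Your version short-circuits the intermediate lemma: since $\beta^*$ is itself feasible in $B(\beta^*,\epsilon)$, the constrained optimality of $\hat{\beta}_{t,\lambda}$ gives the basic inequality immediately, and the two-sided uniform penalty approximation (the paper's Lemma~\ref{lem:pen-approx} and Corollary~\ref{thm:closeness-of-losses}, with exactly the $e^{-t\lambda_g}\mapsto e^{-t\lambda_g/2}$ halving through the square root and the $p^{3/2}$ factor you describe) converts it into the same approximate basic inequality with $\beta=\beta^*$; the noise control via the groupwise events $\{n^{-1}\|\varepsilon^\top X_j\|_2\le\lambda\sqrt{|\calC_j|}\}$ and the union bound giving $1-2K\eta$ is identical in both arguments. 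What you lose is the stronger and independently interesting statement that $\hat{\beta}_{t,\lambda}$ nearly minimises the group lasso objective globally, together with the full oracle inequality over all $\beta$; what you gain is a more elementary argument that, strictly speaking, does not need the upper constraint on $\lambda$ for the slow-rate bound. Two minor points: your error term carries a factor $\lambda p^{3/2}e^{-t\lambda_g/2}$, which is actually slightly sharper than the theorem's stated $p^{3/2}e^{-t\lambda_g/2}$ (the paper itself silently drops a $\lambda$ in the proof of Lemma~\ref{lem:approx-min}); and your observation that the cone condition becomes approximate, $b\le 3a+O(e^{-t\lambda_g/2}/\lambda)$, so that $\RE(s)$ with constant $3$ does not apply verbatim and a case analysis (error term dominant versus cone valid with an inflated constant) is required, is a genuine wrinkle — but the paper is no more careful here, disposing of it with a bare citation to Lounici et al., so your treatment is at the same level of rigour as the paper's own.
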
 

We further demonstrate that the RE($s$) property above for a random design matrix $X$ can be related, in the crucial setting of Gaussian covariates, to a similar property for the \textit{deterministic} covariance matrix $\Sigma$ of the rows of $X$. The latter can often be much easier to verify; for instance, it may be easily seen to hold as soon as $\Sigma$ is well-conditioned. 

\begin{lemma}\label{lem:res-eig}
    Suppose that $\Sigma$ satisfies $\RE(s)$ with parameter $\kappa_{\Sigma}(s) > 0$. Let the rows of $X$ be i.i.d. $N(0, \Sigma)$. Assume that
    \begin{equation}
        n \ge (36 \cdot 8)^2 \cdot \frac{(\rho(\Sigma))^2 s |\cmax| \log p}{(\kappa_{\Sigma}(s))^2}.
    \end{equation}
    Then $\frac{X^\top X}{n}$ satisfies $\RE(s)$ with parameter $\kappa_{\Sigma}(s)/8$ with probability at least $1 - c'\exp(-cn)$ for some constants $c, c' > 0$.
\end{lemma}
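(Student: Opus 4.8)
The plan is to reduce the claim to a single \emph{pointwise} lower bound on $\|X\Delta\|_2/\sqrt{n}$ that holds simultaneously for all $\Delta \in \R^p$, and then to specialize it to the restricted cone. Concretely, writing $\rho(\Sigma)^2 = \max_j \Sigma_{jj}$, I would first establish a Raskutti--Wainwright--Yu type estimate: there are universal constants $c, c' > 0$ such that, with probability at least $1 - c'\exp(-cn)$,
\begin{equation}\label{eq:pointwise-RE}
    \frac{\|X\Delta\|_2}{\sqrt{n}} \ge \frac{1}{4}\|\Sigma^{1/2}\Delta\|_2 - 9\,\rho(\Sigma)\sqrt{\frac{\log p}{n}}\,\|\Delta\|_1 \qquad \text{for all } \Delta \in \R^p.
\end{equation}
Granting \eqref{eq:pointwise-RE}, the rest is deterministic linear algebra on the cone, and the proposed sample-size threshold is exactly what is needed to absorb the deviation term.

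For the cone computation, fix $A$ with $|A| \le s$ and $\Delta$ satisfying the constraint $\sum_{j \notin A}\sqrt{|\calC_j|}\|\Delta^j\| \le 3\sum_{j \in A}\sqrt{|\calC_j|}\|\Delta^j\|$. Splitting the weighted group norm over $A$ and $A^\complement$ and using the constraint gives $\|\Delta\|_{2,1} \le 4\sum_{j \in A}\sqrt{|\calC_j|}\|\Delta^j\|$; Cauchy--Schwarz together with $\sum_{j \in A}|\calC_j| \le s|\cmax|$ then yields
\[
    \|\Delta\|_1 \le \|\Delta\|_{2,1} \le 4\sqrt{s|\cmax|}\,\|\Delta_A\|_2,
\]
where the first inequality is again a groupwise Cauchy--Schwarz. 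Combining this with the hypothesis that $\Sigma$ satisfies $\RE(s)$, that is $\|\Sigma^{1/2}\Delta\|_2 \ge \kappa_\Sigma(s)\,\|\Delta_A\|_2$, and substituting into \eqref{eq:pointwise-RE}, I obtain
\[
    \frac{\|X\Delta\|_2}{\sqrt{n}} \ge \Bigl(\frac{\kappa_\Sigma(s)}{4} - 36\,\rho(\Sigma)\sqrt{\tfrac{s|\cmax|\log p}{n}}\Bigr)\|\Delta_A\|_2.
\]
The assumed lower bound on $n$ is precisely the statement $36\,\rho(\Sigma)\sqrt{s|\cmax|\log p/n} \le \kappa_\Sigma(s)/8$, so the bracketed factor is at least $\kappa_\Sigma(s)/8$, which is the desired conclusion.

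The main obstacle is establishing the uniform bound \eqref{eq:pointwise-RE}; everything above is routine once it is in hand. Here I would factor $X = W\Sigma^{1/2}$ with $W \in \R^{n \times p}$ having i.i.d.\ $N(0,1)$ entries, so that $\|X\Delta\|_2 = \|W(\Sigma^{1/2}\Delta)\|_2$, and study the Gaussian process $\Delta \mapsto \|W\Sigma^{1/2}\Delta\|_2$. By homogeneity it suffices to bound its infimum over $\{\|\Sigma^{1/2}\Delta\|_2 = 1\}$ in terms of $\|\Delta\|_1$. The expected infimum is controlled from below by \emph{Gordon's Gaussian comparison inequality} (the constant $1/4$ and the $\sqrt{n}$ scaling emerge from this step), while the deviation of the infimum from its mean is controlled by \emph{Gaussian concentration for Lipschitz functions} (Borell--TIS), which is the source of the $\exp(-cn)$ tail. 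The coefficient $9\rho(\Sigma)\sqrt{\log p/n}$ on the $\|\Delta\|_1$ term, uniform across all scales of the ratio $\|\Delta\|_1/\|\Sigma^{1/2}\Delta\|_2$, is obtained through a \emph{peeling} argument: one discretizes the admissible values of this ratio dyadically, applies the concentration bound on each slice, and unions over slices, with the $\log p$ factor and the scale $\rho(\Sigma) = \max_j \sqrt{\Sigma_{jj}}$ entering through the Gaussian width of $\ell_1$-balls. The delicate point is to track the universal constants carefully so that they assemble into the stated $\tfrac14$ and $9$, and hence into the threshold $(36\cdot 8)^2$.
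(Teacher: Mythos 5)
Your proposal is correct and follows essentially the same route as the paper's proof: the identical cone computation giving $\|\Delta\|_1 \le 4\sqrt{s|\cmax|}\,\|\Delta_A\|_2$ via two applications of Cauchy--Schwarz, followed by the uniform lower bound $\frac{1}{\sqrt{n}}\|X\Delta\|_2 \ge \frac{1}{4}\|\Sigma^{1/2}\Delta\|_2 - 9\rho(\Sigma)\sqrt{\log p/n}\,\|\Delta\|_1$ and the same numerical absorption yielding the $(36\cdot 8)^2$ threshold. The only difference is that the paper invokes this uniform bound directly as Theorem 1 of \cite{raskutti10a}, whereas you sketch its proof (Gordon comparison, Borell--TIS concentration, peeling) --- which is precisely how that cited result is established, so no new argument is introduced.
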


\section{Random designs with a latent network geometry}

\subsection{Gaussian Graphical Models and Gaussian Free Fields}
\textit{Gaussian Free Fields} (abbrv.  GFF) have emerged as  important models of  correlated Gaussian fields,  that are naturally commensurate with the geometry of their ambient space.  In the case of graphs, the ambient geometry is spawned by the graph Laplacian.  These Gaussian processes also have important applications in physics, where they are of interest in the context of Euclidean quantum field theories \citep{friedli2017statistical}.

GFFs are in fact Gaussian Graphical Models (abbrv. GGM),  where the precision matrix of the Gaussian random field aligns with the Laplacian  of the graph,  thereby leading to a rich interaction between the statistical properties of the GGM and the Laplacian geometry of the underlying graph \citep{zhu2003semi,zhu2003combining,ma2013sigma,kelner2019learning,rasmussen2003gaussian}.  GGMs have emerged as   popular tools to model dependency relationships in data via a latent graph structure,  the choice of Gaussian randomness being often motivated by the fact that a Gaussian distribution maximises entropy within the constraints of a given covariance structure.  Applications of GGMs are ubiquitous, with use cases in diverse domains such as structural inference in biological networks, causal inference problems, speech recognition, and so on \citep{whittaker2009graphical,lauritzen1996graphical,edwards2012introduction,uhler2019gaussian}.  Since our approach  exploits in an essential manner the Laplacian geometry of the graph, the GFF is a natural GGM to examine within its ambit. 

For an in-depth introduction to the technical aspects of GFFs, we refer the reader to the excellent surveys \cite{sheffield2007gaussian} and \cite[Chap. 1]{berestycki2015introduction}; herein we will content ourselves with a brief description of its relevant features.   Broadly speaking,  a GFF is essentially a natural generalization of Brownian motion to general spaces,  with time replaced by,  e.g.,  the nodes of a network. 
We define the \textit{massive} GFF $\X_\t=(\X_\t(v))_{v \in V}$ on a graph $G=(V,E)$ with mass parameter $\t>0$ to be a mean-zero Gaussian field indexed by $V$ characterised by its \textit{precision matrix} (i.e., the inverse of the covariance matrix) given by $(L + \t I_{|V|})$, where $L$ is the (unnormalised) graph Laplacian on $G$ and $I_{|V|}$ is the $|V| \times |V|$ identity matrix (c.f. \cite{berestycki2015introduction}). The 
covariance matrix of $\X_\t$ is therefore given by $\S=(L + \t I_{|V|})^{-1}$. Note that $L$ itself is singular due to the all ones vector $\mathbbm{1}_V$ being in its kernel;   therefore we are aided by the strong convexity accorded by the mass parameter $\t>0$. 

\subsection{Stochastic Block Models and random designs}
In this section, we consider a very different model of graph clustering that is motivated by Stochastic Block Models (abbrv.,  SBM) that have attracted intense focus in statistics and machine learning applications in recent years  \citep{abbe2017community,goldenberg2010survey,holland1983stochastic,karrer2011stochastic}.  SBMs are  underpinned by a block matrix structure, indexed by the vertices of a graph,  with entries in $[0,1]$ that are constant on the blocks. These entries are the connection probabilities between the respective vertices.  In our setting,  it would be natural to consider the same block matrix pattern to generate the graph and the covariance structure.

For definiteness,  we divide the vertex set $V$ in $k$ groups $\{\calC_i\}_{i=1}^k$. We consider the $|V| \times |V|$  matrix $\cP$ whose rows and columns are indexed by the nodes of the graph $G$. For  $1 \le i,j \le k$,  let $\cP^{i,j}$ denote the submatrix indexed by the vertices in the groups $\calC_i$ (along the rows) and $\calC_j$ along the columns. For a vector $v \in \R^{|V|}$, we denote by $D(v)$ the diagonal matrix whose diagonal equals $v$. For brevity, we will denote by $\ind_i$ the indicator vector (in $\R^{|V|}$) of the nodes in $\calC_i$.  Let $a>b$ be numbers in $[0,1]$, such that $\cP^{i,i} = a (\ind_i \ind_i^\top -  D(\ind_i))$, and for $1 \le i \ne j \le k$, we have $\cP^{i,j} = b \ind_i \ind_j^\top$. Typically, $k$ is small compared $|V|$ and $a-b$ is taken to be large enough.

The classical SBM is a random graph that is sampled with independent edges according to probabilities given by the matrix $\cP$.  In our setting, the graph $G$ would simply be a realisation of this SBM, with the parameter $b$ being small so as to ensure the graph to be sparsely connected across blocks.  An alternative paradigm of graph generation from the matrix $\cP$ would be to consider a deterministic,  albeit weighted,  graph $G$,  whose edge weight between the vertices $u,v \in V$ equals $\cP_{uv}$.  The correlation structure of the covariates for either model is given  by the covariance matrix $\S=I_{|V|} + \cP$. In practice, if a graph is constructed from such a block-structured sample covariance matrix via thresholding, then the resulting graph would be well-modelled by the above SBM.

\subsection{Prediction guarantees and sample complexity bounds}  \label{sec:recovery}
In this section, we provide ballpark estimates on the quantities of interest in our random design models, that would be  enough to guarantee the effectiveness of our approach. We record here the main conclusions of our analysis, postponing the details to the appendix. 

For definiteness, we fix a polynomial decay of probability with which our recovery guarantees are to hold, which implies that the quantity $\eta=O(n^{-\a})$ in Theorem \ref{thm:prediction-consistency} for a fixed $\a>0$. Under this error tolerance, the appropriate choice of $\la$ may be shown to be $\la \gtrsim  \s_{\max}(\S) \sqrt{\frac{\log n}{n}}$, which in turn leads to  a prediction guarantee of 
\begin{align*} 
    \frac{1}{n}\|X(\hat{\beta}_{t, \lambda} - \beta^*)\|_2^2 \numberthis  \label{eq:PG} = O_P\l( \frac{\log n}{n}  \cdot \frac{ \s_{\max}(\S)}{\s_{\min}(\S)^2} \cdot  s |\calC_{\max}| + p^{3/2} e^{-t\lambda_g/2}  \r). 
\end{align*} 

Our goal here is to understand the order of the flow time $\tflow$ and the step count $\Ns$ (in terms of the other parameters of the problem) required to achieve a desired accuracy.  We will  bifurcate our analysis into two related parts.

\subsubsection{Bounds on \texorpdfstring{$\tflow$}{} for given \texorpdfstring{$n$}{}}
Given the data size $n$, we investigate the order of $t=\tflow$ at which the approximation error due to our heat flow based approach (roughly, the second term in \eqref{eq:PG}) becomes comparable to  the contribution to the prediction error bound for the classical group lasso methods that assume complete knowledge of the group structure (roughly, the first term in \eqref{eq:PG}). It may be shown that under very general circumstances, we have
\begin{equation} \label{eq:time_10}
    \tflow ~\gtrsim \frac{1}{\la_g} \log p + \frac{1}{\la_g} \log \l(  \frac{ n }{\log n} \cdot \frac{ \s_{\min}(\S)^2} { \s_{\max}(\S)} \cdot \frac{1} {s |\calC_{\max}|  }\r), 
\end{equation}
whereas for most models of interest, including the GFF and SBM based models in our purview, we may deduce the much simpler prescription $\tflow ~\gtrsim \frac{1}{\la_g} \max \{ \log p , \log n \}$. The details are provided in Section 3 of the appendix.

\subsubsection{Bounds on \texorpdfstring{$\tflow, n$}{} for target prediction guarantee \texorpdfstring{$\eps$}{}}

We fix a  threshold $\eps$, and make explicit prescriptions for the order of $n$ and $\tflow$ that will allow us to obtain a prediction error of order $O(\eps)$. To this end, we posit that the two terms on the right hand side of  \eqref{eq:PG} are separately $O(\eps^2)$. For $\tflow$, this entails that $p^{3/2}e^{-\la_g \tflow} \lesssim \eps^2$, which translates into $\tflow \gtrsim \frac{1}{\la_g}\l( \log p + \log \frac{1}{\eps} \r)$.

For a prescription for $n$ given a target prediction error $\eps$, we have the bound
\begin{align*}
 \frac{n}{\log n}  \numberthis \label{eq:n_eps}  \gtrsim \max \l\{  \l( \frac{1}{\eps^2} \cdot s |\calC_{\max}| \cdot \frac{ \s_{\max}(\S)  }{ \s_{\min}(\S)^2}\r), \l( s |\calC_{\max}| \cdot \frac{\rho(\Sigma)^2  \log p}{\s_{\min}(\S)^2} \r) \r\}. 
\end{align*}
The details of the analysis are available in Section 3 of the appendix.

\subsection{Thresholds for typical clustered networks}
For typical clustered networks (for a concrete probabilistic model, see Section 3 in the appendix), we can further simplify above prescriptions on the heat flow time $\tflow$  to the thresholds 
\begin{equation} \label{eq:eq:time_well_clustered}
    \tflow ~\gtrsim \frac{1}{p} \cdot  \max \{ \log p , \log n \}; \quad \tflow ~\gtrsim \frac{1}{p} \cdot \max \{ \log p , \log \frac{1}{\eps} \},
\end{equation}
resp. for the settings where data size $n$ is given and where the target prediction error $\eps$ is given. These translate into the step count bounds
\begin{equation} \label{eq:eq:step_well_clustered}
    \Ns = O\l( \max \{ \log p , \log n \} \r); \; \Ns = O\l(\max \{ \log p , \log \frac{1}{\eps} \} \r).
\end{equation}
We refer the reader to Sec. 3 in the appendix for details. It may be noted that the flow time $\tflow$ and the step count $\Ns$ do not depend on the statistical properties of the covariates $X_i$, but depend only on the geometric properties of the network $G$.

\subsubsection{Explicit guarantees for GFF and block model designs}

In the setting of GFF and SBM based random designs, we may obtain further explicit guarantees on the sample complexity $n$, via application of techniques from spectral graph theory; the detailed analysis is provided in Section 3 of the appendix. 

For both SBM based  and  GFF based random designs on  well clustered networks, we have the following prescriptions that suffice with high probability:
\begin{equation} \label{eq:sample_cplx_GFF}
    \l( \frac{n}{\log n} \r)_{\text{GFF}}  \gtrsim \max \l\{ \frac{1}{\eps^2} \cdot \|\b^*\|_0 \cdot  p,~~ \|\b^*\|_0 \cdot \log p \r\} 
\end{equation}
and
\begin{equation} \label{eq:sample_cplx_SBM}
    \l( \frac{n}{\log n} \r)_{\text{SBM}}  \gtrsim \max \l\{ \frac{1}{\eps^2} \cdot \|\b^*\|_0,~~ \|\b^*\|_0 \cdot \log p \r\}. 
\end{equation}

We observe that  the sample complexity bound \eqref{eq:sample_cplx_SBM}  for an SBM based design matches, upto logarithmic factors, the analogous bound for classical sparse reconstruction problems. For GFF based random designs,  it may be noted that the $\frac{1}{\eps^2} \cdot \|\b^*\|_0 \cdot  p$ term in \eqref{eq:sample_cplx_GFF}  comes from the first term in \eqref{eq:n_eps}, which, roughly speaking, reflects  the error incurred by classical group lasso. Thus, the linear dependence of the sample complexity on $p$ appears to be a fundamental characteristic of the problem for GFF based random designs, and is inherent to both classical group lasso and the present heat flow based methods. Empirical investigations also appear to corroborate this effect; for details we refer the reader to Section 4 in the appendix. Although  our method is still effective in a high dimensional setup (up to logarithmic factors),  devising  methodologies with greater efficiency for GFF based random designs, as well as theoretical investigations of information theoretic lower bounds in this setting, would be an interesting direction for future research.

\section{Experiments}\label{sec:exp}
\subsection{Simulations}
We take $n = 200, p = 100$ and $k = 4$ groups of relative sizes $(p_1, p_2, p_3, p_4)^\top / p = (0.16, 0.24, 0.40, 0.20)^\top$. We denote these groups by $\cC_i, i = 1, \ldots, k$.

We consider two models for $X$:
\begin{enumerate}
    \item \textbf{Gaussian with block diagonal covariance matrix.}
        The covariates $X \sim \mathcal{N}(0, \Sigma)$, with
        \begin{equation}\label{eq:block_diagonal_sigma}
            \Sigma = \begin{pmatrix}
                \Sigma_{p_1}(\rho_1) & \bzero & \bzero & \bzero \\
                \bzero & \Sigma_{p_2}(\rho_2) & \bzero & \bzero \\
                \bzero & \bzero & \Sigma_{p_3}(\rho_3) & \bzero \\
                \bzero & \bzero & \bzero & \Sigma_{p_4}(\rho_4)
            \end{pmatrix},
        \end{equation}
        where $\Sigma_d(\rho) = (1 - \rho) I_d + \rho \ind_d \ind_d^\top$ is the equi-correlation matrix of order $d$.

    \item \textbf{Gaussian free field.}
        We take a graph $G$ generated from a stochastic blockmodel on $p$ vertices with the groups $\cC_i, i = 1, \ldots, k$, and a connection probability of $a = 0.5$ within groups and $b = 0.01$ between groups. We then take a massive GFF on the graph $G$, with mass parameter set to the $(k + 1)$-th smallest eigenvalue of the unnormalised Laplacian corresponding to $G$.
\end{enumerate}

The true parameter $\beta$ is generated in the following way
\[
    \beta_i \begin{cases}
        \sim \mathrm{Uniform}(0.5, 0.7) & \text{ for } i \in \cC_1, \\
        = 0 & \text{ for } i \in \cC_2, \\
        \sim \mathrm{Uniform}(-0.7, -0.5) & \text{ for } i \in \cC_3, \\
        = 0 & \text{ for } i \in \cC_4.
    \end{cases}
\]

Finally, we generate the response from the linear model
\[
    Y = X\beta + \eps,
\]
where $\eps$ is a noise vector, independent of $X$, with isotropic covariance matrix $\sigma^2 I_n$.

We compare the subgradient and the stochastic block co-ordinate descent versions of our procedure (referred to as ``Heat flow (SD)'' and ``Heat flow (CD)'', respectively) against group lasso with group structure learned from spectral clustering \citep{von2007tutorial} on $\hat{L}$, an estimate of the Laplacian matrix obtained using Algorithm~\ref{alg:graph_estimation} with oracle knowledge of $k = 4$. For each algorithm, the tuning parameter $\lambda$ is chosen by cross-validation. In Tables~\ref{table:sim-1} and \ref{table:sim-2}, we report the prediction error, estimation error, and two measures of support recovery, namely, sensitivity and specificity:
\begin{align*}
    \text{Sensitivity} &= \frac{\#\{i : \hat{\beta}_i \ne 0, \beta^*_i \ne 0\}}{\#\{i : \beta^*_i \ne 0\}}, \\
    \text{Specificity} &= \frac{\#\{i : \hat{\beta}_i = 0, \beta^*_i = 0\}}{\#\{i : \beta^*_i = 0\}}.
\end{align*}
The estimated $\beta$ for different methods in the experiment with block diagonal covariate structure is shown in Figure~\ref{fig:sim-1}.

From both the experiments we see that our proposed method based on the heat flow penalty has comparable performance to group lasso in terms of prediction/estimation error, without requiring an explicit knowledge of the group structure, or even of the number of groups. In terms of support recovery, the proposed method appears to be much superior.

\begin{figure}[!ht]
    \centering
    \begin{tabular}{c}
        \includegraphics[scale = 0.50]{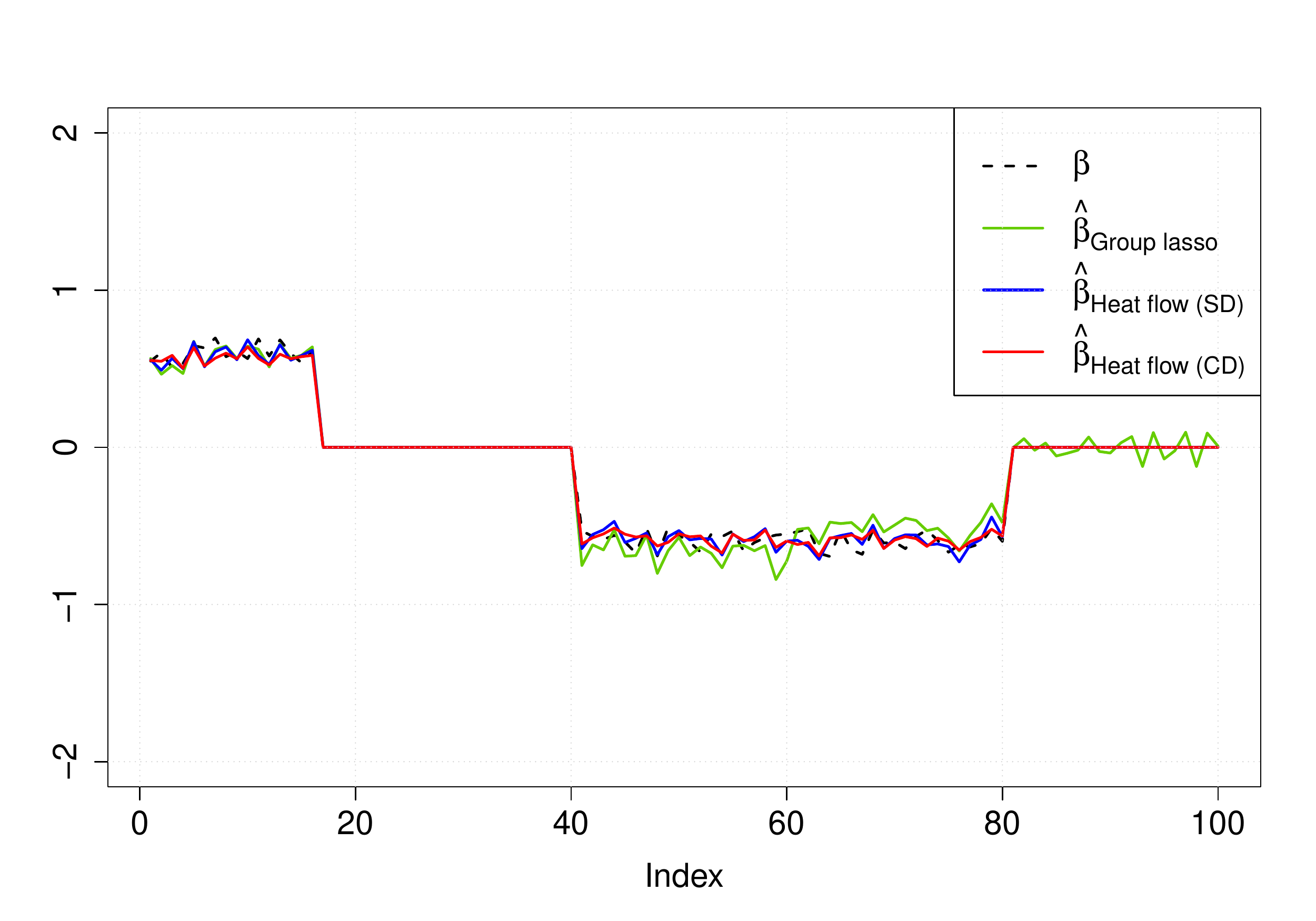}
    \end{tabular}
    \caption{Estimated $\beta$ for different methods in the simulation experiment with block diagonal covariance structure.}
    \label{fig:sim-1}
\end{figure}

\begin{table}[!ht]
    \centering
    \begin{tabular}{c|c|c|c}
        \toprule
                         & Group lasso & Heat flow (SD) & Heat flow (CD) \\
                     \midrule
        Prediction error & 0.03        & 0.02           & 0.03           \\
        Estimation error & 0.84        & 0.50           & 0.49           \\
        Sensitivity      & 1.00        & 1.00           & 1.00           \\
        Specificity      & 0.55        & 1.00           & 1.00           \\
        \bottomrule
    \end{tabular}
    \caption{Comparison when $X \sim \mathcal{N}(0, \Sigma)$ with a block diagonal $\Sigma$ as in \eqref{eq:block_diagonal_sigma}. We take $(\rho_1, \rho_2, \rho_3, \rho_4)^\top = (0.6, 0.9, 0.7, 0.4)^\top$.}
    \label{table:sim-1}
\end{table}

\begin{table}[!ht]
    \centering
    \begin{tabular}{c|c|c|c}
        \toprule
                         & Group lasso & Heat flow (SD) & Heat flow (CD) \\
                     \midrule
        Prediction error & 0.12        & 0.14           & 0.14           \\
        Estimation error & 3.67        & 4.10           & 3.95           \\
        Sensitivity      & 0.45        & 0.77           & 0.39           \\
        Specificity      & 1.00        & 0.98           & 0.98           \\
        \bottomrule
    \end{tabular}
    \caption{Comparison when $X$ is drawn from a Gaussian free field.}
    \label{table:sim-2}
\end{table}

\subsection{Real data}
In this section, we compare our methods against group lasso in terms of test-set performance in four real-world data sets. For group lasso, the group structure is learned by applying spectral clustering on $\hat{L}$, the estimated Laplacian matrix obtained by applying Algorithm~\ref{alg:graph_estimation}, with $k$ set to be the number of eigenvalues of $\hat{L}$ less than $0.01$. For each data set, we use an 80:20 split into training and test sets. The test-set errors are reported in Table~\ref{table:real_data}. We observe comparable performance in all four data sets.

\subsubsection{Application to email spam data}
We consider the well-known spambase data set\footnote{https://archive.ics.uci.edu/ml/datasets/spambase} containing 4601 emails classified as spam/non-spam. There are 57 explanatory variables. We fit a logistic regression model with group lasso and heat flow penalties. We report the test-set misclassification error in the second column of Table~\ref{table:real_data}.

\subsubsection{Application to gene-expression data}
We consider gene expression data from the microarray experiments of mammalian eye tissue samples of \cite{scheetz2006regulation}. The response variable is the expression level of the TRIM32 gene. There are 200 predictor variables corresponding to different gene probes. The sample size is 120. We report the test-set mean-squared error (MSE) in the third column of Table~\ref{table:real_data}.

\subsubsection{Application to climatological data}
From the NCEP/NCAR reanalysis data set, we took the monthly average temperature of the Delhi-NCR region as the response variable. We took monthly average temperature, pressure, precipitation, wind-speed, etc. of $2.5^\circ \times 2.5^\circ$ blocks on the Bay of Bengal and the Arabian Sea as covariates. In total there were 101 such blocks, giving us $p = 606$ explanatory variables. We have these measurements for $n = 886$ months, starting January, 1947 till October 2021. We first removed seasonal variations and fitted a linear trend afterwards as preprocessing steps (as described in \cite{chatterjee2012sparse}). We report the test-set MSE in the fourth column of Table~\ref{table:real_data}.

\subsubsection{Application to stock-market data}
We have data on daily highs of the NIFTY 50 index from the National Stock Exchange (NSE) of India for 49 companies for $n = 2598$ days staring from November 4, 2010 till April 30, 2021. We use the mean index of $9$ companies in the financial sector as our response variable, and use the indices of the rest of the $p = 40$ companies as covariates. We report the test-set MSE in the fifth column of Table~\ref{table:real_data}.

\begin{table}[!htbp]
    \centering
    \begin{tabular}{c|c|c|c|c}
        \toprule
        Method         & spam   & gene   & climate & stock  \\
        \midrule
        Group lasso    & 0.18 & 0.15 & 0.06  & 0.02 \\
        Heat flow (SD) & 0.11 & 0.15 & 0.07  & 0.02 \\
        Heat flow (CD) & 0.11 & 0.15 & 0.07  & 0.03 \\
        \bottomrule
    \end{tabular}
    \caption{Test-set error on real data.}
    \label{table:real_data}
\end{table}

\section{Conclusion}
In this work, we contribute an approach to learning under a group structure on explanatory variables that does not require prior information on the group identities. Our paradigm is motivated by the Laplacian geometry of an underlying network with a commensurate community structure, and proceeds by directly incorporating this into the penalty. In a more general setup, when an underlying graph may not be explicit in the problem description, we demonstrate a procedure to construct such a network based on the available data. Notably, we dispense with the elaborate pre-processing step involving clustering of the variables, spectral or otherwise, which can be computationally resource-intensive. Our paradigm is underpinned by rigorous theorems that guarantee effective performance and provide bounds on its sample complexity. In particular, we demonstrate that in a very wide range of settings, we need to run the heat flow dynamics for a time that is only logarithmic in the problem dimensions. We investigate in detail the interplay of our approach with key statistical physics paradigms such as the GFF and the SBM. We validate our approach by successful application to real-world data from diverse fields including computer science, genetics, climatology and economics.

Our approach opens the avenue to applications of similar dynamical techniques to classical statistical and data analytical problems, that are normally defined as static problems. The inherently local nature of the heat flow and related diffusion dynamics enables us to resolve the relevant constrained optimization problems while being oblivious to the global geometry of the graph (such as a complete understanding of the clustering structure of the variables). In addition to economies of computational resource, such locality is of significance in the context of questions of privacy in data analytical methodologies, a problem that is gaining increasing salience in today's hyper-networked world. On a related note, it would be of interest to enhance our approach to obtain similarly local algorithms that address additional structural features of the explanatory variables, such as smoothness or intra-group sparsity. Yet another intriguing direction would be to explore the interface of our approach and diffusion-mapping based techniques that have been effective for dimension reduction problems, and exploit their interplay to achieve further economy of scale and computational resources. In general, the interplay between the geometric structure provided by the Laplacian, the stochastic structure accorded by models such as the GFF and SBM and the inherent clustering structure of real world datasets raises the possibility of a rich mathematical theory and a suite of associated techniques to evolve.

\section*{Acknowledgements}
S.G. is supported in part by the MOE grants R-146-000-250-133 and R-146-000-312-114. S.S.M. is supported by an INSPIRE Faculty Fellowship from the Department of Science and Technology, Government of India. The authors thank Snigdhansu Chatterjee for pointers to the NCEP/NCAR reanalysis data set.

\bibliographystyle{apalike}
\bibliography{glasso}

\appendix
\renewcommand{\theequation}{S.\arabic{equation}}
\setcounter{equation}{0}

\section{Generalities}

\subsection{The generator of the heat flow}

In this section we demonstrate that the generator of the heat flow in Algorithm 1 is indeed the graph Laplacian. To this end, we denote $X_t$ to be the location of this continuous time Markov Chain at time $t$. Let $f$ be a test function on the graph $G$. 

Let $v \in G$ be a vertex with degree $\deg(v)$. For $\del>0$ small, we proceed to compute $\E[f(X_{t+\del}) | X_t = v]$. The probability of there being multiple jumps of the Markov Chain in time $\del$ is $O(\del^2)$, and therefore, to understand the above expectation to the first order in $\del$, we focus on the situation where there is at most one jump in the time interval $(t,t+\del)$.

The next jump in the Markov Chain occurs when the exponential clock along any of the edges of $G$ incident on $v$ rings. Since these clocks are i.i.d. with parameter 1 each, we  the timing of the next jump is the minimum of $\deg(v)$ many i.i.d. Exponential (1) random variables. The latter random variable is easily verified to be an Exponential ($\deg(v)$) random variable, whose mean is $1/\deg(v)$. 
We have, $\P[\text{No jump in } (t,t+\del)] = \exp(-\del \deg(v))$.  If there is a jump in the time interval $(t,t+\del)$, the Markov Chain moves to a neighbouring vertex of $v$ chosen uniformly at random, each with probability $1/\deg(v)$. 

Therefore, we may write 
\begin{equation}
    \E[f(X_{t+\del}) | X_t = v] = \exp(-\del \deg(v)) \cdot f(v) + (1 - \exp(-\del \deg(v))) \cdot \frac{1}{\deg(v)} \cdot \l( \sum_{u \sim v} f(u) \r).
\end{equation}

This implies that 
\begin{align*} 
    \E[f(X_{t+\del}) &| X_t = v] - f(v) \\
    = & \l(1 - \exp(-\del \deg(v))\r) \cdot \l[\frac{1}{\deg(v)} \cdot \l( \sum_{u \sim v} f(u) \r)   - f(v)\r] \\
    = & \l(1 - \exp(-\del \deg(v))\r) \cdot \frac{1}{\deg(v)} \cdot \l[ \sum_{u \sim v} (f(u) - f(v)) \r]. 
\end{align*}

To the end of computing the generator $\cG$ of this continuous time Markov Chain, we compute
\begin{align*} 
    [\cG f](v) = & \lim_{\del \to 0} \frac{1}{\del} \cdot \l(\E[f(X_{t+\del}) | X_t = v] - f(v) \r) \\
    = & \lim_{\del \to 0} \frac{1}{\del} \cdot \l(1 - \exp(-\del \deg(v))\r) \cdot \frac{1}{\deg(v)} \cdot \l[ \sum_{u \sim v} (f(u) - f(v)) \r] \\ 
    = & \; \deg(v) \cdot \frac{1}{\deg(v)} \cdot \l[ \sum_{u \sim v} (f(u) - f(v)) \r] \\
    = &  \l[ \sum_{u \sim v} (f(u) - f(v)) \r] \\
    = & \; [L f](v),
\end{align*}
where $L$ is the standard (unnormalised) graph Laplacian of $G$.

This completes the proof that the the standard (unnormalised) graph Laplacian $L$ of $G$ is the generator of the continuous time Markov Chain in Algorithm 1.

\subsection{Completely disconnected vs. rarely inter-connected groups} \label{sec:discon-vs-sparscon}

While our theoretical considerations largely focus on the setting where  inter-group connections are absent, for practical purposes, our paradigm is applicable to settings where connections across groups are \textit{rare} but not completely absent; such a scenario being treated as an approximation or a minor deformation of complete disconnection. In the latter setting, it is conceivable that the 0 eigenvalue in the graph Laplacian spectrum has multiplicity only 1; on the other hand there would be a part of the Laplacian spectrum that is very close to 0 but not exactly equal to 0 (for brevity, we will denote it by $\slow$; the full Laplacian spectrum being denoted by $\spec$). Intuitively, this is reflective of the fact that the graph has a group structure that is not fully disconnected, but only rarely connected. If we modified the graph to remove these rare connections across components, these low-lying spectrum of the Laplacian would collapse to 0, and we would be back to the setting of complete disconnection between groups. 

In such a scenario, if the inter-group connections are rare compared to the intra-group connections, we would still expect the rest of the Laplacian spectrum (i.e., $\spec \setminus \slow$) to be well-separated from the above low-lying eigenvalues. As such, our substitute for $\la_g$ would be $\min \l\{ \la : \la \in \spec \setminus \slow       \r\}$. If we denote $\la_\low$  to be $\max \l\{ \la : \la \in \slow   \r\}$, we are operating in the regime where $\la_g \gg \la_\low$. 

In view of these considerations, in the setting of rare but non-zero connections across groups, our heat flow time $\tflow$ needs to be such that $ \tflow \cdot \la_\low $ is small, but $ \tflow \cdot \la_g $ is large. This necessitates a choice of $\tflow$ such that $\frac{1}{\la_g} \ll \tflow \ll \frac{1}{\la_\low}$. Since $\la_\low \ll \la_g$, this enables us to make appropriate choice of the heat flow duration that extend our approach to the setting of rarely connected groups.

\section{Theoretical anlaysis} \label{sec:theory}
In what follows, in the setting of groups / clusters $\{\calC_i\}_{i=1}^K$ we will interchangeably use the notations $T_j=|\calC_j|$ and $T_{\max}=|\calC_{\max}|=\max \{ |\calC_i| : 1\le i \le k \}$.

\subsection{Statements of theoretical results and auxiliary lemmas} \label{sec:results}
\begin{lemma}\label{lem:pen-approx}
    For all $t$ such that
    \[
        G(\beta; t, p) := (p - k) e^{-t\lambda_g} \|\beta^2\|_2 \le \frac{1}{2}\min_{i \in A(\beta)}\frac{\|\beta_{\cC_i}\|_2^2}{T_i},
    \]
    one has
    \[
        |\Lambda_t(\beta) - \Lambda_{\infty}(\beta)| \le G(\beta; t, p)\sum_{i \in A(\beta)} \frac{T_i^{3/2}}{\sqrt{2}\|\beta_{\cC_i}\|_2} + (p - |I(\beta)|) \sqrt{G(\beta; t, p)}.
    \]
    This implies the weaker bound
    \[
        |\Lambda_t(\beta) - \Lambda_{\infty}(\beta)| \le p \sqrt{G(\beta; t, p)}.
    \]
\end{lemma}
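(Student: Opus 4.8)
The plan is to work with $h := e^{-tL}(\beta \odot \beta)$ and its $t \to \infty$ limit $h_\infty := P(\beta \odot \beta)$, where $P = \lim_{t\to\infty} e^{-tL}$ is the orthogonal projection onto $\ker L$. Since $G$ has exactly $k$ components, $\ker L = \mathrm{span}\{\ind_1, \dots, \ind_k\}$ and $P = \sum_{i=1}^k T_i^{-1}\ind_i \ind_i^\top$; hence for $j \in \cC_i$ one gets $(h_\infty)_j = \|\beta_{\cC_i}\|_2^2 / T_i \ge 0$, which recovers $\Lambda_{\infty}(\beta) = \sum_i \sqrt{T_i}\,\|\beta_{\cC_i}\|_2$, the group lasso penalty. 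Writing $\Lambda_t(\beta) - \Lambda_{\infty}(\beta) = \sum_{j=1}^p \big(\sqrt{|h_j|} - \sqrt{(h_\infty)_j}\big)$ reduces the whole statement to a coordinatewise comparison of square roots.

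First I would establish the uniform perturbation estimate $\max_j |h_j - (h_\infty)_j| \le G(\beta;t,p)$. Spectrally, $e^{-tL} - P = \sum_{m : \lambda_m > 0} e^{-t\lambda_m} u_m u_m^\top$ over the $p - k$ nonzero modes, so for each coordinate $j$ one has $|(h - h_\infty)_j| \le \sum_{m:\lambda_m>0} e^{-t\lambda_m}\,|(u_m)_j|\,|u_m^\top(\beta\odot\beta)|$. Bounding each $e^{-t\lambda_m} \le e^{-t\lambda_g}$, each $|(u_m)_j| \le \|u_m\|_2 = 1$, and each $|u_m^\top(\beta\odot\beta)| \le \|\beta\odot\beta\|_2$, and counting the $p-k$ terms, yields precisely $G(\beta;t,p)$. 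This is the step I expect to be the main obstacle to state cleanly: the factor $p-k$ is an artifact of this crude termwise bound, and alongside it I must record the key consequence of the standing hypothesis $G \le \tfrac12 \min_{i \in A(\beta)} \|\beta_{\cC_i}\|_2^2/T_i$, namely that for every active coordinate $(h_\infty)_j \ge 2G$ forces $h_j \ge (h_\infty)_j - G > 0$, so that $|h_j| = h_j$ and $\big||h_j| - (h_\infty)_j\big| = |h_j - (h_\infty)_j| \le G$.

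Next I would split the coordinate sum according to $A(\beta)$. For $j$ in an inactive group, $\|\beta_{\cC_i}\|_2 = 0$ gives $(h_\infty)_j = 0$, whence $\sqrt{|h_j|} = \sqrt{|h_j - (h_\infty)_j|} \le \sqrt G$; summing over the $p - |I(\beta)|$ such coordinates produces the second term. For $j \in \cC_i$ with $i \in A(\beta)$, I would apply the mean value bound $|\sqrt a - \sqrt b| \le |a-b|/\big(2\sqrt{\min(a,b)}\big)$ with $a = h_j$, $b = (h_\infty)_j$; since $a \ge b - G \ge b/2$, we have $2\sqrt{\min(a,b)} \ge \sqrt 2\,\sqrt b$, so $|\sqrt{h_j} - \sqrt{(h_\infty)_j}| \le G/\big(\sqrt 2\,\sqrt{(h_\infty)_j}\big) = G\sqrt{T_i}/\big(\sqrt 2\,\|\beta_{\cC_i}\|_2\big)$. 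Summing the $T_i$ identical terms over $\cC_i$ and then over $i \in A(\beta)$ gives the first term, completing the main inequality.

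Finally, for the weaker bound, I would feed the hypothesis $\|\beta_{\cC_i}\|_2^2 \ge 2G T_i$ back into the first term: $G T_i^{3/2}/\big(\sqrt 2\,\|\beta_{\cC_i}\|_2\big) \le G T_i^{3/2}/\big(\sqrt 2\,\sqrt{2 G T_i}\big) = \tfrac12 \sqrt G\, T_i$, so the first term is at most $\tfrac12 \sqrt G\,|I(\beta)|$. Adding the second term yields $\sqrt G\,\big(p - \tfrac12 |I(\beta)|\big) \le p\sqrt G$, as claimed.
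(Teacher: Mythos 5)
Your proposal is correct and follows essentially the same route as the paper's proof: the same spectral split $e^{-tL}(\beta\odot\beta) = g + \xi$ with the kernel projection, the same crude termwise bound $|\xi_\ell| \le (p-k)e^{-t\lambda_g}\|\beta^2\|_2$, the same active/inactive coordinate split with a mean-value square-root comparison on active coordinates (the paper phrases it via an intermediate point $\theta_\ell \ge g_\ell/2$, you via $\min(a,b) \ge b/2$ --- the same estimate), and the same re-substitution of the standing hypothesis to get the weaker bound $(p - |I(\beta)|/2)\sqrt{G} \le p\sqrt{G}$. Your explicit observation that $h_j > 0$ on active coordinates, so the absolute values are harmless there, is a small clarification the paper leaves implicit in its use of $s(x)=\sqrt{|x|}$, but it does not change the argument.
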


For $B[\beta^*; \epsilon] := \{ \beta \mid n^{-1/2}\|X(\beta - \beta^*)\| \le \epsilon\}$, define
\begin{equation}\label{eq:gammadef}
    \Gamma(\beta^*; \epsilon) := \min_{\beta \in \tilde{B}[\beta^*; \epsilon]}  \frac{1}{2\|\beta^2\|_2}\min_{i \in A(\beta)}\frac{\|\beta_{\cC_i}\|_2^2}{T_i}.
\end{equation}
\begin{corollary}\label{thm:closeness-of-losses}
    Let $\epsilon_0 > 0$ be such that $\Gamma(\beta^*; \epsilon_0) > 0$.
    Then for all large enough $t$ such that $(p - k) e^{-t \lambda_g} \le \Gamma(\beta^*; \epsilon_0)$, we have
    \[
        \max_{\beta \in B[\beta^*; \epsilon_0]} |\Lambda_t(\beta) - \Lambda_{\infty}(\beta)| \le C_{\beta^*, \epsilon_0} p \sqrt{p - k} e^{-t\lambda_g/2},
    \]
    where $C_{\beta^*, \epsilon_0} = \max_{\beta \in B[\beta^*; \epsilon_0]} \|\beta^2\|_2$.
\end{corollary}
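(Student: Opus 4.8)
The plan is to obtain the Corollary as a uniform (over the ball $B[\beta^*;\epsilon_0]$) version of the \emph{weaker} bound supplied by Lemma~\ref{lem:pen-approx}, with the quantity $\Gamma(\beta^*;\epsilon_0)$ from \eqref{eq:gammadef} serving as precisely the device that certifies the hypothesis of that lemma simultaneously for all $\beta$ in the ball. Concretely I would proceed in three steps: (i) show that the standing assumption $(p-k)e^{-t\lambda_g}\le \Gamma(\beta^*;\epsilon_0)$ forces the hypothesis of Lemma~\ref{lem:pen-approx} to hold for \emph{every} $\beta\in B[\beta^*;\epsilon_0]$; (ii) apply the weaker bound of that lemma pointwise; and (iii) bound the resulting $\beta$-dependent factor uniformly and pass to the supremum.

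For step (i), fix $\beta\in B[\beta^*;\epsilon_0]$ with $A(\beta)\neq\emptyset$; the degenerate case $A(\beta)=\emptyset$ (equivalently $\beta=0$, since the $\cC_i$ partition the vertex set) is trivial because both penalties then vanish. Dividing the lemma's hypothesis $G(\beta;t,p)\le \tfrac12 \min_{i\in A(\beta)} \|\beta_{\cC_i}\|_2^2/T_i$ by $\|\beta^2\|_2>0$ rewrites it as
\[
    (p-k)e^{-t\lambda_g} \;\le\; \frac{1}{2\|\beta^2\|_2}\min_{i\in A(\beta)}\frac{\|\beta_{\cC_i}\|_2^2}{T_i}.
\]
The right-hand side is exactly the quantity whose minimum over the ball defines $\Gamma(\beta^*;\epsilon_0)$ in \eqref{eq:gammadef}, hence it is bounded below by $\Gamma(\beta^*;\epsilon_0)$ for every $\beta$ in the ball. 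Therefore the assumed inequality $(p-k)e^{-t\lambda_g}\le\Gamma(\beta^*;\epsilon_0)$ certifies the hypothesis of Lemma~\ref{lem:pen-approx} uniformly over $B[\beta^*;\epsilon_0]$; the assumption $\Gamma(\beta^*;\epsilon_0)>0$ is what makes ``for all large enough $t$'' non-vacuous.

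For steps (ii)--(iii), the weaker bound of Lemma~\ref{lem:pen-approx} now gives, for each such $\beta$,
\[
    |\Lambda_t(\beta)-\Lambda_\infty(\beta)| \;\le\; p\sqrt{G(\beta;t,p)} \;=\; p\sqrt{p-k}\,e^{-t\lambda_g/2}\,\sqrt{\|\beta^2\|_2}.
\]
Taking the maximum over $\beta\in B[\beta^*;\epsilon_0]$ and using $\sqrt{\|\beta^2\|_2}\le \sqrt{C_{\beta^*,\epsilon_0}}$ yields the claimed decay $p\sqrt{p-k}\,e^{-t\lambda_g/2}$ in $t$. The only bookkeeping I would flag is the constant: the computation produces the factor $\sqrt{C_{\beta^*,\epsilon_0}}$, so the bound as stated (with factor $C_{\beta^*,\epsilon_0}$) follows immediately whenever $C_{\beta^*,\epsilon_0}\ge 1$ and in any case is of the same order, and one may simply record the constant in the larger form for convenience.

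I do not anticipate a genuine obstacle, since all the analytic work is already contained in Lemma~\ref{lem:pen-approx}; this Corollary is essentially a packaging step. The one point needing care is the interchange in step (i) of ``for every $\beta$'' with the infimum defining $\Gamma$: one must check that $\Gamma(\beta^*;\epsilon_0)$ is a pointwise lower bound over the ball (immediate from its definition as a minimum) and that the inner minimum over $i\in A(\beta)$ is well-defined, which is guaranteed once the $A(\beta)=\emptyset$ case is disposed of separately. A secondary caveat worth a sentence in the final write-up is reconciling the two balls in the statement ($B[\beta^*;\epsilon_0]$ versus the set $\tilde{B}[\beta^*;\epsilon_0]$ used in \eqref{eq:gammadef}); provided these denote the same prediction-norm ball, the argument above closes verbatim.
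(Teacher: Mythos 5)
Your proposal is correct and follows exactly the route the paper intends: the corollary is stated without a separate proof precisely because it is an immediate consequence of the weaker bound in Lemma~\ref{lem:pen-approx}, with $\Gamma(\beta^*;\epsilon_0)$ serving, just as you argue, as the uniform certificate of that lemma's hypothesis over the ball. Your bookkeeping flags are also legitimate and match the intended reading: the direct computation indeed yields the factor $\sqrt{\|\beta^2\|_2}\le\sqrt{C_{\beta^*,\epsilon_0}}$ rather than $C_{\beta^*,\epsilon_0}$ (so the printed constant is correct only under the harmless convention $C_{\beta^*,\epsilon_0}\ge 1$, or up to replacing $C$ by $\sqrt{C}$), the case $A(\beta)=\emptyset$ is trivial since both penalties vanish at $\beta=0$, and $\tilde{B}[\beta^*;\epsilon]$ in \eqref{eq:gammadef} is a typo for the same prediction-norm ball $B[\beta^*;\epsilon]$.
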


We set 
\begin{equation} \label{eq:lambda_bounds}
    \Psi_j = \frac{1}{n}X_j^\top X_j  \quad \text{and} \quad  \Lambda(X; \eta) = \max_{1 \le j \le k} \sqrt{\|\Psi_j\|_{\op}}\bigg(1 +  \sqrt{\frac{4\log \eta^{-1}}{T_j}}\bigg).
\end{equation}
Then we have: 
\begin{lemma}\label{lem:approx-min}
    Let $\lambda \ge \frac{\sigma}{\sqrt{n}} \Lambda(X; \eta)$, and, for $0 < \epsilon \le \epsilon_0$, let $\hat{\beta}_{t, \lambda} = \hat{\beta}_{t, \lambda}^{(\epsilon)}$ be the minimiser of $F_{t, \lambda}(\beta)$ in $B[\beta^*, \epsilon]$. Assume also that $8 T_{\max} \lambda \|\beta^*\|_{2, 1} \le \epsilon$. Then, with probability at least $1 - 2k\eta$, we have that $\hat{\beta}_{t, \lambda}$ is an approximate minimiser of the group Lasso objective $F_{\infty, \lambda}$ in the sense that
    \[
        F_{\infty, \lambda}(\hat{\beta}_{t, \lambda}) \le \min_{\beta} F_{\infty, \lambda}(\beta) + 2 C_{\beta^*, \epsilon_0} p \sqrt{p - k} e^{-t\lambda_g/2}.
    \]
\end{lemma}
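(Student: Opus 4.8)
The two objectives being compared, $F_{t,\lambda} = \mathcal{L} + \lambda\Lambda_t$ and $F_{\infty,\lambda} = \mathcal{L} + \lambda\Lambda_\infty$ (with $\mathcal{L}(\beta) = \frac{1}{2n}\|y - X\beta\|_2^2$ the least-squares loss and $\Lambda_\infty = \|\cdot\|_{2,1}$ the group Lasso norm), differ only through their penalties, and Corollary~\ref{thm:closeness-of-losses} already bounds $|\Lambda_t - \Lambda_\infty|$ by $D_t := C_{\beta^*,\epsilon_0}\,p\sqrt{p-k}\,e^{-t\lambda_g/2}$ \emph{uniformly over} $B[\beta^*;\epsilon_0]$. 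So I would run an elementary three-term sandwich. Let $\beta^\dagger := \arg\min_\beta F_{\infty,\lambda}(\beta)$ be the global group Lasso minimiser (which exists, since $\|\cdot\|_{2,1}$ makes the objective coercive), and \emph{suppose} $\beta^\dagger \in B[\beta^*;\epsilon]$. Since $\hat\beta_{t,\lambda}$ and $\beta^\dagger$ both lie in $B[\beta^*;\epsilon]\subseteq B[\beta^*;\epsilon_0]$, Corollary~\ref{thm:closeness-of-losses} applies at both points and the minimality of $\hat\beta_{t,\lambda}$ for $F_{t,\lambda}$ over the ball (for which $\beta^\dagger$ is a feasible competitor) gives
\[
F_{\infty,\lambda}(\hat\beta_{t,\lambda}) \le F_{t,\lambda}(\hat\beta_{t,\lambda}) + \lambda D_t \le F_{t,\lambda}(\beta^\dagger) + \lambda D_t \le F_{\infty,\lambda}(\beta^\dagger) + 2\lambda D_t,
\]
which is exactly the asserted bound (the displayed constant absorbing the factor $\lambda$).

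The crux is the containment $\beta^\dagger \in B[\beta^*;\epsilon]$, i.e. $n^{-1/2}\|X(\beta^\dagger-\beta^*)\|\le\epsilon$, and this is where both hypotheses on $\lambda$ enter. I would derive it from the classical ``slow-rate'' basic inequality. Writing $y = X\beta^* + \xi$ and $\Delta = \beta^\dagger - \beta^*$, optimality of $\beta^\dagger$ yields $\frac{1}{2n}\|X\Delta\|_2^2 \le \frac{1}{n}\langle X^\top\xi,\Delta\rangle + \lambda(\|\beta^*\|_{2,1} - \|\beta^\dagger\|_{2,1})$. Splitting the noise term groupwise and applying Cauchy--Schwarz gives $\frac{1}{n}|\langle X^\top\xi,\Delta\rangle| \le \big(\max_j \frac{1}{\sqrt{T_j}}\frac{1}{n}\|X_j^\top\xi\|_2\big)\|\Delta\|_{2,1}$; on the event below the bracket is at most $\lambda/2$, after which $\|\Delta\|_{2,1}\le\|\beta^*\|_{2,1}+\|\beta^\dagger\|_{2,1}$ and a triangle-inequality collapse of the penalty difference yield $\frac{1}{n}\|X\Delta\|_2^2 \lesssim \lambda\|\beta^*\|_{2,1}$. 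The hypothesis $8T_{\max}\lambda\|\beta^*\|_{2,1}\le\epsilon$ is the device that converts this slow rate into $n^{-1/2}\|X\Delta\|\le\epsilon$, certifying the containment.

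The only probabilistic input is the noise event. Each $\frac{1}{n}X_j^\top\xi$ is Gaussian with covariance $\frac{\sigma^2}{n}\Psi_j$, so a Gaussian-vector norm tail bound (mean controlled by $\sqrt{\mathrm{tr}}$, fluctuations by $\|\Psi_j\|_{\op}$) gives, with probability at least $1-2\eta$, $\frac{1}{\sqrt{T_j}}\frac{1}{n}\|X_j^\top\xi\|_2 \le \frac{\sigma}{\sqrt n}\sqrt{\|\Psi_j\|_{\op}}\big(1 + \sqrt{4\log\eta^{-1}/T_j}\big)$. A union bound over the $k$ groups holds with probability at least $1-2k\eta$, and by the definition of $\Lambda(X;\eta)$ the maximised bracket is then at most $\frac{\sigma}{\sqrt n}\Lambda(X;\eta)\le\lambda$ (with the extra room from the factor $4$ inside the root giving the needed $\lambda/2$); this is precisely the lower-bound hypothesis on $\lambda$.

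I expect the main obstacle to be the containment step, and within it the reconciliation of constants: one must check that the $T_{\max}$ factor and the order in which $\epsilon$ appears (first versus second power) in the hypothesis $8T_{\max}\lambda\|\beta^*\|_{2,1}\le\epsilon$ really suffice to keep $\beta^\dagger$ inside $B[\beta^*;\epsilon]$. The sandwich and the Gaussian concentration are routine; the delicate point is structural, namely that $\beta^\dagger$ is an \emph{unconstrained} minimiser while $\hat\beta_{t,\lambda}$ is \emph{constrained} to the ball, so the whole argument hinges on certifying that the group Lasso solution does not escape $B[\beta^*;\epsilon]$ --- a statement that fails without the calibrated smallness of $\lambda\|\beta^*\|_{2,1}$.
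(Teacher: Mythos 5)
Your proposal is correct and follows the paper's proof essentially verbatim: the paper likewise first certifies that the unconstrained group Lasso minimiser lies in $B[\beta^*;\epsilon]$ --- citing the slow-rate bound in Eq.~(3.9) of Lounici et al.\ (2011) together with the hypothesis $8T_{\max}\lambda\|\beta^*\|_{2,1}\le\epsilon$, where you simply re-derive that ingredient from the basic inequality and the same Gaussian tail bound (the paper's citation of Lemma~B.1 of Lounici et al., which yields the noise bracket $\le\lambda$ rather than your claimed $\lambda/2$, an immaterial difference) --- and then runs the identical two-sided penalty-swap sandwich via Corollary~\ref{thm:closeness-of-losses} with the group Lasso solution as the feasible competitor in the ball. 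The two frictions you flag, namely the first-versus-second power of $\epsilon$ in the containment step and the factor $\lambda$ multiplying the $e^{-t\lambda_g/2}$ error term (which the paper's own sandwich silently drops), are present in the paper's proof as well, so they are not gaps in your argument relative to it.
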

Using Lemma~\ref{lem:approx-min}, we can prove an approximate sparsity oracle inequality for $\hat{\beta}_{t, \lambda}$.
\begin{lemma}\label{lem:approx-oracle}
    Under the assumptions of Lemma~\ref{lem:approx-min}, we have with probability at least $1 - 2k\eta$ that
    \begin{align}\label{eq:approx-sparsity-oracle} \nonumber
        \frac{1}{2n}\|X(\hat{\beta}_{t, \lambda} - \beta^*)\|_2^2 &+ \lambda \sum_j \sqrt{T_j} \|\hat{\beta}_{t, \lambda}^j - \beta^j\|_2 \\
                                                                  &\le \frac{1}{2n} \|X(\beta - \beta^*)\|_2^2 + 4 \lambda \sum_{j \in A(\beta)} \sqrt{T_j} \min \{\|\beta^j\|_2 \|\hat{\beta}_{t, \lambda}^j - \beta^j\|_2\} + E,
    \end{align}
    where $E = 2 C_{\beta^*, \epsilon_0} p \sqrt{p - k} e^{-t\lambda_g/2}$.
\end{lemma}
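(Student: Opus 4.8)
The plan is to run the by-now-standard ``basic inequality'' argument for penalised least squares, adapted to the grouped setting, carrying the heat-flow approximation error $E$ along as an additive slack. The starting point is Lemma~\ref{lem:approx-min}, which says that $\hat{\beta}_{t,\lambda}$ is an approximate minimiser of the group Lasso objective $F_{\infty,\lambda}$; evaluating that guarantee at the oracle point $\beta$ gives
\[
    \frac{1}{2n}\|y - X\hat{\beta}_{t,\lambda}\|_2^2 + \lambda \sum_j \sqrt{T_j}\,\|\hat{\beta}_{t,\lambda}^j\|_2 \le \frac{1}{2n}\|y - X\beta\|_2^2 + \lambda\sum_j \sqrt{T_j}\,\|\beta^j\|_2 + E,
\]
where $E = 2C_{\beta^*,\epsilon_0}\,p\sqrt{p-k}\,e^{-t\lambda_g/2}$ is exactly the error term from Lemma~\ref{lem:approx-min}. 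I would then substitute the linear model $y = X\beta^* + w$ (isotropic noise of scale $\sigma$) and expand both squared norms; the $\|w\|_2^2$ terms cancel, leaving the empirical-process cross term $\tfrac1n w^\top X(\hat{\beta}_{t,\lambda} - \beta)$ on the right.

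The second step is to control this noise term on a high-probability event. Writing $\Delta = \hat{\beta}_{t,\lambda}-\beta$ and decomposing by groups, Cauchy--Schwarz gives $|\tfrac1n w^\top X\Delta| \le \sum_j \tfrac1n\|X_j^\top w\|_2\,\|\Delta^j\|_2$. Since $\tfrac1n X_j^\top w$ is mean-zero Gaussian with covariance $\tfrac{\sigma^2}{n}\Psi_j$, a Gaussian norm-concentration bound (using $\mathrm{tr}\,\Psi_j \le T_j\|\Psi_j\|_{\op}$) yields, per group and with probability at least $1-2\eta$, the bound $\tfrac1n\|X_j^\top w\|_2 \le \tfrac{\sigma}{\sqrt n}\sqrt{\|\Psi_j\|_{\op}}\big(\sqrt{T_j} + \sqrt{4\log\eta^{-1}}\big)$. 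A union bound over the $k$ groups, combined with the hypothesis $\lambda \ge \tfrac{\sigma}{\sqrt n}\Lambda(X;\eta)$ and the definition of $\Lambda(X;\eta)$ in \eqref{eq:lambda_bounds}, shows that on an event of probability at least $1-2k\eta$ we have $\tfrac1n\|X_j^\top w\|_2 \le \lambda\sqrt{T_j}$ simultaneously for all $j$. Feeding this into the expanded basic inequality produces the template
\[
    \frac{1}{2n}\|X(\hat{\beta}_{t,\lambda}-\beta^*)\|_2^2 + \lambda\sum_j\sqrt{T_j}\|\hat{\beta}_{t,\lambda}^j\|_2 \le \frac{1}{2n}\|X(\beta-\beta^*)\|_2^2 + \lambda\sum_j\sqrt{T_j}\|\beta^j\|_2 + \lambda\sum_j\sqrt{T_j}\|\Delta^j\|_2 + E.
\]

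The final step is the purely deterministic triangle-inequality bookkeeping. I would split each group sum according to whether $j\in A(\beta)$ or $j\notin A(\beta)$. For $j\notin A(\beta)$ we have $\beta^j=0$, so $\|\hat{\beta}_{t,\lambda}^j\|_2 = \|\Delta^j\|_2$, and the penalty term cancels against (a fraction of) the inactive noise contribution, leaving the inactive part of the full penalty $\lambda\sum_j\sqrt{T_j}\|\Delta^j\|_2$ on the left. For $j\in A(\beta)$ I would invoke the reverse triangle inequality in its two forms, $\|\beta^j\|_2 - \|\hat{\beta}_{t,\lambda}^j\|_2 \le \|\Delta^j\|_2$ and $\|\beta^j\|_2 - \|\hat{\beta}_{t,\lambda}^j\|_2 \le \|\beta^j\|_2$, which combine to $\|\beta^j\|_2 - \|\hat{\beta}_{t,\lambda}^j\|_2 \le \min\{\|\beta^j\|_2, \|\Delta^j\|_2\}$; this is precisely the origin of the $\min\{\cdot,\cdot\}$ on the right-hand side. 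Collecting the active and inactive contributions then reassembles the full penalty on the left and leaves $4\lambda\sum_{j\in A(\beta)}\sqrt{T_j}\min\{\|\beta^j\|_2,\|\Delta^j\|_2\}$ on the right, with $E$ passing through untouched.

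I expect the main obstacle to lie in this last step's constant-tracking rather than in any conceptual point. The clean retention of the \emph{full} penalty $\sum_j$ on the left while only the active groups survive on the right hinges on the ratio between $\lambda$ and the effective noise level $\tfrac1n\|X_j^\top w\|_2$: one needs this ratio to leave a strictly positive fraction of the inactive penalty on the left after cancelling the inactive noise. This is the grouped analogue of the standard ``cone''/compatibility manipulation and is exactly where the numerical constant $4$ is generated. Keeping that factor consistent with the stated $\tfrac{1}{2n}$ normalisation and the error term $E$, and checking that the min-refinement survives the collection step, is the only genuinely delicate bookkeeping; the entire probabilistic content sits in the one-line concentration bound of the second step.
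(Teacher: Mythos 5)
Your proposal follows essentially the same route as the paper's proof: the approximate basic inequality from Lemma~\ref{lem:approx-min}, per-group Cauchy--Schwarz on the cross term $\tfrac{1}{n}\varepsilon^\top X(\hat{\beta}_{t,\lambda}-\beta)$, the Gaussian concentration bound and union bound over the $k$ events $\{n^{-1}\|X_j^\top \varepsilon\|_2 \le \lambda\sqrt{T_j}\}$ (the paper invokes Lemma B.1 of Lounici et al.\ here, with exactly your simplified sufficient condition recovering $\Lambda(X;\eta)$), and finally the triangle-inequality bookkeeping from the group-Lasso oracle argument of Lounici et al., with $E$ carried through additively. The one delicate point you flag---whether a strictly positive fraction of the inactive penalty survives on the left when the noise event is at the full level $\lambda\sqrt{T_j}$ rather than $\lambda\sqrt{T_j}/2$---is real but is equally glossed in the paper's own ``Combining we get'' step, so your account matches the paper's proof in both substance and level of detail.
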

The approximate sparsity oracle inequality will give us a prediction consistency result. Without any further assumptions we have a slow-rate result. For getting faster rates we assume a restricted eigenvalue property.

We now re-state our main theorem from the main text.
\begin{theorem}\label{thm:prediction-consistency-app}
    Let $\lambda \ge \frac{\sigma}{\sqrt{n}} \Lambda(X; \eta)$, and, for $0 < \epsilon \le \epsilon_0$, let $\hat{\beta}_{t, \lambda} = \hat{\beta}_{t, \lambda}^{(\epsilon)}$ be the minimiser of $F_{t, \lambda}(\beta)$ in $B[\beta^*, \epsilon]$. Let $8 |\cmax| \lambda \|\beta^*\|_{2, 1} \le \epsilon$.
    Then  with probability at least $1 - 2k\eta$, we have
    \begin{equation} \label{eq:PC_slow-app}
        \frac{1}{n}\|X(\hat{\beta}_{t, \lambda} - \beta^*)\|_2^2 = O(\|\beta^*\|_{2, 1} \lambda |\cmax| + p^{3/2} e^{-t\lambda_g/2}).
    \end{equation}    
    If we further assume $\RE(s)$ with parameter $\kappa$, then
    \begin{equation} \label{eq:PC_fast-app}
        \frac{1}{n}\|X(\hat{\beta}_{t, \lambda} - \beta^*)\|_2^2 = O\bigg(\frac{s\lambda^2 |\cmax|}{\kappa^2} + p^{3/2} e^{-t\lambda_g/2}\bigg).
    \end{equation}
\end{theorem}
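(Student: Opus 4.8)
The plan is to read off both bounds from the approximate sparsity oracle inequality of Lemma~\ref{lem:approx-oracle}, evaluated at the comparator $\beta=\beta^*$. With this choice the term $\frac{1}{2n}\|X(\beta-\beta^*)\|_2^2$ on the right-hand side vanishes; writing $\Delta=\hat\beta_{t,\lambda}-\beta^*$ and $A=A(\beta^*)$, the inequality becomes
\begin{equation*}
\frac{1}{2n}\|X\Delta\|_2^2 + \lambda\sum_{j}\sqrt{T_j}\,\|\Delta^j\|_2 \le 4\lambda\sum_{j\in A}\sqrt{T_j}\,\min\{\|\beta^{*j}\|_2,\|\Delta^j\|_2\} + E,
\end{equation*}
with $E=2C_{\beta^*,\epsilon_0}\,p\sqrt{p-k}\,e^{-t\lambda_g/2}$. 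Since $\sqrt{p-k}\le\sqrt{p}$, we have $E=O(p^{3/2}e^{-t\lambda_g/2})$, exactly the heat-flow contribution in \eqref{eq:PC_slow-app}--\eqref{eq:PC_fast-app}; everything else is standard group-lasso bookkeeping applied to the remaining terms.

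For the slow rate \eqref{eq:PC_slow-app} I would bound $\min\{\|\beta^{*j}\|_2,\|\Delta^j\|_2\}\le\|\beta^{*j}\|_2$, drop the nonnegative penalty term on the left, and collect the group weights using $\sqrt{T_j}\le\sqrt{|\cmax|}$. This yields $\frac{1}{2n}\|X\Delta\|_2^2 \le 4\lambda\sqrt{|\cmax|}\sum_{j\in A}\|\beta^{*j}\|_2 + E$, and since $\sqrt{|\cmax|}\le|\cmax|$ and $\sum_{j\in A}\|\beta^{*j}\|_2\le\|\beta^*\|_{2,1}$, the right-hand side is $O(\lambda|\cmax|\|\beta^*\|_{2,1}+p^{3/2}e^{-t\lambda_g/2})$, as claimed.

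For the fast rate \eqref{eq:PC_fast-app} I would instead take $\min\{\cdots\}\le\|\Delta^j\|_2$ and move the over-$A$ part of the penalty across, obtaining
\begin{equation*}
\frac{1}{2n}\|X\Delta\|_2^2 + \lambda\sum_{j\notin A}\sqrt{T_j}\,\|\Delta^j\|_2 \le 3\lambda\sum_{j\in A}\sqrt{T_j}\,\|\Delta^j\|_2 + E.
\end{equation*}
I would then split on whether the RE cone condition $\sum_{j\notin A}\sqrt{T_j}\|\Delta^j\|_2\le 3\sum_{j\in A}\sqrt{T_j}\|\Delta^j\|_2$ holds. If it fails, the displayed inequality forces $\frac{1}{2n}\|X\Delta\|_2^2<E$ and the bound is immediate. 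If it holds, then $\Delta$ is admissible for Property $\RE(s)$ with its stated constant, giving $\|\Delta_A\|_2\le\kappa^{-1}n^{-1/2}\|X\Delta\|_2$; combined with the Cauchy--Schwarz estimate $\sum_{j\in A}\sqrt{T_j}\|\Delta^j\|_2\le\sqrt{s|\cmax|}\,\|\Delta_A\|_2$ this reduces the inequality to a quadratic in $u:=n^{-1/2}\|X\Delta\|_2$ of the form $\tfrac12 u^2\le 3\lambda\kappa^{-1}\sqrt{s|\cmax|}\,u+E$, whose solution gives $u^2=O(s\lambda^2|\cmax|/\kappa^2+E)$. Taking the worse of the two cases yields \eqref{eq:PC_fast-app}.

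The analytically novel content of the theorem lies entirely in the auxiliary lemmas leading to the oracle inequality, so the main thing to watch here is the threading of the additive error $E$: one must invoke Corollary~\ref{thm:closeness-of-losses} (equivalently the hypothesis $(p-k)e^{-t\lambda_g}\le\Gamma(\beta^*;\epsilon_0)$) to ensure $E$ is genuinely small, keep track of the admissibility requirement $|A|\le s$ in $\RE(s)$, and handle the boundary subtlety that $\hat\beta_{t,\lambda}$ is only the constrained minimiser over $B(\beta^*,\epsilon)$, so that the oracle inequality of Lemma~\ref{lem:approx-oracle} applies on the nose. Apart from this, the derivation is the standard slow-rate/fast-rate dichotomy.
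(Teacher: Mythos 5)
Your proposal is correct and follows essentially the same route as the paper: the paper likewise obtains the slow rate by taking $\beta=\beta^*$ in the approximate sparsity oracle inequality of Lemma~\ref{lem:approx-oracle}, and derives the fast rate from $\RE(s)$ ``as in the proof of Theorem 3.1'' of \cite{lounici2011oracle}, which is exactly the cone-splitting and quadratic-inequality argument you spell out. Your explicit handling of the additive heat-flow error $E$ through the two cases (cone condition holding or failing) is precisely the adaptation of the Lounici et al.\ argument that the paper leaves implicit.
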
 

We further demonstrate that the RE($s$) property above for the random design matrix $X$ can be related, in the crucial setting of Gaussian covariates, to a similar property for the \textit{deterministic} covariance matrix $\Sigma$ of the rows of $X$. The latter can often be much easier to verify; for instance, it may be easily seen to hold as soon as $\Sigma$ is well-conditioned. This is captured in the following lemma, re-stated from the main text, along with an explicit  numerical lower bound.

\begin{lemma}\label{lem:res-eig-app}
    Suppose that $\Sigma$ satisfies $\RE(s)$ with parameter $\kappa_{\Sigma}(s) > 0$. Let the rows of $X$ be i.i.d. $N(0, \Sigma)$. Assume that
    \begin{equation}
        n \ge (36 \cdot 8)^2 \frac{(\rho(\Sigma))^2 s |\cmax| \log p}{(\kappa_{\Sigma}(s))^2}.
    \end{equation}
    Then $\frac{X^\top X}{n}$ satisfies $\RE(s)$ with parameter $\kappa_{\Sigma}(s)/8$ with probability at least $1 - c'\exp(-cn)$ for some constants $c, c' > 0$.
\end{lemma}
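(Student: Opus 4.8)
The plan is to transfer the restricted eigenvalue bound from the population matrix $\Sigma$ to the empirical Gram matrix $X^\top X/n$ by combining a one-sided uniform lower bound for the Gaussian design with the geometry of the RE cone. Writing $X = Z\Sigma^{1/2}$ where $Z$ has i.i.d.\ $N(0,1)$ entries, the core probabilistic ingredient is to show that, on an event of probability at least $1 - c'\exp(-cn)$, one has simultaneously for every $\Delta \in \R^p$
\[
    \frac{\|X\Delta\|_2}{\sqrt{n}} \ \ge\ \frac{1}{4}\,\|\Sigma^{1/2}\Delta\|_2 \ -\ 9\,\rho(\Sigma)\sqrt{\frac{\log p}{n}}\;\|\Delta\|_1.
\]
This is exactly the Gaussian-design estimate of Raskutti--Wainwright--Yu, which I would reproduce via Gordon's Gaussian minimax comparison inequality (to lower bound $\inf_{\Delta}\|Z\Sigma^{1/2}\Delta\|_2$ over sets of fixed $\|\Sigma^{1/2}\Delta\|_2$) together with concentration of measure for $1$-Lipschitz functions of a Gaussian vector. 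Upgrading the fixed-radius estimate to a bound uniform in $\Delta$ over the non-compact cone is handled by a peeling argument across dyadic scales of the ratio $\|\Delta\|_1 / \|\Sigma^{1/2}\Delta\|_2$.

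The remaining steps are deterministic manipulations confined to the RE cone $\mathcal{K} = \{\Delta : \sum_{j\notin A}\sqrt{|\calC_j|}\|\Delta^j\| \le 3\sum_{j\in A}\sqrt{|\calC_j|}\|\Delta^j\|,\ |A|\le s\}$. First, since $\|\Delta^j\|_1 \le \sqrt{|\calC_j|}\|\Delta^j\|_2$ by Cauchy--Schwarz, the ordinary $\ell_1$ norm is dominated by the grouped norm $\|\Delta\|_{2,1} := \sum_j \sqrt{|\calC_j|}\|\Delta^j\|_2$; because the slack term enters with a minus sign, I may freely replace $\|\Delta\|_1$ by $\|\Delta\|_{2,1}$ in the bound above. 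Second, the cone constraint gives $\|\Delta\|_{2,1} \le 4\sum_{j\in A}\sqrt{|\calC_j|}\|\Delta^j\|_2 \le 4\sqrt{s|\cmax|}\,\|\Delta_A\|_2$, the last step again by Cauchy--Schwarz over the at most $s$ active groups, each of size at most $|\cmax|$. Third, the population hypothesis $\RE(s)$ for $\Sigma$ reads $\|\Sigma^{1/2}\Delta\|_2 \ge \kappa_\Sigma(s)\,\|\Delta_A\|_2$ on $\mathcal{K}$.

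Combining these three facts inside the uniform lower bound yields, for every $\Delta \in \mathcal{K}$,
\[
    \frac{\|X\Delta\|_2}{\sqrt{n}} \ \ge\ \Bigl(\frac{\kappa_\Sigma(s)}{4} - 36\,\rho(\Sigma)\sqrt{\tfrac{s|\cmax|\log p}{n}}\Bigr)\|\Delta_A\|_2.
\]
The stated sample-size hypothesis $n \ge (36\cdot 8)^2\,\rho(\Sigma)^2 s|\cmax|\log p/\kappa_\Sigma(s)^2$ is precisely what forces $36\,\rho(\Sigma)\sqrt{s|\cmax|\log p/n} \le \kappa_\Sigma(s)/8$, so the bracketed factor is at least $\kappa_\Sigma(s)/8$. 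On the same high-probability event this delivers the claimed $\RE(s)$ property for $X^\top X/n$ with parameter $\kappa_\Sigma(s)/8$, and the constant $(36\cdot 8)^2$ emerges exactly from matching $c_1 = 1/4$, $c_2 = 9$ to the target factor $1/8$.

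I expect the main obstacle to be the uniform-in-$\Delta$ control in the first step: a pointwise Gaussian lower bound is routine, but promoting it to a statement that holds simultaneously over the entire cone requires the Gordon/peeling machinery, and the $\rho(\Sigma)$-dependent slack term must be tracked carefully so that the constants reconcile with the stated threshold. The cone and Cauchy--Schwarz estimates in the later steps are elementary by comparison.
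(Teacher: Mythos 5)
Your proposal is correct and follows essentially the same route as the paper: the paper also reduces the claim to Theorem 1 of Raskutti--Wainwright--Yu (invoking the bound $\frac{\|X\Delta\|}{\sqrt{n}} \ge \frac{1}{4}\|\Sigma^{1/2}\Delta\| - 9\rho(\Sigma)\sqrt{\log p/n}\,\|\Delta\|_1$ as a black box rather than re-deriving it via Gordon's inequality and peeling, as you sketch), then uses the same two Cauchy--Schwarz steps and the cone constraint to get $\|\Delta\|_1 \le 4\sqrt{s|\cmax|}\,\|\Delta_A\|$, and the same arithmetic $\kappa_\Sigma(s)/4 - \kappa_\Sigma(s)/8 = \kappa_\Sigma(s)/8$ under the stated sample-size condition. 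Your observation that replacing $\|\Delta\|_1$ by the larger norm is harmless because it appears with a minus sign is exactly the implicit step in the paper's chain of inequalities.
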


\subsection{Proofs of the theoretical results}\label{sec:proof}
\begin{proof}[Proof of Lemma~\ref{lem:pen-approx}]
    We begin with the decomposition
    \[
        e^{-tL}(\beta \odot \beta) = \sum_i e^{-t\lambda_i} \langle v_i, \beta^2\rangle v_i. 
    \]
    For the eigenspace corresponding to $0$, we choose the basis $\{\mathbf{1}_{\cC_i} / \sqrt{|\cC_i|}\}_{i = 1}^k$. Thus we may write
    \[
        e^{-tL}(\beta \odot \beta) = g + \xi,
    \]
    where
    \[
        g = \sum_{i = 1}^k \frac{\|\beta_{\cC_i}\|_2^2}{|\cC_i|}\mathbf{1}_{\cC_i} \quad \text{ and } \quad \xi = \sum_{i > k} e^{-t\lambda_i} \langle v_i, \beta^2\rangle v_i.
    \]
    Also note that $\Lambda_{\infty}(\beta) = \langle \Phi(g), \mathbf{1} \rangle$. Therefore
    \begin{align*}
        |\Lambda_t(\beta) & - \Lambda_{\infty}(\beta)|                          \\
                          & = |\langle \Phi(g + \xi) - \Phi(g), \mathbf{1}\rangle | \\
                          & \le \sum_{\ell = 1}^p |\sqrt{|g_\ell + \xi_\ell|} - \sqrt{|g_\ell|}|  \\
                          & \le \sum_{\ell \in I(\beta)} \frac{|\xi_\ell|}{2\sqrt{\theta_\ell}} + \sum_{\ell \notin I(\beta)} \sqrt{|\xi_\ell|} \quad (\text{where }\theta_\ell = \alpha_\ell |g_\ell + \xi_\ell| + (1 - \alpha_\ell) |g_\ell|, \alpha_\ell \in [0, 1])
    \end{align*}
    Now
    \[
        \xi_\ell = \sum_{i > k} e^{-t\lambda_i} \langle v_i, \beta^2 \rangle v_{i, \ell}.
    \]
    We have the following straightforward uniform bound
    \[
        |\xi_\ell| \le \sum_{i > k} e^{-t \lambda_g} |\langle v_i, \beta^2 \rangle| \le (p - k) e^{-t\lambda_g} \|\beta^2\|_2 = G(\beta; t, p).
    \]
    Now
    \[
        g_{\ell} = \sum_{i = 1}^k \frac{\|\beta_{\cC_i}\|_2^2}{|\cC_i|}\mathbf{1}_{\cC_i}(\ell).
    \]
    Assume $t$ is such that
    \[
        G(\beta; t, p) \le \frac{1}{2}\min_{i \in A(\beta)}\frac{\|\beta_{\cC_i}\|_2^2}{|\cC_i|}.
    \]
    Then, clearly, $\theta_\ell \ge g_\ell/2$ for all $\ell \in I(\beta)$. It follows that
    \[
        |\Lambda_t(\beta) - \Lambda_{\infty}(\beta)| \le G(\beta; t, p) \sum_{\ell \in I(\beta)} \frac{1}{\sqrt{2 g_\ell}} + (p - |I(\beta)|) \sqrt{G(\beta; t, p)}.
    \]
    Now note that
    \begin{align*}
        \sum_{\ell \in I(\beta)} \frac{1}{\sqrt{g_\ell}} &= \sum_{\ell \in I(\beta)} \sum_{i \in A(\beta)} \frac{\sqrt{|\cC_i|}}{\|\beta_{\cC_i}\|_2}\mathbf{1}_{\cC_i}(\ell) \\
                                                         &= \sum_{i \in A(\beta)} \frac{|\cC_i|^{3/2}}{\|\beta_{\cC_i}\|_2}.
    \end{align*}
    This completes the proof of the first bound. Now, under our assumptions on $t$, we have
    \begin{align*}
        \sum_{i \in A(\beta)} \frac{|\cC_i|^{3/2}}{\|\beta_{\cC_i}\|_2}                   &\le \sum_{i \in A(\beta)} \frac{|\cC_i|}{\sqrt{2 G(\beta; t, p)}} \\
                                                                                          &= \frac{|I(\beta)|}{\sqrt{2 G(\beta; t, p)}}.
    \end{align*}
    Combining this with the first bound, we get
    \[
        |\Lambda_t(\beta) - \Lambda_{\infty}(\beta)| \le (p - |I(\beta)|/2) \sqrt{G(\beta; t, p)} \le p \sqrt{G(\beta; t, p)}. 
    \]
    This completes the proof.
\end{proof}

\begin{proof}[Proof of Lemma~\ref{lem:approx-min}]
    For our choice of $\lambda$, any group Lasso solution $\hat{\beta}_{\infty, \lambda}$ satisfies (see Eq. (3.9) of \cite{lounici2011oracle})
    \[
        \frac{1}{n}\|X(\hat{\beta}_{\infty, \lambda} - \beta^*)\|_{2, 1} \le 8 \lambda \sqrt{T_{\max}} \|\beta^*\|_{2, 1}.
    \]
    Thus under our assumptions, a group Lasso solution $\hat{\beta}_{\infty, \lambda}$ will lie inside $B[\beta^*; \epsilon]$. Now we have
    \begin{align*}
        F_{\infty, \lambda}(\hat{\beta}_{t, \lambda}) &= F_{t, \lambda}(\hat{\beta}_{t, \lambda}) + \Lambda_\infty(\hat{\beta}_{t, \lambda}) - \Lambda_t(\hat{\beta}_{\lambda}) \\
                                                      &\le F_{t, \lambda}(\hat{\beta}_{\infty, \lambda}) + C_{\beta^*, \epsilon_0} p \sqrt{p - k} e^{-t\lambda_g/2} \\
                                                      &= F_{\infty, \lambda}(\hat{\beta}_{\infty, \lambda}) + \Lambda_t(\hat{\beta}_{\infty, \lambda}) - \Lambda_\infty(\hat{\beta}_{\infty, \lambda}) + C_{\beta^*, \epsilon_0} p \sqrt{p - k} e^{-t\lambda_g/2} \\
                                                      &\le F_{\infty, \lambda}(\hat{\beta}_{\infty, \lambda}) + 2 C_{\beta^*, \epsilon_0} p \sqrt{p - k} e^{-t\lambda_g/2}.
    \end{align*}
    This completes the proof.
\end{proof}
\begin{proof}[Proof of Lemma~\ref{lem:approx-oracle}]
    By Lemma~\ref{lem:approx-min}, for any $\beta$,
    \begin{equation}\label{eq:approx-basic}
        F_{\infty, \lambda}(\hat{\beta}_{t, \lambda}) \le F_{\infty, \lambda}(\beta) + E.
    \end{equation}
    This may be thought of as an approximate ``basic inequality'' for the estimator $\hat{\beta}_{t, \lambda}$. Now we use the arguments used in the proof of the sparsity oracle inequality for group Lasso in \cite{lounici2011oracle}. We can rewrite \eqref{eq:approx-basic} as
    \begin{align*}
        \frac{1}{2n}\|X(\hat{\beta}_{t, \lambda} &- \beta^*)\|_2^2\\ &\le \frac{1}{2n}\|X(\beta - \beta^*)\|_2^2 + \frac{1}{n} \varepsilon^\top X(\hat{\beta}_{t, \lambda} - \beta) + \lambda \sum_{j} \sqrt{T_j} (\|\hat{\beta}^j\|_2 - \|\hat{\beta}_{t, \lambda}^j\|_2) + E. 
    \end{align*}
    By Cauchy-Schwartz, we have
    \[
        \varepsilon^\top X (\hat{\beta}_{t, \lambda} - \beta) \le \sum_j \|\varepsilon^\top X_j\|_2 \|\hat{\beta}_{t, \lambda}^j - \beta^j\|_2.
    \]
    Consider the events $\mathcal{A}_j = \{n^{-1}\|\varepsilon^\top X_j\|_2 \le \lambda \sqrt{T_j}\}$. Since $n^{-1}X_j^\top\varepsilon \sim N\bigg(0, \sigma^2 \frac{X_j^\top X_j}{n}\bigg)$, we have using Lemma B.1 of \cite{lounici2011oracle} that
    \[
        \P(\mathcal{A}_j^c) \le 2\eta
    \]
    provided
    \[
        \lambda \ge \frac{\sigma}{\sqrt{n}} \sqrt{\frac{1}{T_j}(\tr(\Psi_j) + 2 \|\Psi_j\|_{\op} (2 \log \eta^{-1} + \sqrt{T_j \log \eta^{-1}}))}.
    \]
    A simpler sufficient condition for this is
    \[
        \lambda \ge \frac{\sigma}{\sqrt{n}} \sqrt{\|\Psi_j\|_{\op}}\bigg(1 +  \sqrt{\frac{4\log \eta^{-1}}{T_j}}\bigg).
    \]
    Let
    \[
        \Lambda(X; \eta) = \max_{1 \le j \le k} \sqrt{\|\Psi_j\|_{\op}}\bigg(1 +  \sqrt{\frac{4\log \eta^{-1}}{T_j}}\bigg).
    \]
    Thus, if $\lambda \ge \frac{\sigma}{\sqrt{n}}\Lambda(X; \eta)$, then with probability at least $1 - 2k\eta$, we have
    \[
        \frac{1}{n}\varepsilon^\top X (\hat{\beta}_{t, \lambda} - \beta) \le \lambda \sum_j \sqrt{T_j} \|\hat{\beta}_{t, \lambda}^j - \beta^j\|_2
    \]
    Combining we get that with probability at least $1 - 2k\eta$, we have
    \begin{align*}
        \frac{1}{2n}\|X(\hat{\beta}_{t, \lambda} - \beta^*)\|_2^2 &+ \lambda \sum_j \sqrt{T_j} \|\hat{\beta}_{t, \lambda}^j - \beta^j\|_2 \\
                                                                  &\le \frac{1}{2n} \|X(\beta - \beta^*)\|_2^2 + 4 \lambda \sum_{j \in A(\beta)} \sqrt{T_j} \min \{\|\beta^j\|_2 \|\hat{\beta}_{t, \lambda}^j - \beta^j\|_2\} + E.
    \end{align*}
    This is the desired approximate sparsity oracle inequality \eqref{eq:approx-sparsity-oracle}.
\end{proof}

\begin{proof}[Proof of Theorem~\ref{thm:prediction-consistency-app}]
    The slow-rate prediction consistency result follows immediately from \eqref{eq:approx-sparsity-oracle} by taking $\beta = \beta^*$. The fast-rate result follows from the $\RE(s)$ assumption as in proof of Theorem 3.1 in \cite{lounici2011oracle}.
\end{proof}

\begin{proof}[Proof of Lemma~\ref{lem:res-eig-app}]
    By Cauchy-Schwartz, we have
    \[
        \sum_{i \in G_j} |\Delta_i| \le \sqrt{T_j} \|\Delta^j\|.
    \]
    Thus
    \begin{align*}
        \sum_{j \notin A}\sum_{i \in G_j} |\Delta_i| &\le \sum_{j \notin A} \sqrt{T_j} \|\Delta^j\| \\
                                                     &\le 3 \sum_{j \in A} \sqrt{T_j} \|\Delta^j\|. 
    \end{align*}
    Hence
    \[
        \|\Delta\|_1 \le 4 \sum_{j \in A} \sqrt{T_j} \|\Delta^j\|.
    \]
    Another application of Cauchy-Schwartz gives
    \[
        \|\Delta\|_1 \le 4 \sqrt{\sum_{j \in A} T_j} \|\Delta_A\| \le 4 \sqrt{s T_{\max}} \|\Delta_A\|.
    \]
    Now using Theorem 1 of \cite{raskutti10a} we get that with probability at least $1 - c'\exp(-cn)$,
    \begin{align*}
        \frac{\|X \Delta\|}{\sqrt{n}} &\ge \frac{1}{4}\|\Sigma^{1/2}\Delta\| - 9 \rho(\Sigma) \sqrt{\frac{\log p}{n}}\|\Delta\|_1 \\
                                      &\ge \bigg(\frac{\kappa_{\Sigma}(s)}{4} - 36 \rho(\Sigma) \sqrt{\frac{s T_{\max}\log p}{n}}\bigg)\|\Delta_A\| \\
                                      &\ge \frac{\kappa_{\Sigma}(s)}{8} \|\Delta_A\|,
    \end{align*}
    provided
    \[
        n \ge (36 \cdot 8)^2 \frac{(\rho(\Sigma))^2 s T_{\max} \log p}{(\kappa_{\Sigma}(s))^2}.
    \]
    This completes the proof.
\end{proof}

\section{Analysis of sample complexity and prediction guarantees for random designs}    \label{sec:analysis_models}

\subsection{Prediction guarantees and sample complexity bounds}  \label{sec:recovery-app}
In this section, we will demonstrate quantitative  guarantees for the prediction error and sample complexity in our heat flow based approach for key models of random designs with a group structure. These include, in particular, GFFs on typical clustered networks and Gaussian designs based on SBMs. To this end, we appeal to Theorem \ref{thm:prediction-consistency-app}. As we shall see below, both settings satisfy the RE($s$) property, so via Theorem \ref{thm:prediction-consistency-app} and Lemma \ref{lem:res-eig-app}, we will obtain concrete prediction error guarantees as well as bounds on the sample complexity as soon as we can bound the quantities $\la, \rho(\S)$ and $\kappa_\S(s)$. 

We first turn our attention to the quantity $\la$. To this end, we invoke \eqref{eq:lambda_bounds} and Lemma \ref{lem:approx-min}.  In the present article, we will content ourselves with a polynomial decay of probability, which implies that the quantity $\eta=O(n^{-\a})$ for some $\a$. Since the maximal group size $|\calC_{\max}| \ge 1$, we have $\l( 1+ \sqrt{4 \log \eta^{-1}}{|\calC_{\max}|}| \r)=O(\sqrt{\log n})$.  On the other hand, we have the bound $\max_j \|\Psi_j\|_{\mathrm op}=\max_j \|\frac{1}{n}X_j^TX_j\|_{\mathrm op}=O_P(\s_{\max}(\S))$, where $\s_{\max}(\S)$ is the maximal singular value of the population covariance matrix $\S$.    \eqref{eq:lambda_bounds} therefore implies that $\L(X;\eta)=O_P(\s_{\max}(\S)\sqrt{\log n})$. As a result, Lemma \ref{lem:approx-min} suggests that we consider $\la \gtrsim  \s_{\max}(\S) \sqrt{\frac{\log n}{n}}$.

We next observe that if $\s_{\max}(\S)$ and $\s_{\min}(\S)$ are respectively the maximum and minimum singular values of the population covariance matrix $\S$, then $\rho(\S) = \max_i \S_{ii} \le \s_{\max}(\S)$ and $\kappa_\S(s) \ge \s_{\min}(\S)$, which are direct consequences of the definitions of the quantities in question. In view of  
Theorem \ref{thm:prediction-consistency-app} and, in particular \eqref{eq:PC_fast-app}, this leads to a somewhat simplified prediction guarantee of 
\begin{equation} \label{eq:PG-app}
    \frac{1}{n}\|X(\hat{\beta}_{t, \lambda} - \beta^*)\|_2^2 = O_P\l(s\; |\calC_{\max}| \cdot \frac{\s_{\max}(\S)}{\s_{\min}(\S)^2} \cdot \frac{\log n}{n} + p^{3/2} e^{-t\lambda_g/2}  \r) . 
\end{equation} 
Our goal here is to understand the order of the time $\tflow$ and the step count $\Ns$ (in terms of the other parameters of the problem) till which we need to run our heat flow based algorithm in order to achieve a desired accuracy.  We will  bifurcate our analysis into two related sections.

\subsubsection{Bounds on \texorpdfstring{$\tflow$}{} and \texorpdfstring{$\Ns$}{} for given \texorpdfstring{$n, p$}{}}

First, given $n,p$, we will demonstrate the order of $\tflow$ at which the approximation error due to our heat flow based approach (roughly, the second term in \eqref{eq:PG-app}) becomes comparable to  the contribution to the prediction error bound from the classical group lasso methods that assume complete knowledge of the group structure (roughly, the first term in \eqref{eq:PG-app}). Heuristically, this indicates the time we need to run the heat flow in order to be comparable to the classical group lasso (but without requiring complete knowledge of the groups, unlike the classical setting).  Equating the two terms in \eqref{eq:PG-app}, we deduce that it suffices to take
\begin{equation} \label{eq:time_10-app}
    \tflow ~\gtrsim \frac{1}{\la_g} \log p + \frac{1}{\la_g} \log \l( \frac{1}{s\; |\calC_{\max}|} \cdot  \frac{ \s_{\min}(\S)^2} {s \s_{\max}(\S)} \cdot \frac{n}{\log n}\r). 
\end{equation}
In most settings of interest, we have bounds of the form $ p^{-a} \lesssim \s_{\min}(\S) \le  \s_{\max}(\S) \lesssim p^b$ for some $a,b \ge 0$. In particular, this holds for the GFF and block model strucutred covariates that we discuss in the present work. Further, we have the trivial  bounds entailing $s,|\calC_{\max}| \in [1,p]$. Combining these observations with \eqref{eq:time_10-app}, we deduce that for such models, we have the much simpler prescription 
\begin{equation} \label{eq:time_11}
    \tflow ~\gtrsim \frac{1}{\la_g} \max \{ \log p , \log n \}. 
\end{equation}

As discussed in the main text, this implies it suffices to have $\Ns=O\l(d_{\max} \cdot \frac{1}{\la_g} \cdot \max \{ \log p , \log n \}\r)$.

For a definite quantitative ballpark for $\tflow$ and $\Ns$, we focus on the setting of typical clustered networks (c.f. Sec. \ref{sec:graph_struct}.). In such a setting, we may deduce that $d_{\max}=\Theta_P(p)$ whereas $\la_g=\Theta_P(p)$; for details we refer the reader to Sec. \ref{sec:graph_struct}.

This implies that we can further simplify to the prescriptions with high probability
\begin{equation}
    \tflow ~\gtrsim \frac{1}{p} \cdot \max \{ \log p , \log n \}
\end{equation}
and 
\begin{equation}
    \Ns =O_P\l(\max \{ \log p , \log n \}\r).
\end{equation}

\subsubsection{Bounds on \texorpdfstring{$n, \tflow, \Ns$}{} for given prediction guarantee \texorpdfstring{$\eps$}{}}

Herein, we fix a prediction error guarantee $\eps$, and make explicit prescriptions for the order of $n, \tflow$ and $\Ns$ that will allow us to obtain a prediction error of order $O(\eps)$. To this end, we posit that the two terms on the right hand side of  \eqref{eq:PG-app} are separately $O(\eps^2)$ (since the left hand side is the squared prediction error). For $\tflow$, this entails that $p^{3/2}e^{-\la_g \tflow} \lesssim \eps^2$, which translates into 
\begin{equation} \label{eq:time_2}
    \tflow \gtrsim \frac{1}{\la_g}\l( \log p + \log \frac{1}{\eps} \r). 
\end{equation}

In the setting for typical clustered networks (c.f. Sec. \ref{sec:graph_struct}), we may deduce that $d_{\max}=\Theta_P(p)$ whereas $\la_g=\Theta_P(p)$ , implying  that we can further simplify  to the following bounds that hold with high probability:
\begin{equation} 
    \tflow ~\gtrsim \frac{1}{p} \cdot \max \l\{ \log p , \log \frac{1}{\eps} \r\}; \quad \Ns =O_P\l(\max \l\{ \log p , \log \frac{1}{\eps} \r\}\r).
\end{equation}

For prescribing $n$ under a target prediction error $\eps$, we need to satisfy two conditions: 

(a) the first term on the right  in \eqref{eq:PG-app} is $O(\eps^2)$, which leads to 
\[    \frac{n}{\log n} \gtrsim ~\frac{1}{\eps^2}  \l( s\; |\calC_{\max}| \cdot  \frac{ \s_{\max}(\S)  }{ \s_{\min}(\S)^2}\r).   \]

(b) As per Lemma \ref{lem:res-eig-app} and the bound $\kappa_{\Sigma}(s) \ge \s_{\min}(\S)$, we have 
\[n \gtrsim s \; |\calC_{\max}| \cdot \frac{(\rho(\Sigma))^2}{(\s_{\min}(\S))^2} \cdot \log p. \]

Combining the last two bounds, we obtain the unified bound 
\begin{equation} \label{eq:n_eps-app}
    \frac{n}{\log n} \gtrsim\max \l\{  \l( \frac{1}{\eps^2} \cdot s |\calC_{\max}| \cdot \frac{ \s_{\max}(\S)  }{ \s_{\min}(\S)^2}\r), \l( s \; |\calC_{\max}| \cdot \frac{(\rho(\Sigma))^2}{(\s_{\min}(\S))^2} \cdot \log p \r) \r\}.
\end{equation}

\subsection{On the structure of typical clustered networks} \label{sec:graph_struct}

In this section, we explore the structure of the \textit{typical clustered network}  $G$ on $p$ vertices and with $k$ clusters, where $k=O(1)$, and the size of each cluster is $\Theta(p)$. We model a \textit{typical clustered network} with these parameters as follows. We posit that the clusters $\{\calC_i\}_{i=1}^k$ are fully disconnected across clusters; thus the graph $G$ has exactly $k$ connected components given by the $\calC_i$-s. Each component $\calC_i$ is modelled as a \textit{dense} random graph with $|\calC_i|$ vertices and edge connection probability $\xi_i \in (0,1)$. For the sake of definiteness, we allow self-loops in our model; this would not exert any major influence on the large scale properties of the graph. Since the number of components $k$ is $O(1)$, we may take the $\xi_i$-s to be bounded away from 0 and 1, as the parameters $n$ and $p$ grow.

\subsubsection{Maximum and minimum degrees of typical clustered networks} \label{sec:TCN-deg}

We first demonstrate that the maximum and minimum degrees of $G$, denoted resp. $d_{\max}$ and $d_{\min}$, are both $\Theta_P(p)$. To this end, we observe that if $d_{\max,i}$ and $d_{\min,i}$ are resp. the maximum and minimum degrees of $\calC_i$, then $d_{\max}=\max_{1 \le i \le k} d_{\max,i}$ and $d_{\min}=\min_{1 \le i \le k} d_{\min,i}$.

For any particular vertex $v \in \calC_i$, its degree $\deg(v)$ is distributed as Binomial($T_i,\xi_i,$), where we recall the notation that $T_i=|\calC_i|$. Clearly, $\E[\deg(v)]=T_i \xi_i$. By a well-known  large deviation estimate (c.f. \cite{dembo_zeitouni_2010}), for any $\del>0$, we have 
\begin{equation}
    \P \l[ (1-\del) T_i \xi_i  \le  \deg(v)  \le (1+\del) T_i \xi_i  \r] \ge 1 - \exp(-C(\del,\xi_i)T_i),
\end{equation}
where $0<C(\del,\xi_i)<\infty$ is a quantity that depends only on $\del$ and $\xi_i$. By a union bound, this implies that 
\begin{equation}
    \P \l[ \forall \; v \in \calC_i, \; \text{it holds that} \; (1-\del) T_i \xi_i  \le  \deg(v)  \le (1+\del) T_i \xi_i  \r] \ge 1 - T_i \cdot \exp(-C(\del,\xi_i)T_i).
\end{equation}

But \[  \forall \; v \in \calC_i, \; \text{it holds that} \; (1-\del) T_i \xi_i  \le  \deg(v)  \le (1+\del) T_i \xi_i  \equiv (1-\del) T_i \xi_i  \le  d_{\min,i} \le d_{\max_i}  \le (1+\del) T_i \xi_i.   \] Since $d_{\max}=\max_{1 \le i \le k} d_{\max,i}$ and $d_{\min}=\min_{1 \le i \le k} d_{\min,i}$, by a further union bound we may deduce that 
\begin{equation}
    \P \l[ (1-\del) \min_{1\le i \le k} (T_i \xi_i)  \le  d_{\min,i} \le d_{\max_i}  \le (1+\del) \max_{1\le i \le k}(T_i \xi_i)  \r] \ge 1 -  \l(\sum_{i=1}^k T_i \cdot \exp(-C(\del,\xi_i)T_i) \r).
\end{equation}

Since in our model of typical clustered networks, $k=O(1)$ while each $T_i=\Theta(p)$ and $\xi_i$ are bounded away from 0 and 1, we may deduce that for any $\del>0$ we have constants $C_1,C_2$ and $C\l(\del,\{\xi_i\}_{i=1}^k\r)$
\begin{equation}
    \P \l[ (1-\del) C_ 1 \cdot p \le  d_{\min} \le d_{\max}  \le (1+\del) C_2 \cdot p \r] \ge 1 -  \exp\l(-C\l(\del,\{\xi_i\}_{i=1}^k\r)p\r).
\end{equation}

This demonstrates that, for a typical clustered network on $p$ nodes, we have $d_{\max}=\Theta_P(p)$ as well as $d_{\min}=\Theta_P(p)$.

\subsubsection{Spectral gap for typical clustered networks} \label{sec:TCN-spec_gap}

In this section, we investigate the order of the spectral gap for typical clustered networks, as defined above. To this end, we invoke results of Chung-Lu-Vu on spectra of random graphs with given expected degree \citep{ChungPNAS,chung2004spectra}. To state their results, we define the random graph model with a given expected degree sequence as follows. 

For a graph on $p$ nodes and a given sequence of non-negative reals $\mathbf{w}=(w_1,\ldots,w_p)$ satisfying $\max_i w_i^2 \le \sum_{k=1}^p w_k$, we define the random graph with the expected degree sequence \citep{ChungPNAS,chung2004spectra} by connecting the vertices labelled $i$ and $j$ with an edge with probability $w_iw_j \rho$, where $\rho=(\sum_{k=1}^p w_k)^{-1}$. A classical random graph $G(p,\xi)$ with $p$ vertices and edge connection probabilities $\xi$ is obtained in the above model by choosing $w_i=p\xi$. 

Two quantities of relevance in the results of \cite{ChungPNAS,chung2004spectra} are the expected average degree $\md=\frac{1}{n} \sum_{k=1}^p w_k$ and the second order average degree  $\snd = \l(\sum_{k=1}^p w_k^2\r) \big/ \l(\sum_{k=1}^p w_k\r)$; we also set $w_{\max}:=\max_k w_k$ and $w_{\min}:=\min_k w_k$. For the $G(p,\xi)$ model considered above, we have $\snd=\md=w_{\max}=w_{\min}=p\xi$. 

For the above random graph model, if the normalised Laplacian $L^* = D^{-1/2} L D^{-1/2}$ (where $L$ is the usual, unnormalised graph Laplacian and $D$ is the diagonal matrix of degrees) has spectrum $0 = \la_0(L^*) \le \la_0(L^*) \le \ldots \le \la_{p-1}(L^*)$, then \cite{ChungPNAS} demonstrates that with high probability we have the bound $\max_{i \ne 0}|1 - \la_i(L^*)| \lesssim \frac{1}{\sqrt{\md}} + o(\frac{\log^3 p}{w_{\min}})$. In our setting of interest, namely the $G(p,\xi)$ random graph, this implies that with high probability we have $\la_1(L^*) \gtrsim 1 - \frac{1}{\sqrt{p \xi}} (1+o(1))$.

We now observe that $\la_1(L)=\la_1(D^{1/2}L^* D^{1/2}) \ge d_{\min}\la_1(L^*)$. But we have already demonstrated that for a $G(p,\xi)$ random graph, we have $d_{\min}=\Theta_P(p)$, which when combined with the analysis above yields $\la_1(L) \gtrsim p$ with high probability for such graphs.

It remains to note that the ground state (equiv., the lowest non-zero eigenvalue of the unnormalised graph Laplacian) $\la_g$ for the \textit{typical clustered network} is given by $\la_g = \min_{1 \le i \le k} \la_1(L_{\calC_i})$, where $L_{\calC_i}$ is the unnormalised graph Laplacian corresponding to the component subgraph $\calC_i$. But in a typical clustered network, each $\calC_i$ is a $G(p,\xi_i)$ random graph for some $\xi_i \in (0,1)$. Since the number of components $k$ is $O(1)$, this leads to the bound $\la_g \gtrsim p$ with high probability.

Finally, we observe via \eqref{eq:sparse_eig_bound} that $\la_g=\la_1(L)\le \la_{\max}(L) \le 2 d_{\max}= \Theta_P(p)$. This, in particular, implies that for typical clustered networks we have $\la_g = \Theta_P(p)$, as desired.  

\subsection{Sample complexity and prediction guarantees for covariate structures based on GFF} \label{sec:recovery_GFF}

\subsubsection{The order of the mass parameter \texorpdfstring{$\t$}{}}
A few words are in order regarding the size of the mass parameter (or convexity parameter) $\t$.
Our motivation behind introducing the parameter $\t$ is to create sufficient convexity to overcome the singular nature of $L$, without changing the essential orders of magnitude associated with the model. 
For the upper bound on the spectrum of the covariance matrix in the GFF case, we  will proceed via the identity $\s_{\max}(\S)=(\s_{\min}(L)+\t)^{-1}$. Since $L$ is singular,  $\s_{\min}(L)=0$; thus $\s_{\max}(\S)=\t^{-1}$.  In view of this, if $\s_{\text{low}}(L)$ is the smallest non-zero eigenvalue of $L$, we will select $\t$ to be simply equal to $\s_{\mathrm{low}}(L)$, thereby adding a minimal amount of convexity without making essential changes to its large scale behaviour. For the spectral lower bound on the covariance structure of GFFs, we clearly have $\s_{\min}(\S)=(\s_{\max}(L)+\t)^{-1}$.

\subsubsection{Spectral bounds}

The study of extremal eigenvalues of graph Laplacians has a long history in spectral graph theory; for a comprehensive account we refer the reader to the surveys \cite{spielman2007spectral,spielman2012spectral}. In this work, we content ourselves with the following general bounds, which are essentially versions of results known in the literature, or follow via simple considerations therefrom. 

\begin{lemma} \label{lem:graph_spectra}
    For $G=(V,E)$ be a connected graph, we denote by $L$ and $L^*$ the unnormalised and normalised graph Laplacians respectively. Then the following general spectral bounds hold.

    We have the spectral upper bounds
    \begin{equation} \label{eq:dense_eig_bound}
        \s_{\max}(L) \le \frac{\sqrt{1+8|E|}-1}{2}
    \end{equation}
    and
    \begin{equation} \label{eq:sparse_eig_bound}
        \s_{\max}(L) \le 2 d_{\max},
    \end{equation}
    where $d_{\max}$ is the maximum degree of the graph $G$.

    On the other hand, we have the spectral lower bound
    \begin{equation} \label{eq:Cheeger_eig_bound}
        \s_{\mathrm{low}}(L^*) \ge \frac{1}{2}\mathfrak{K}(G)^2,
    \end{equation}
    where $\s_{\mathrm{low}}$ is the smallest non-zero eigenvalue of $L^*$, $\mathfrak{K}(G)$ is the conductance of the graph $G$, defined by \[\mathfrak{K}(G):=\min_{S \subset V, S \ne \phi} \frac{E(S,V \setminus S)}{\min\{Vol(S),Vol(V \setminus S)\}}\] with $E(S,V\setminus S)$ being the number of edges between the vertices in $S$ and $V \setminus S$, and $Vol(A)$ for $A \subseteq V$ being defined as $Vol(A)=\sum_{v \in A} \deg(v)$. 
\end{lemma}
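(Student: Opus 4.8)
The plan is to treat the three bounds separately, since they rest on quite different ideas. The two upper bounds on $\sigma_{\max}(L)$ follow from the quadratic-form representation of the Laplacian and are elementary, whereas the lower bound \eqref{eq:Cheeger_eig_bound} is the (hard direction of the) discrete Cheeger inequality and will absorb the bulk of the work.

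For the degree bound \eqref{eq:sparse_eig_bound}, recall that $L = D - A$ is positive semidefinite with $x^\top L x = \sum_{(i,j) \in E}(x_i - x_j)^2$ for every $x \in \R^{|V|}$. Using $(x_i - x_j)^2 \le 2(x_i^2 + x_j^2)$ and summing over edges gives $x^\top L x \le 2\sum_i \deg(i)\, x_i^2 \le 2 d_{\max}\|x\|_2^2$, and taking the supremum over unit vectors yields \eqref{eq:sparse_eig_bound}; equivalently one may apply Gershgorin's circle theorem, each row of $L$ having diagonal entry $\deg(i)$ and off-diagonal absolute row sum $\deg(i)$. For the edge bound \eqref{eq:dense_eig_bound}, I would invoke the classical extremal estimate for the spectral radius in terms of the number of edges, which is tight on cliques and is the ``result known in the literature'' alluded to in the statement: the stated quantity $\tfrac{1}{2}(\sqrt{1 + 8|E|} - 1)$ is exactly the positive root of $\mu(\mu+1) = 2|E|$, so the target is to control $\sigma_{\max}$ via the spectral-moment information $\sum_i \lambda_i = \operatorname{tr}(L) = 2|E|$ together with the edge-counting structure of the leading eigenvector. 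The delicate point is that this inequality is most natural for the adjacency spectral radius and care is needed when transferring it to $L$; in the dense regime that is our actual use case (each cluster a dense graph with $|E| = \Theta(p^2)$), both \eqref{eq:dense_eig_bound} and \eqref{eq:sparse_eig_bound} reduce to $\sigma_{\max}(L) = \Theta(p)$, which is all that the downstream GFF analysis requires.

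The heart of the lemma is the lower bound \eqref{eq:Cheeger_eig_bound}, for which I would reproduce the standard sweep-cut proof of Cheeger's inequality. Let $f$ be an eigenvector of $L^* = D^{-1/2} L D^{-1/2}$ for $\sigma_{\mathrm{low}}(L^*)$; writing $g = D^{-1/2} f$ one has the orthogonality $\sum_v \deg(v)\, g(v) = 0$ and the Rayleigh identity $\sigma_{\mathrm{low}}(L^*) = \sum_{(u,v)\in E}(g(u) - g(v))^2 / \sum_v \deg(v)\, g(v)^2$. The first step is to pass from $g$ to a nonnegative ``one-sided'' function supported on at most half the total volume (keeping either the positive or the negative part, whichever has the smaller volume of support) and to check that this can only decrease the Rayleigh quotient. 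The second, decisive step is the sweep cut: ordering the vertices by the value of this one-sided function and considering the threshold sets $S_t = \{v : g(v)^2 > t\}$, an averaging argument combined with Cauchy--Schwarz produces some threshold with $E(S_t, V \setminus S_t) \le \sqrt{2\, \sigma_{\mathrm{low}}(L^*)}\,\operatorname{Vol}(S_t)$, hence conductance at most $\sqrt{2\,\sigma_{\mathrm{low}}(L^*)}$. Since $\mathfrak{K}(G)$ is the minimal conductance over all cuts, this gives $\mathfrak{K}(G) \le \sqrt{2\,\sigma_{\mathrm{low}}(L^*)}$, which rearranges to \eqref{eq:Cheeger_eig_bound}.

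I expect the main obstacle to be precisely this sweep-cut estimate. After comparing the sizes of the threshold cuts to the Dirichlet energy, one is left to control $\sum_{(u,v)\in E}|g(u)^2 - g(v)^2|$, which requires the factorization $g(u)^2 - g(v)^2 = (g(u)-g(v))(g(u)+g(v))$ and an application of Cauchy--Schwarz that recovers the energy $\sum_{(u,v)\in E}(g(u)-g(v))^2$ against the normalizing volume $\sum_v \deg(v)\, g(v)^2$. Tracking the constants through this step so as to land on the factor $\tfrac{1}{2}$ in \eqref{eq:Cheeger_eig_bound}, rather than a weaker constant, is the part demanding the most care.
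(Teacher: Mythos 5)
Your treatment of \eqref{eq:sparse_eig_bound} is correct and, if anything, more self-contained than the paper's: the paper instead argues through the operator norm, using $0 \preccurlyeq L = D - A$ to get $A \preccurlyeq D$, hence $\|A\|_{\op} \le d_{\max}$, and then $\s_{\max}(L) \le \|D\|_{\op} + \|A\|_{\op} \le 2 d_{\max}$ (strictly speaking that step also needs the lower bound $-D \preccurlyeq A$, i.e.\ positive semidefiniteness of the signless Laplacian $D+A$, which the paper glosses over; your quadratic-form estimate $(x_i - x_j)^2 \le 2(x_i^2 + x_j^2)$, or Gershgorin, avoids the issue entirely and lands on the same constant). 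For \eqref{eq:Cheeger_eig_bound} the paper does no work at all: it simply cites the hard direction of Cheeger's inequality from the Spielman surveys. Your sweep-cut outline is the standard proof of exactly that direction, with the correct constant $\tfrac12$ coming out of the Cauchy--Schwarz step you identify, so reproducing it is fine --- just considerably more labour than the paper undertakes.

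The genuine problem is \eqref{eq:dense_eig_bound}, and your instinct that transferring the edge bound from the adjacency matrix to $L$ is ``delicate'' actually understates the difficulty: the transfer is impossible as stated. The inequality $\s_{\max} \le \tfrac12(\sqrt{1+8|E|}-1)$ is Stanley's bound on the \emph{adjacency} spectral radius (tight on cliques), and the paper's proof consists solely of citing that reference; but the displayed inequality is false for the Laplacian. Already for $G = K_2$ one has $|E| = 1$, so the right-hand side equals $1$, while $\s_{\max}(L) = 2$; for $K_n$ in general the bound evaluates to $n-1$ whereas $\s_{\max}(L(K_n)) = n$. The trace identity $\tr(L) = 2|E|$ you propose to exploit cannot close this gap, since the Laplacian moments differ from the adjacency ones by exactly the degree terms that break the inequality. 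What is salvageable is, e.g., $\s_{\max}(L) \le \|D\|_{\op} + \|A\|_{\op} \le d_{\max} + \tfrac12\l(\sqrt{1+8|E|}-1\r)$, or the classical bound $\s_{\max}(L) \le \max_{\{u,v\} \in E} (\deg(u) + \deg(v))$. None of this damages the downstream analysis: in the dense clustered regime all of these expressions are $\Theta(p)$, and the GFF argument in Section~\ref{sec:recovery_GFF} in fact only invokes \eqref{eq:sparse_eig_bound}.
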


We observe that in our case, the graph underlying the GFF is not connected. We will, therefore, apply Lemma \ref{lem:graph_spectra} to each connected component (i.e., group) of the vertices. The component-wise bounds can then be combined to obtain \[  \min_i \s_{\min}(\S_{\calC_i})  \le \s_{\min}(\S) \le \s_{\max}(\S) \le \max_i \s_{\max}(\S_{\calC_i}). \] We will invoke Lemma \ref{lem:graph_spectra} for each connected component of the underlying graph.

\subsubsection{Random designs on typical clustered networks}

For the GFF on typical clustered networks, we invoke the inequality \eqref{eq:sparse_eig_bound}. As a result, we have $\s_{\max}(L_{\calC_i}) \lesssim \d_{\max}(\calC_i)$. It has been demonstrated in Section \ref{sec:graph_struct} that the last quantity is, with high probability, $\Theta_P(p)$.
We therefore have, with high probability, the bound
\begin{equation} \label{eq:GFF_dense-1}
    \s_{\min}(\S) \simeq \s_{\max}(L)^{-1} = \Theta(p^{-1}). 
\end{equation}

For analysing the behaviour of $\s_{\max}(\S)$, equivalently that of $\s_{\low}(L)$ (which is the smallest non-zero eigenvalue of $L$), we invoke the analysis in Section \ref{sec:graph_struct} and deduce that $\s_{\low}(L)=\la_g=\Theta(p)$ with high probability. 

We therefore have, with high probability, that
\begin{equation} \label{eq:GFF_dense-2}
    \rho(\S) \le  \s_{\max}(\S) \simeq \s_{\low}(L)^{-1}  = O(p^{-1}). 
\end{equation} 

For typical clustered networks, the group sizes are comparable to each other (i.e., their ratios are uniformly bounded in $p,n$), therefore recalling that $s$ is the number of non-vanishing groups  we roughly have $s |\calC_{\max}| \simeq \| \b^* \|_0$, where $ \| \b^* \|_0$ is the $L^0$ norm (equivalently, the support size) of the true signal $\b^*$.
Applying \eqref{eq:GFF_dense-1} and \eqref{eq:GFF_dense-2} to \eqref{eq:n_eps-app}, we may therefore deduce that 
\[ 
    \l( \frac{n}{\log n} \r)_{\text{GFF}}  \gtrsim \max \l\{ \frac{1}{\eps^2} \cdot \|\b^*\|_0 \cdot  p,~~ \|\b^*\|_0 \cdot \log p \r\}. 
\] 
It may be noted that the $\frac{1}{\eps^2} \cdot \|\b^*\|_0 \cdot  p$ term above comes from the first term in \eqref{eq:n_eps-app}, which, roughly speaking, reflects the error incurred by classical group lasso. Thus, the linear dependence of the sample complexity on $p$ appears to be a fundamental characteristic of the problem for GFF based random designs, and is inherent in both classical group lasso and the present heat flow based methods.

\subsection{Sample complexity and prediction guarantees for covariate structures based on Stochastic Block Models} \label{sec:recovery_block} 

\subsubsection{Spectral bounds}

We discuss herein upper and lower spectral bounds on block matrices. This is encapsulated in the following lemma.
\begin{lemma} \label{lem:block-alg_dec}
    We have 
    \begin{equation} \label{eq:block-alg_dec}
        \S = \l(1 - (a-b) \r) I_V + (a-b) \sum_{i=1}^k \ind_{\calC_i} \ind_{\calC_i}^T + b \ind_V \ind_V^T.
    \end{equation}
    Further, we have the bounds 
    \begin{equation} \label{eq:block-bounds}
        (1-a+b) \le \s_{\min}(\S) \le \s_{\max}(\S) \le (1-a+b) + (a-b) \max_i{|\calC_i|} +  b|V|.
    \end{equation}
\end{lemma}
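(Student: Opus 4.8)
The plan is to prove the two assertions in sequence: first the exact algebraic identity \eqref{eq:block-alg_dec}, and then to read off the spectral bounds \eqref{eq:block-bounds} from it by an elementary positive-semidefinite perturbation argument.

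For the decomposition, I would begin from $\S = I_V + \cP$ and expand $\cP$ into its block contributions, $\cP = \sum_{i=1}^k \cP^{i,i} + \sum_{1 \le i \ne j \le k} \cP^{i,j}$, inserting $\cP^{i,i} = a\l(\ind_{\calC_i}\ind_{\calC_i}^\top - D(\ind_{\calC_i})\r)$ and $\cP^{i,j} = b\,\ind_{\calC_i}\ind_{\calC_j}^\top$. The engine of the computation is a pair of combinatorial identities coming from the fact that $\{\calC_i\}_{i=1}^k$ partitions $V$: namely $\sum_{i=1}^k D(\ind_{\calC_i}) = D(\ind_V) = I_V$, and $\sum_{i,j} \ind_{\calC_i}\ind_{\calC_j}^\top = \l(\sum_i \ind_{\calC_i}\r)\l(\sum_j \ind_{\calC_j}\r)^\top = \ind_V\ind_V^\top$, so that $\sum_{i \ne j}\ind_{\calC_i}\ind_{\calC_j}^\top = \ind_V\ind_V^\top - \sum_i \ind_{\calC_i}\ind_{\calC_i}^\top$. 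Substituting these and collecting, separately, the coefficients of $I_V$, of $\sum_i \ind_{\calC_i}\ind_{\calC_i}^\top$, and of $\ind_V\ind_V^\top$ produces \eqref{eq:block-alg_dec}. The only step needing genuine care is the diagonal bookkeeping: the within-block subtraction $D(\ind_{\calC_i})$ must be tracked consistently against $\ind_{\calC_i}\ind_{\calC_i}^\top$ so that the resulting matrix has exactly $a$ within a block and $b$ across blocks, which is precisely what fixes the coefficient multiplying $I_V$.

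For the bounds, I would use the established form $\S = (1-a+b)\,I_V + (a-b)\sum_{i=1}^k \ind_{\calC_i}\ind_{\calC_i}^\top + b\,\ind_V\ind_V^\top$ and note that, since $a > b \ge 0$, both $\sum_i \ind_{\calC_i}\ind_{\calC_i}^\top$ and $\ind_V\ind_V^\top$ are positive semidefinite, being sums of outer products $v v^\top$; moreover $\S$ is symmetric, so its singular values coincide with its eigenvalues. The lower bound is then immediate by dropping the two PSD corrections: $\S \succeq (1-a+b)I_V$, hence $\s_{\min}(\S) \ge 1-a+b$, and this value is attained on any vector orthogonal to every $\ind_{\calC_i}$, which is a useful consistency check. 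For the upper bound I would apply Weyl's inequality (equivalently, subadditivity of the operator norm), $\s_{\max}(\S) \le (1-a+b) + (a-b)\,\l\|\sum_i \ind_{\calC_i}\ind_{\calC_i}^\top\r\|_{\op} + b\,\|\ind_V\ind_V^\top\|_{\op}$. Here $\|\ind_V\ind_V^\top\|_{\op} = \|\ind_V\|_2^2 = |V|$, while because the $\ind_{\calC_i}$ have pairwise disjoint supports the matrix $\sum_i \ind_{\calC_i}\ind_{\calC_i}^\top$ is block diagonal with blocks $\ind_{\calC_i}\ind_{\calC_i}^\top$ of top eigenvalue $|\calC_i|$, giving $\l\|\sum_i \ind_{\calC_i}\ind_{\calC_i}^\top\r\|_{\op} = \max_i |\calC_i|$. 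Assembling these yields the claimed upper bound $(1-a+b) + (a-b)\max_i|\calC_i| + b|V|$.

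The argument is essentially routine linear algebra, so I do not anticipate a deep obstacle. The two points that repay attention are the diagonal bookkeeping in the decomposition (where the exact $I_V$ coefficient is pinned down) and, for the upper bound, the recognition that one should not try to diagonalize the sum of the two non-commuting PSD terms exactly but instead invoke subadditivity of $\s_{\max}$; this sidesteps the lack of a common eigenbasis and delivers precisely the stated bound, at the harmless cost of the upper bound being an overestimate rather than the exact top eigenvalue.
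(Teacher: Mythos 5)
Your overall strategy coincides with the paper's own proof: the paper likewise asserts \eqref{eq:block-alg_dec} ``via a direct computation'' and derives \eqref{eq:block-bounds} from a two-sided n.n.d.\ ordering, taking $B_1=C_1=(1-(a-b))I_V$, $B_2=C_2=(a-b)\sum_i \ind_{\calC_i}\ind_{\calC_i}^\top$, $B_3=C_3=b\,\ind_V\ind_V^\top$, and using exactly your two observations — $\s_{\max}(\ind_V\ind_V^\top)=|V|$, and that the disjointly supported outer products make $\s_{\max}\l(\sum_i \ind_{\calC_i}\ind_{\calC_i}^\top\r)=\max_i|\calC_i|$ — so your lower bound (drop the PSD corrections) and upper bound (subadditivity of $\s_{\max}$) match the paper's argument and are correct \emph{given} the decomposition.

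The gap sits precisely in the step you flagged as the delicate one, the diagonal bookkeeping, and your claim that collecting coefficients ``produces \eqref{eq:block-alg_dec}'' is false as written. Carrying out your own expansion honestly: $\sum_i \cP^{i,i} = a\sum_i \ind_{\calC_i}\ind_{\calC_i}^\top - a I_V$ (since $\sum_i D(\ind_{\calC_i}) = I_V$) and $\sum_{i\ne j}\cP^{i,j} = b\,\ind_V\ind_V^\top - b\sum_i \ind_{\calC_i}\ind_{\calC_i}^\top$, whence
\begin{equation*}
\S = I_V + \cP = (1-a)\,I_V + (a-b)\sum_{i=1}^k \ind_{\calC_i}\ind_{\calC_i}^\top + b\,\ind_V\ind_V^\top,
\end{equation*}
with coefficient $1-a$ on $I_V$, not $1-(a-b)$. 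Equivalently: $I_V+\cP$ has unit diagonal because $\cP^{i,i}$ has zero diagonal, whereas the right-hand side of \eqref{eq:block-alg_dec} has diagonal $1+b$; the two sides differ by $b\,I_V$. The stated identity is thus a slip in the lemma itself, and your own consistency check would have exposed it had you run it on $I_V+\cP$ rather than on the decomposed form: a vector $v$ supported in a single block $\calC_i$ (of size at least $2$) with $v\perp\ind_{\calC_i}$ satisfies $\S v=(1-a)v$, not $(1-a+b)v$. With the corrected coefficient your argument goes through verbatim and yields $1-a \le \s_{\min}(\S)\le\s_{\max}(\S)\le (1-a)+(a-b)\max_i|\calC_i|+b|V|$; since the regime of interest has $a=\tilde a/|V|$ and $b=\tilde b/|V|$, the $b\,I_V$ shift is immaterial to the $O(1)$ spectral conclusions the paper uses downstream.
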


The standard  choice  for connection probabilities in the stochastic block model in the community detection literature entails that $a=\tilde{a}/|V|$ and $b=\tilde{b}/|V|$, with the ratio $\tilde{a}/\tilde{b}$ large but fixed; this is the setting we will work with in the present paper. We may obtain from Lemma \ref{lem:block-alg_dec} that $\s_{\max}(\S)$ and $\s_{\min}(\S)$ are both $O(1)$. We record this as \[ \rho(\S)=1; ~\s_{\max}(\S)= O(1)  \text{ and} ~\kappa_\S(s) \ge  \s_{\min}(\S) \gtrsim O(1). \]

Combining these bounds with \eqref{eq:n_eps-app}, and working in the setting of typical clustered networks which entails balanced group sizes (so that we can approximate $s |\calC_{\max}| \simeq \|\b^*\|_0$), we  obtain the bound 
\[ 
    \l( \frac{n}{\log n} \r)_{\text{SBM}} \gtrsim  \max \l\{ \frac{1}{\eps^2} \cdot \|\b^*\|_0,~~ \|\b^*\|_0 \cdot \log p \r\}. 
\] 

\subsection{Proofs for spectral bounds}  \label{sec:spectral_proofs}

\subsubsection{Proofs of spectral bounds for GFF}

\begin{proof}[Proof of Lemma \ref{lem:graph_spectra}]
    The bound \eqref{eq:dense_eig_bound} has been established by \cite{Stanley}, which we refer the interested reader to for a detailed proof. 

    The key ingredient in \eqref{eq:sparse_eig_bound} is the basic spectral inequality on the adjacency matrix $A$ of a graph, given by $\s_{\max}(A) \le d_{\max}$.  This follows from the non-negative definiteness of the Laplacian: \[ 0 \preccurlyeq L = D - A \implies A \preccurlyeq D \implies \|A\|_{\mathrm{op}} \le \|D\|_{\mathrm{op}}  \implies \s_{\max}(A) \le \s_{\max}(D) = d_{\max}. \]
    As result, we may write
    \[ \s_{\max}(L) = \|L\|_{\mathrm{op}} = \|D-A\|_{\mathrm{op}} \le \|D\|_{\mathrm{op}} + \|A\|_{\mathrm{op}} \le 2d_{\max}. \]

    Finally, the spectral lower bound \eqref{eq:Cheeger_eig_bound} is one direction of the celebrated Cheeger's inequality on the smallest non-zero eigenvalue of the normalised graph Laplacian \citep{spielman2007spectral,spielman2012spectral}.
\end{proof}

\subsubsection{Proofs of spectral bounds for block models }

\begin{proof}[Proof of Lemma \ref{lem:block-alg_dec}.]
    The expression 
    \eqref{eq:block-alg_dec} follows from a simple algebraic decomposition of the matrix $\S=I_V + A$, which can be verified via a direct computation. 

    In order to obtain the singular value bounds on $\S$, we will make repeated use of the following spectral inequality for non-negative definite (\textit{abbrv.}  n.n.d.) matrices. Suppose 
    $\{B_i\}_{i=1}^{k_1}$ and $\{C_i\}_{i=1}^{k_2}$ are n.n.d. matrices such that for some n.n.d. matrix $M$ we have
    \begin{equation} \label{eq:eig_dom_1}
        \sum_{i=1}^{k_1} B_i  \preccurlyeq M \preccurlyeq \sum_{j=1}^{k_2} C_j,
    \end{equation}
    where $\preccurlyeq$ denotes smaller than or equal to in the n.n.d. order.
    Then we must have
    \begin{equation} \label{eq:eig_dom_2}
        \max_i \s_{\min}(B_i)  \le  \s_{\min}(M) \le \s_{\max}(M) \le \sum_{j=1}^{k_2} \s_{\max}(C_j).
    \end{equation}

    We now apply the \eqref{eq:eig_dom_2} to $M=\S$ with  $k_1=k_2=3;  ~B_{1}=C_{1}=\l(1 - (a-b) \r) I_V; ~B_2=C_2= (a-b) \sum_{i=1}^k \ind_{\calC_i} \ind_{\calC_i}^T $,  and $B_{3}=C_{3}=b \ind_V \ind_V^T$. 

    In order to deal with $\s_{\max}(B_2)$, it remains to observe that for any subset $S \subseteq V$ and any non-negative scalar $c$, we have $\s_{\min}(c \ind_S \ind_S^T)=0$ and $\s_{\max}(c \ind_S \ind_S^T) = c |S|$, and for two such \textit{disjoint} subsets $S_1,S_2 \subset V$, we have 
    \[   \s_{\max}( c \ind_{S_1} \ind_{S_1}^T + c \ind_{S_1} \ind_{S_1}^T)  \le c \max\{ \s_{\max}(\ind_{S_1} \ind_{S_1}^T, \ind_{S_2} \ind_{S_2}^T)  \} . \]
\end{proof}

\section{Further experimental results}
\subsection*{Estimation in GFF-s}
When the dimensionality $p$ is close to or more than the sample size $n$, estimation of the graph appears to become difficult, and in fact becomes progressively harder as the dimensionality increases. In a simulation experiment, we sampled observations from a GFF on a graph on $p = 200$ vertices drawn from the same SBM as described in the main text. Figure~\ref{fig:est-graph-GFF} shows the true adjacency matrix and the estimated one from thresholding correlation matrices.  

\begin{figure}[!ht]
    \centering
    \begin{tabular}{ccc}
        \includegraphics[scale = 0.2]{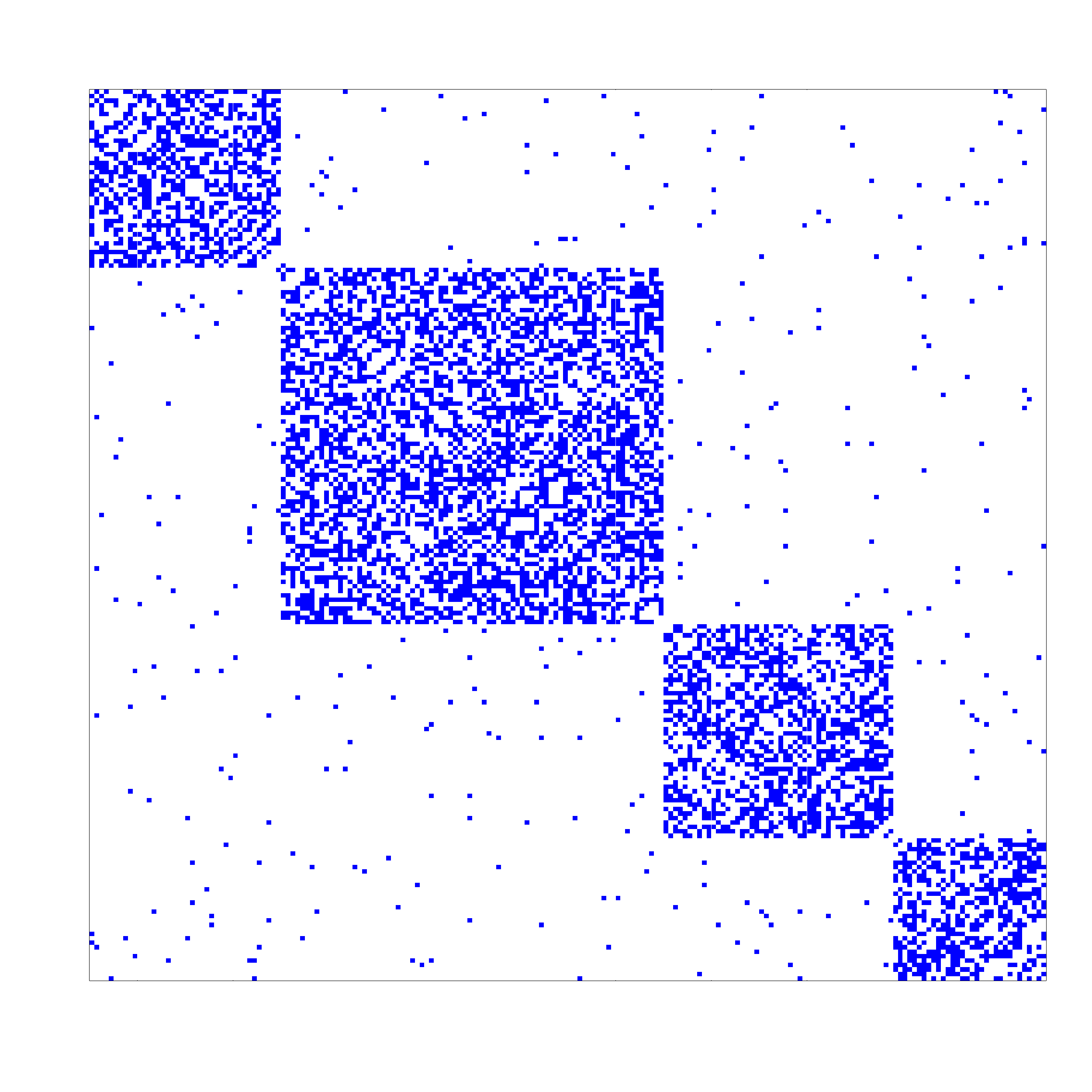} & \includegraphics[scale = 0.2]{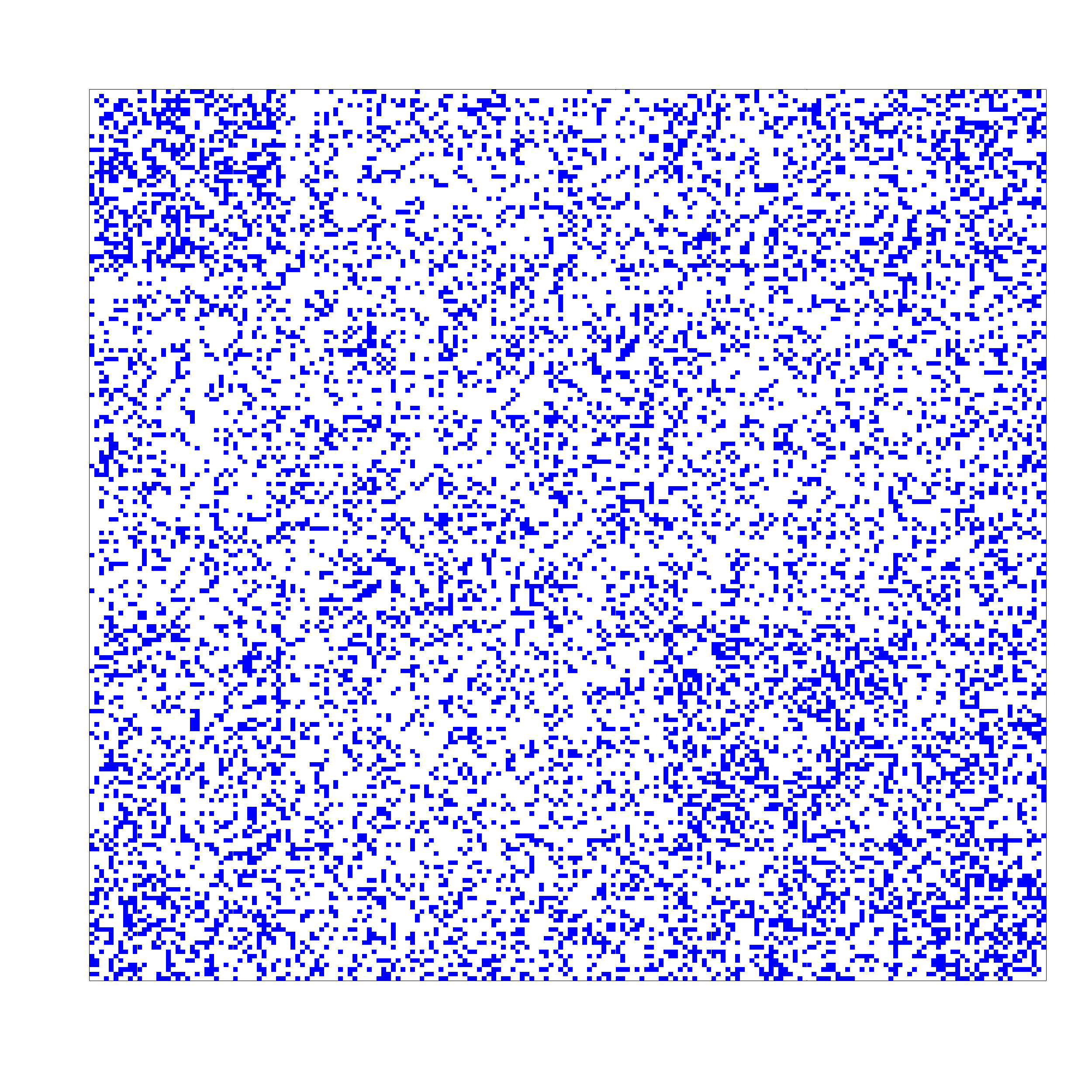} & \includegraphics[scale = 0.2]{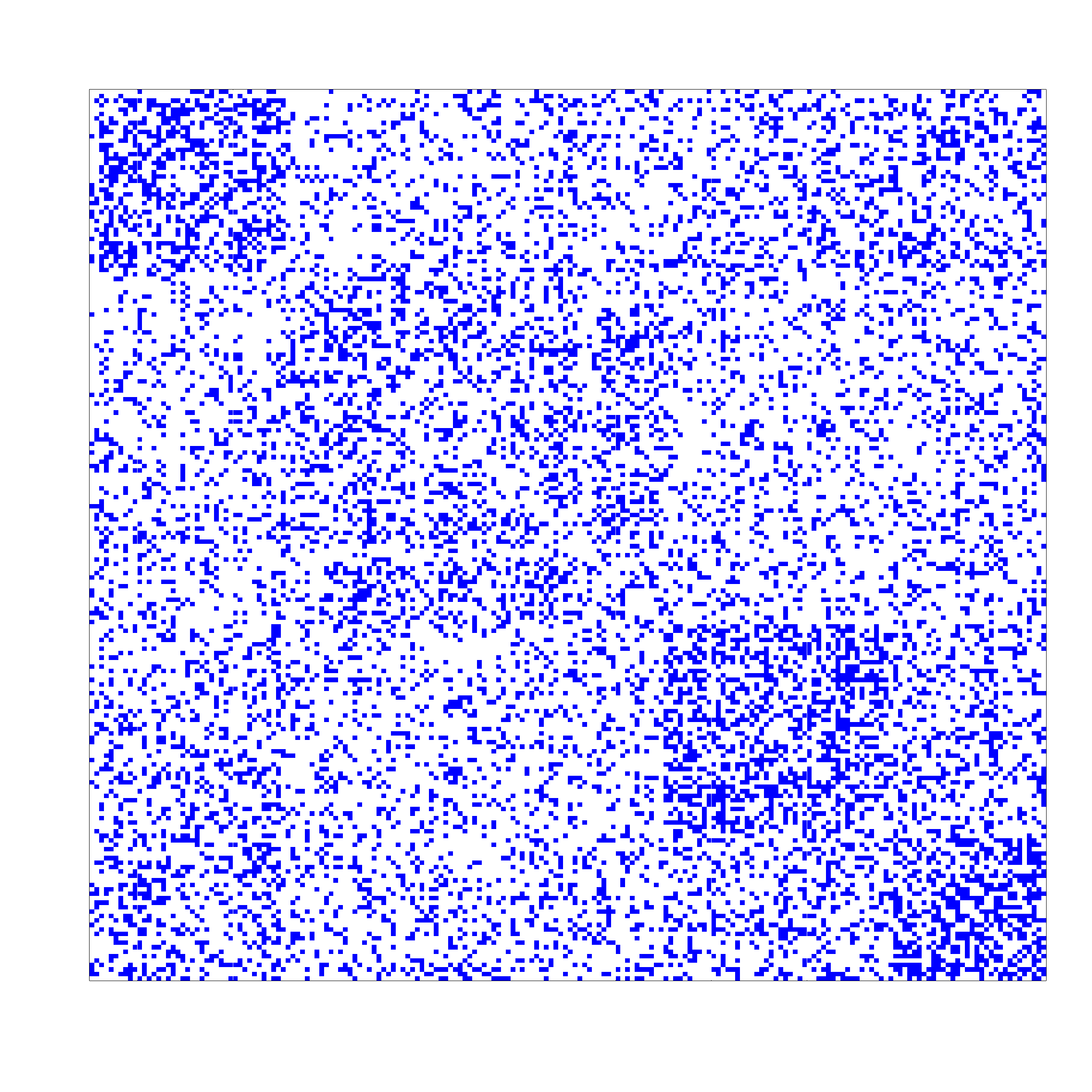} \\
        Truth & $n = 200$ & $n = 400$ \\
        \includegraphics[scale = 0.2]{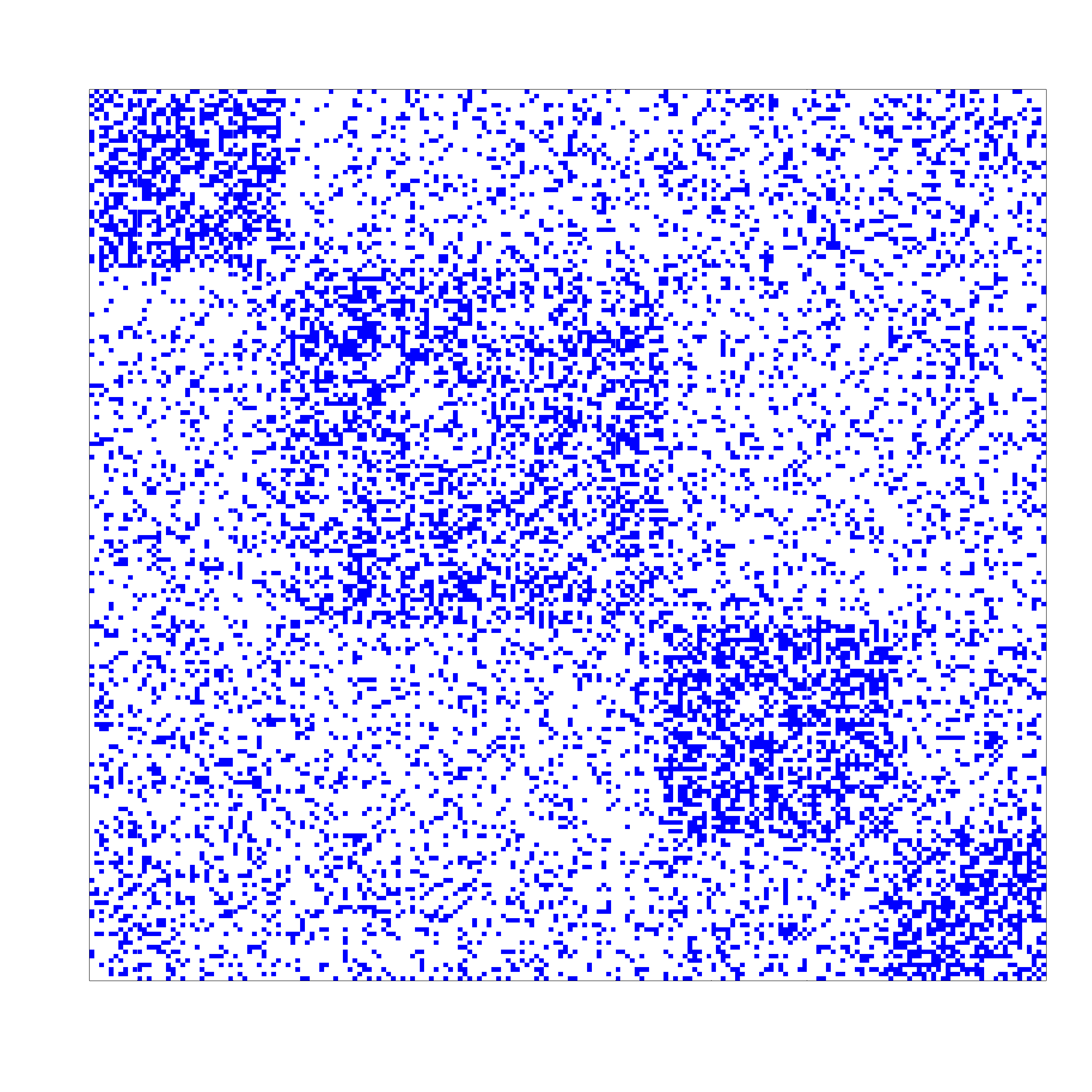} & \includegraphics[scale = 0.2]{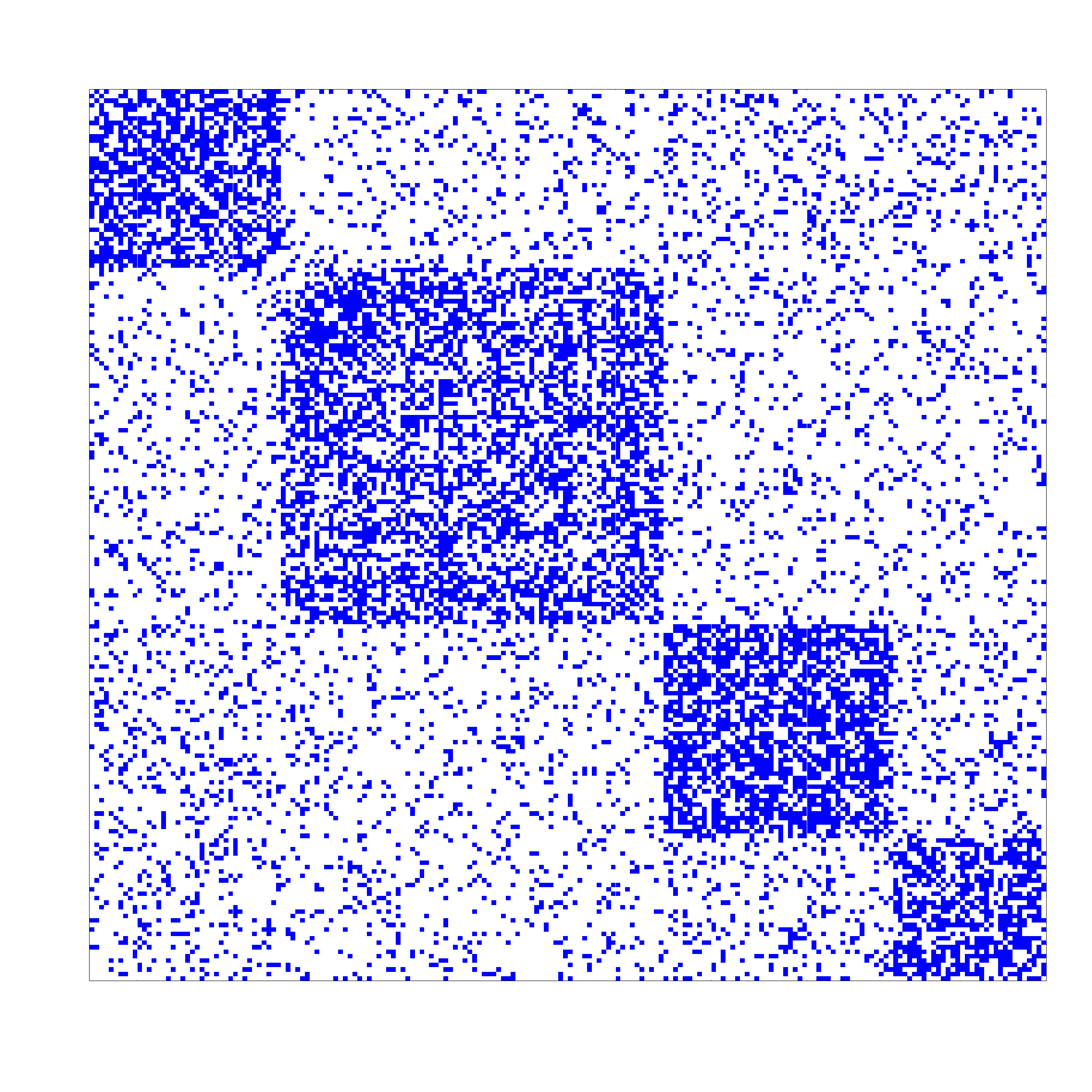} & \includegraphics[scale = 0.2]{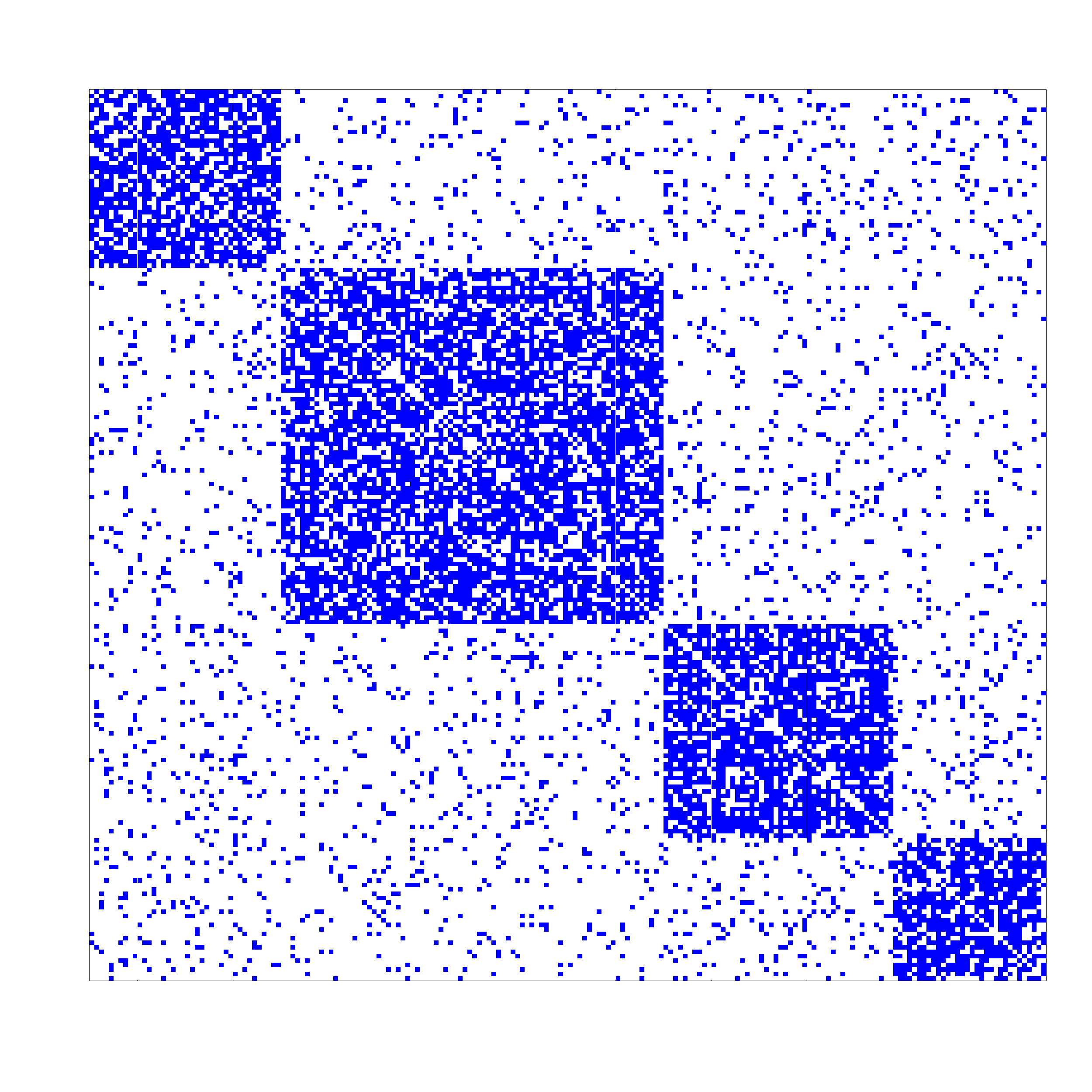} \\
        $n = 800$ & $n = 1600$ & $n = 3200$
    \end{tabular}
    \caption{Graph estimated (using Algorithm~\ref{alg:graph_estimation}) from a sample of size $n \in \{200, 400, 800, 1600, 3200\}$ from a GFF on a graph on $p = 200$ vertices generated from a stochastic block model with parameters $a = 0.5$, $b = 0.01$. For estimating the covariance matrix $\Sigma$, we use the shrinkage estimator of \cite{chen2010shrinkage}. We also see similar results when using the sample covariance matrix or the graphical lasso \citep{friedman2008sparse} for estimating $\Sigma$.}
    \label{fig:est-graph-GFF}
\end{figure}

\end{document}